
\documentclass[a4paper]{amsart}

\title[Existence of L\'evy term structure models]{Existence of L\'evy term structure models}
\author{Damir Filipovi\'c \and Stefan Tappe}
\address{Department of Mathematics\\University of Munich\\Theresienstrasse 39, 80333 Munich, Germany }
\thanks{We are grateful to Bohdan Maslowski, Barbara R\"udiger, Josef Teichmann and Jerzy Zabczyk for their helpful remarks and discussions, and an anonymous referee for bringing our attention to the results of van Gaans \cite{Onno,Onno-Levy}.}

\usepackage{amscd}
\usepackage{amsmath}
\usepackage{amssymb}
\usepackage{amsthm}
\usepackage{bbm}
\usepackage{stmaryrd}
\usepackage{eucal}

\newif\ifpdf
\ifx\pdfoutput\undefined
   \pdffalse        
\else
   \pdfoutput=1     
   \pdftrue
\fi

\ifpdf
   \usepackage[pdftex]{graphicx}
   \pdfadjustspacing=1
   \pdfcompresslevel=9
\else
   \usepackage{graphicx}
\fi

\frenchspacing

\numberwithin{equation}{section}
\swapnumbers

\newtheorem{satz}{Satz}[section]

\newtheorem{theorem}[satz]{Theorem}
\newtheorem{proposition}[satz]{Proposition}
\newtheorem{corollary}[satz]{Corollary}
\newtheorem{lemma}[satz]{Lemma}

\newtheorem{definition}[satz]{Definition}

\newtheorem{remark}[satz]{Remark}

\newtheorem{example}[satz]{Example}

\begin{document}

\begin{abstract}
L\'evy driven term structure models have become an important subject in the mathematical finance literature. This paper provides a comprehensive analysis of the L\'evy driven Heath--Jarrow--Morton type term structure equation. This includes a full proof of existence and uniqueness in particular, which seems to have been lacking in the finance literature so far. 
\bigskip

\textbf{Key Words:} forward curve spaces; L\'evy term structure models, stochastic integration in Hilbert spaces; strong, weak and mild solutions of infinite dimensional SDE's. 
\end{abstract}

\keywords{91G80, 60H15}

\maketitle\thispagestyle{empty}

\section{Introduction}\label{sec-intro}

A zero coupon bond with maturity $T$ is a financial asset which pays the holder one unit of cash at $T$. Its price at $t\le T$ can be written as
\[ P(t,T)=\exp\left(-\int_t^T f(t,u)\,du\right) \]
where $f(t,T)$ is the forward rate for date $T$. The classical continuous framework for the evolution of the forward rates goes back to Heath, Jarrow and Morton (HJM) \cite{HJM}. They assume that, under the risk-neutral measure, for every date $T$, the forward rates $f(t,T)$ follow an It\^o process of the form
\begin{align}\label{hjm-forward-rates-wiener}
df(t,T) = \bigg( \sum_{i=1}^n \sigma_i(t,T)\int_t^T\sigma_i(t,s)\,ds \bigg) dt + \sum_{i=1}^n \sigma_i(t,T) dW_t^i,\quad t\in [0,T],
\end{align}
where $W=(W^1,\ldots,W^n)$ is a standard Brownian motion in $\mathbb{R}^n$.
The dynamics \eqref{hjm-forward-rates-wiener} guarantee that the discounted zero coupon bond price processes
\[ e^{-\int_0^t f(s,s)\,ds}P(t,T),\quad t\in [0,T],\]
are local martingales for all maturities $T$. This is the well known condition for the absence of arbitrage in the bond market model.

Empirical studies have revealed that models based on Brownian motion only provide a poor fit to observed market data. We refer to \cite[Chap. 5]{Raible}, where it is argued that empirically observed log returns of zero coupon bonds are not normally distributed, a fact, which has long before been known for the distributions of stock returns. Bj\"ork et al.\ \cite{BKR,BKR0}, Eberlein et al.\ \cite{Eberlein-Raible,Eberlein_O, Eberlein_J, Eberlein_K1, Eberlein_K2,Eberlein_Kluge_Review}  and others (\cite{Shirakawa, Jarrow_Madan, Hyll}) thus proposed to replace the classical Brownian motion $W$ in \eqref{hjm-forward-rates-wiener} by a more general process $X = (X^1,\ldots,X^n)$, also taking into account the occurrence of jumps. If $X$ is a L\'evy process, this leads to
\begin{align}\label{hjm-forward-rates}
df(t,T) = \alpha_{\rm HJM}(t,T)dt + \sum_{i=1}^n \sigma_i(t,T) dX_t^i,\quad t\in [0,T].
\end{align}
The HJM drift in \eqref{hjm-forward-rates-wiener} accordingly is replaced by some appropriate $\alpha_{\rm HJM}(t,T)$, which is determined by $\sigma(t,T)$ and the cumulant generating function of $X$, see (\ref{hjm-drift}) below.

Equation \eqref{hjm-forward-rates} constitutes a generic description of the forward rate process $f(t,T,\omega)$ in terms of a stochastic volatility process $\sigma(t,T,\omega)$. From a financial modelling point of view one would rather consider $\sigma(t,T,\omega)$, and thus $\alpha_{\rm HJM}(t,T,\omega)$, to be a function of the prevailing forward curve $T\mapsto f(t-,T,\omega)=\lim_{s\uparrow t}f(s,T,\omega)$, that is
\[ \sigma(t,T,\omega)=\sigma(t,T,f(t-,\cdot,\omega)),\quad \alpha_{\rm HJM}(t,T,\omega)=\alpha_{\rm HJM}(t,T,f(t,\cdot,\omega)).\]
This makes $f(t,T)$ being a {\emph{solution}} of the stochastic equation
\begin{align}\label{hjm-forward-rates2}
\left\{
\begin{array}{rcl}
df(t,T)  & = & \alpha_{\rm HJM}(t,T,f(t,\cdot))dt + \sum_{i=1}^n \sigma_i(t,T,f(t-,\cdot)) dX_t^i,\quad t\in [0,T], \medskip
\\f(0,T) & = & h_0(T)
\end{array}
\right.
\end{align}
for some given initial forward curve $h_0(T)$.

{\emph{Term structure models}} of the type (\ref{hjm-forward-rates2}) are frequently considered in the literature. The typical assumption is that drift $\alpha_{\rm HJM}$ and volatility $\sigma$ depend on the current state of the short rate, $\sigma(t,T,\omega)=\sigma(t,T,f(t-,t,\omega))$, as in \cite{Jeffrey}, \cite{Ritchken}, \cite{Bhar}, \cite{Inui} and \cite{Hyll} (the latter studies models driven by jump-diffusions). A model, where the volatility $\sigma$ is allowed to depend on a finite number of benchmark forward rates, is considered in \cite{Kwon_2001} and \cite{Kwon_2003}. We emphasize that these papers, whose setups are special cases of our present framework, \textit{assume} that the forward rates $f(t,T)$ evolve according to an equation of the kind (\ref{hjm-forward-rates2}). To our knowledge, there has not been yet an explicit proof for the {\emph{existence}} of a solution to \eqref{hjm-forward-rates2} in the mathematical finance literature. We thus provide such a proof in our paper (Theorem~\ref{thm-main-on-hw-spaces} and Corollary~\ref{cor-main-on-hw-spaces}).

Note that \eqref{hjm-forward-rates2} is an infinite-dimensional and therefore non-trivial problem. In fact, \eqref{hjm-forward-rates2} is not simply a system of infinitely many univariate stochastic equations for $f(t,T)$, $t\in [0,T]$, indexed by $T$. Indeed, these equations are coupled as $\alpha_{\rm HJM}$ and $\sigma$ depend on the entire forward curve $f(t-,\cdot)$, say e.g.\ on the short rate $f(t-,t)$, which is a functional of $f(t-,\cdot)$. To express this functional dependence, one switches best to the alternative parametrization
\begin{align*}
r_t(x) = f(t,t+x),\quad x\ge 0,
\end{align*}
which is due to Musiela \cite{Musiela}. We then write $S_t f(x):=f(x+t)$ for the shift operator $S_t$. Equation \eqref{hjm-forward-rates2} becomes in integrated form
\begin{equation}\label{hjmmild}
r_t(x) = S_t h_0(x) + \int_0^t S_{t-s} \alpha_{\rm HJM}(s,s+x,r_s)\, ds + \sum_{i=1}^n \int_0^t S_{t-s} \sigma_i(s,s+x,r_{s-})\, dX_s^i,
\end{equation} 
where $S_{t-s}$ operates on the functions $x\mapsto\alpha_{\rm HJM}(s,s+x,r_s)$ and $x\mapsto\sigma_i(s,s+x,r_{s-})$. Hence, in the spirit of Da~Prato and Zabczyk \cite{Da_Prato}, the process $r_t$ is a so called {\emph{mild solution}} of the stochastic differential equation
\begin{align}\label{hjmX}
\left\{
\begin{array}{rcl}
dr_t & = & \left(\frac{d}{dx} r_t + \alpha_{\rm HJM}(t,r_t)\right)dt + \sum_{i=1}^n \sigma_i(t,r_{t-})\,dX^i_t, \medskip 
\\r_0 & = & h_0
\end{array}
\right.
\end{align}
in some appropriate Hilbert space $H$ of forward curves, where $\frac{d}{dx}$ becomes the generator of the strongly continuous semigroup of shifts $S_t$. Note the slight abuse of notation $\alpha_{\rm HJM}(t,t+\cdot,r) \rightsquigarrow \alpha_{\rm HJM}(t,r)$ and $\sigma_i(t,t+\cdot,r) \rightsquigarrow \sigma_i(t,r)$. In the sequel, we are therefore concerned with the L\'evy HJMM (Heath--Jarrow--Morton--Musiela) equation \eqref{hjmX} in various choices of the state space $H$.

Several authors have dealt with the existence issue for \eqref{hjmX} for the Brownian motion case $X=W$. Bj\"ork and Svensson \cite{Bj_Sv} chose the state space $H_{\beta,\gamma}$ small enough such that $\frac{d}{dx}:H_{\beta,\gamma} \to H_{\beta,\gamma}$ becomes a bounded linear operator. In this case, the methods from finite dimension essentially carry over to \eqref{hjmX}. It turns out, however, that the Bj\"ork--Svensson space $H_{\beta,\gamma}$ is too small and does not contain some important classical term structure models (see \cite{Filipovic}). In \cite{fillnm}, we thus analyzed and solved \eqref{hjmX} for $X=W$ on a larger space $H_w$, where $\frac{d}{dx}$ becomes unbounded. 

In this paper we provide the existence proof for \eqref{hjmX} for the L\'evy case. We proceed as follows. Using an existence result for general Hilbert space valued stochastic differential equations from the appendix, we first show existence for \eqref{hjmX} in the Bj\"ork--Svensson space $H_{\beta,\gamma}$. However, often it turns out that $H_{\beta,\gamma}$ is too small to assert that $\alpha_{\rm HJM}$ lies in $H_{\beta,\gamma}$, even for the very simple case where $\sigma$ is constant and the driver $X$ is a compound Poisson process (Example~\ref{expoiss}). Afterwards, we thus consider \eqref{hjmX} in the larger state space $H_w$ from \cite{fillnm} where $\frac{d}{dx}$ becomes unbounded.

Term structure models based on {\emph{infinite dimensional}} driving processes $X$ are discussed e.g. in \cite{Zabcyk_F2} and \cite{Schmidt} for the L\'evy case. Again, in these papers it is typically assumed that the forward curve evolution satisfies a stochastic differential equation, but the authors do not treat existence and uniqueness of solutions.

The remainder of the paper is organized as follows. In Section \ref{sec-HJM-drift} we introduce some notation and specify the HJM drift $\alpha_{\rm HJM}$, which ensures that the bond market is free of arbitrage. In Section \ref{sec-strong} we treat the existence of strong solutions to \eqref{hjmX} on the Bj\"ork--Svensson space $H_{\beta,\gamma}$. Afterwards, Section \ref{sec-mild} is devoted to the existence of mild and weak solutions to \eqref{hjmX} on the larger space $H_w$ where $\frac{d}{dx}$ becomes unbounded. Section \ref{sec-conclusion} concludes.

For our results of Section \ref{sec-strong} and Section \ref{sec-mild} we apply an existence result for general Hilbert space valued stochastic differential equations, which is derived in the appendix. The ground for this result, Theorem \ref{thm-sde-solution}, is prepared by two works of van Gaans \cite{Onno,Onno-Levy}. In addition to his result \cite[Thm. 4.1]{Onno-Levy} we prove that the mild solution to \eqref{hjmX} has a c\`adl\`ag modification, and that there exists a unique weak solution.

The c\`adl\`ag property of the solution is an important feature for financial applications. Indeed, general arbitrage theory \cite{ds94} requires that the basic financial instruments, here the implied zero coupon bond prices $P(t,T)=\exp(-\int_0^{T-t} r_t(x)\,dx)$, are real semimartingales and therefore have c\`adl\`ag paths. This essentially requires c\`adl\`ag paths of the weak solution $(r_t)$, which is satisfied in our framework.

As it turns out, the stochastic integral of van Gaans \cite{Onno-Levy} is not consistent with the usual It\^o-integral, which is used for financial modelling. Therefore, after giving an overview and the required notation in Appendix \ref{app-overview}, we show in Appendix \ref{app-integration} that the stochastic integral of van Gaans always has a c\`adl\`ag modification and analyze when it coincides with the It\^o-integral. Then, in Appendix \ref{app-SDE}, we prove Theorem \ref{thm-sde-solution}, the existence and uniqueness result for Hilbert space valued stochastic equations. At the end of Appendix \ref{app-SDE} we give an overview of related literature.

\section{The HJM drift condition}\label{sec-HJM-drift}

Throughout this text, $X^1,\ldots,X^n$ denote independent real-valued L\'evy processes on a filtered probability space $(\Omega,\mathcal{F},(\mathcal{F}_t)_{t \geq 0},\mathbb{P})$ satisfying the usual conditions.

Let $H$ be a separable Hilbert space representing the space of forward curves and let
$\sigma_1,\ldots,\sigma_n : \mathbb{R}_+ \times H \rightarrow H$ be the volatilities.
Recall that a L\'evy term structure model of the form (\ref{hjmX}) is free of arbitrage if the probability measure $\mathbb{P}$ is a local martingale measure, that is all discounted bond prices are local martingales.

In order to provide a condition which ensures that $\mathbb{P}$ is a local martingale measure, we assume that there are compact intervals $[a_1,b_1],\ldots,[a_n,b_n]$ having zero as an inner point such that the L\'evy measures $F_1,\ldots,F_n$ of $X^1,\ldots,X^n$, respectively, satisfy for $i = 1,\ldots,n$
\begin{align}\label{canonical-cond0}
\int_{|x| > 1} e^{zx}F_i(dx) < \infty, \quad z \in [a_i,b_i].
\end{align}
Condition (\ref{canonical-cond0}) ensures that the cumulant generating functions
\begin{align}\label{cumulant}
\Psi_i(z):=\ln\mathbb{E}[\exp(z X_1^i)], \quad i = 1,\ldots,n 
\end{align}
exist on $[a_i,b_i]$ and that they are of class $C^{\infty}$ (see \cite[Lemma 26.4]{Sato}). Moreover, the L\'evy processes $X^i$ possess moments of arbitrary order. Let $[c_i,d_i] \subset (a_i,b_i)$ be further compact intervals having zero as an inner point.

For any continuous function $h : \mathbb{R}_+ \rightarrow \mathbb{R}$ we define $Th : \mathbb{R}_+ \rightarrow \mathbb{R}$ as
\begin{align}\label{def-of-T}
Th(x) := \int_0^x h(\eta)d \eta.
\end{align}
For $i=1,\ldots,n$ denote
\begin{align*}
A_H^{\Psi_i} := \{ h \in H : -Th(\mathbb{R}_+) \subset [c_i,d_i] \}.
\end{align*}
Provided $\sigma_i(\mathbb{R}_+ \times H) \subset A_H^{\Psi_i}$ for $i = 1,\ldots,n$, the \textit{HJM drift}
\begin{equation}\label{hjm-drift}
\begin{aligned}
\alpha_{\rm HJM}(t,r)(x) &= \sum_{i=1}^n \frac{d}{dx} \Psi_i \left( -\int_0^x \sigma_i(t,r)(\eta) d \eta \right) 
\\  &= - \sum_{i=1}^n \sigma_i(t,r)(x) \Psi_i' \left( -\int_0^x \sigma_i(t,r)(\eta) d \eta \right)
\end{aligned} 
\end{equation}
is well defined pointwise for all $x$. The HJM drift condition \eqref{hjm-drift} implies that $\mathbb{P}$ is a local martingale measure. It is derived in \cite[Sec. 2.1]{Eberlein_O} for the present L\'evy case, using the results of the more general setup in \cite{BKR}. For an analogous drift condition in the infinite dimensional L\'evy setting, see \cite{Zabcyk_F2}.


The HJM drift specification \eqref{hjm-drift} causes some problems for an immediate application of Theorem \ref{thm-sde-solution}. First of all, we have to ensure that $\alpha_{\rm HJM}(t,r) \in H$ for all $(t,r) \in \mathbb{R}_+ \times H$. Furthermore, we have to establish for an application of Theorem \ref{thm-sde-solution} that for Lipschitz functions $\sigma_1,\ldots,\sigma_n$ the drift $\alpha_{\rm HJM}$ is again a Lipschitz function.

These demandings emphasize that we have to be careful about the choice of the space $H$ of forward curves. Another desirable feature of $H$ is that for every $x \in \mathbb{R}_+$ the point evaluation $h \mapsto h(x) : H \rightarrow \mathbb{R}$ is a continuous linear functional. Because then the variation of constants formula (\ref{hjmmild}) is satisfied for all $x \in \mathbb{R}_+$, whenever $(r_t)$ is a mild solution of (\ref{hjmX}).

In the upcoming Section \ref{sec-strong} we deal with the existence of strong solutions to (\ref{hjmX}), and Section \ref{sec-mild} is devoted to the existence of mild and weak solutions to (\ref{hjmX}).

\section{Forward curve evolutions as strong solutions of infinite dimensional stochastic differential equations}\label{sec-strong}

In this section, where we deal with the existence of strong solutions to (\ref{hjmX}), we consider the spaces $H_{\beta,\gamma}$ of forward curves, which have been used by Bj\"ork and Svensson in \cite{Bj_Sv}.

We fix real numbers $\beta > 1$ and $\gamma > 0$. Let $H_{\beta,\gamma}$ be the linear space of all $h \in C^{\infty}(\mathbb{R}_+;\mathbb{R})$ satisfying
\begin{align*}
\sum_{n=0}^{\infty} \left( \frac{1}{\beta} \right)^n \int_0^{\infty} \left(\frac{d^n h(x)}{dx^n}\right)^2 e^{-\gamma x} dx < \infty,
\end{align*}
We define the inner product
\begin{align*}
\langle g,h \rangle_{\beta,\gamma} := \sum_{n=0}^{\infty} \left( \frac{1}{\beta} \right)^n \int_0^{\infty} \left(\frac{d^n g(x)}{dx^n}\right) \left(\frac{d^n h(x)}{dx^n}\right) e^{-\gamma x} dx
\end{align*}
and denote the corresponding norm by $\| \cdot \|_{\beta,\gamma}$.

\begin{proposition}
The space $(H_{\beta,\gamma},\langle \cdot,\cdot \rangle_{\beta,\gamma})$ is a separable Hilbert space and for each $x \in \mathbb{R}_+$, the point evaluation $h \mapsto h(x) : H_{\beta,\gamma} \rightarrow \mathbb{R}$ is a continuous linear functional.
\end{proposition}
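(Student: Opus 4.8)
The plan is to establish the two assertions — that $(H_{\beta,\gamma},\langle\cdot,\cdot\rangle_{\beta,\gamma})$ is a separable Hilbert space, and that point evaluations are continuous — essentially in tandem, since a Sobolev-type embedding estimate will be needed for both completeness and continuity of $h\mapsto h(x)$. First I would observe that $H_{\beta,\gamma}$ is an inner product space: positive definiteness of $\langle\cdot,\cdot\rangle_{\beta,\gamma}$ follows because $\langle h,h\rangle_{\beta,\gamma}=0$ forces $h\equiv 0$ via the $n=0$ term, using continuity of $h$ and positivity of $e^{-\gamma x}$. The real work is completeness. Given a Cauchy sequence $(h_k)$ in $H_{\beta,\gamma}$, each derivative sequence $(d^n h_k/dx^n)_k$ is Cauchy in the weighted space $L^2(\mathbb{R}_+, e^{-\gamma x}dx)$, hence converges there to some $g_n$; the key is to show $g_0\in C^\infty$, that $g_n = d^n g_0/dx^n$, and that $h_k\to g_0$ in the $H_{\beta,\gamma}$-norm.

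For this the central lemma I would prove is a pointwise bound of the form: for every $R>0$ there is a constant $C=C(R,\beta,\gamma)$ such that $|h(x)|\le C\,\|h\|_{\beta,\gamma}$ for all $x\in[0,R]$ and all $h\in H_{\beta,\gamma}$, and more generally $\sup_{x\in[0,R]}|d^n h/dx^n(x)|\le C_n\|h\|_{\beta,\gamma}$. This is obtained by writing, for $x$ in a fixed compact interval, $h(x)^2 = h(x_0)^2 + \int_{x_0}^x 2h(\eta)h'(\eta)\,d\eta$ — or more cleanly, using that on a bounded interval the unweighted $L^2$ norm of $h$ and $h'$ is controlled by the weighted one (since $e^{-\gamma x}$ is bounded below on compacts), and then invoking the standard one-dimensional Sobolev inequality $\|h\|_{L^\infty([0,R])}\lesssim \|h\|_{H^1([0,R])}$. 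Iterating gives control of all derivatives on compacts in terms of $\|h\|_{\beta,\gamma}$. Armed with this, $H_{\beta,\gamma}$-convergence of $(h_k)$ implies local uniform convergence of $(h_k)$ and each $(d^n h_k/dx^n)$; a standard argument then identifies the $L^2$-limits $g_n$ with the classical derivatives of the locally uniform limit $g_0$, so $g_0\in C^\infty(\mathbb{R}_+)$ with $d^n g_0/dx^n = g_n \in L^2(e^{-\gamma x}dx)$. Finiteness of $\sum_n \beta^{-n}\|g_n\|^2$ and $\|h_k-g_0\|_{\beta,\gamma}\to 0$ follow from a routine Fatou / dominated-convergence argument applied to the defining series, using that $(h_k)$ is Cauchy hence bounded. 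This gives completeness. Continuity of the point evaluation $h\mapsto h(x)$ is then exactly the case $n=0$ of the pointwise bound above (for any fixed $x$, take $R=x$), so $h\mapsto h(x)$ is a bounded linear functional.

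Separability I would get by exhibiting $H_{\beta,\gamma}$ as (isometrically isomorphic to) a closed subspace of the separable Hilbert space $\bigoplus_{n\ge 0} L^2(\mathbb{R}_+, \beta^{-n}e^{-\gamma x}dx)$ via the map $h\mapsto (d^n h/dx^n)_{n\ge 0}$; a countable dense subset of a separable space restricts to give one on any subspace, so it suffices to note that a countable $\ell^2$-direct sum of separable spaces is separable, and that the image of $H_{\beta,\gamma}$ is closed (which is just the completeness already proved). Alternatively one can argue directly that polynomials, or smooth functions with suitable decay, are dense, but the subspace argument is cleanest and avoids an explicit density computation.

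The main obstacle is the completeness proof, specifically the step of upgrading $L^2$-convergence of the derivative sequences to the statement that the limit is genuinely $C^\infty$ with the $g_n$ as its derivatives — this is where the Sobolev-type pointwise estimate is indispensable and must be set up carefully, since the weight $e^{-\gamma x}$ degenerates as $x\to\infty$ and the estimate is only uniform on compact $x$-intervals (which is, however, all that is needed, both for identifying classical derivatives and for continuity of point evaluation at a fixed $x$).
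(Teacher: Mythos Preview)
Your argument is sound: the Sobolev-type local pointwise bound via control of $h$ and $h'$ in unweighted $L^2$ on compacts, the identification of the $L^2$-limits $g_n$ as classical derivatives through local uniform convergence, and the separability via the isometric embedding into $\bigoplus_{n\ge 0} L^2(\mathbb{R}_+,\beta^{-n}e^{-\gamma x}\,dx)$ are all correct and fit together as you describe. The only minor remark is that for the Fatou step you should be explicit that the Cauchy property gives, for each $\varepsilon>0$, an $N$ with $\sum_n \beta^{-n}\|g_n^{(k)}-g_n^{(m)}\|^2<\varepsilon$ for $k,m\ge N$, and then let $m\to\infty$ termwise; this yields both $g_0\in H_{\beta,\gamma}$ and $\|h_k-g_0\|_{\beta,\gamma}\to 0$ in one stroke.

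The paper itself takes a different and much shorter route: it simply cites \cite[Prop.~4.2]{Bj_Sv}, where Bj\"ork and Svensson established these properties of $H_{\beta,\gamma}$ in the original paper introducing the space. So what you have written is not the paper's proof but rather (a sketch of) the proof that lies behind the citation. Your approach has the advantage of being self-contained and of making transparent exactly where the weight $e^{-\gamma x}$ and the factor $\beta^{-n}$ enter; the paper's approach keeps the exposition focused on the HJM application and avoids reproducing known functional-analytic groundwork.
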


\begin{proof}
This is a consequence of \cite[Prop. 4.2]{Bj_Sv}.
\end{proof}

The fact that each point evaluation is a continuous linear functional ensures that forward curves $(r_t)$ solving (\ref{hjmX}) satisfy the variation of constants formula (\ref{hjmmild}).


\begin{proposition}\label{A-bounded-operator}
We have $\frac{d}{dx} \in \mathcal{L}(H_{\beta,\gamma})$, i.e. $\frac{d}{dx}$ is a bounded linear operator on $H_{\beta,\gamma}$.
\end{proposition}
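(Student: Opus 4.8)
The plan is to show that differentiation maps $H_{\beta,\gamma}$ into itself continuously by a direct estimate on the defining series norm. The key observation is that the norm of $H_{\beta,\gamma}$ is built from a weighted $\ell^2$-sum over derivative orders with geometric weights $(1/\beta)^n$, and applying $\frac{d}{dx}$ simply shifts the derivative index by one. Concretely, for $h \in H_{\beta,\gamma}$ I would write
\begin{align*}
\left\| \frac{d}{dx} h \right\|_{\beta,\gamma}^2 = \sum_{n=0}^{\infty} \left( \frac{1}{\beta} \right)^n \int_0^{\infty} \left( \frac{d^{n+1} h(x)}{dx^{n+1}} \right)^2 e^{-\gamma x}\, dx = \sum_{m=1}^{\infty} \left( \frac{1}{\beta} \right)^{m-1} \int_0^{\infty} \left( \frac{d^m h(x)}{dx^m} \right)^2 e^{-\gamma x}\, dx,
\end{align*}
after the index substitution $m = n+1$. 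Pulling out one factor of $\beta$ and adding back the (nonnegative) $m=0$ term, this is bounded by $\beta \sum_{m=0}^{\infty} (1/\beta)^m \int_0^\infty (d^m h(x)/dx^m)^2 e^{-\gamma x}\, dx = \beta \| h \|_{\beta,\gamma}^2$. Hence $\| \frac{d}{dx} h \|_{\beta,\gamma} \le \sqrt{\beta}\, \| h \|_{\beta,\gamma}$, which shows both that $\frac{d}{dx} h$ again lies in $H_{\beta,\gamma}$ (the series converges) and that $\frac{d}{dx}$ is bounded with operator norm at most $\sqrt{\beta}$.

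The only genuine subtlety — and the step I would be most careful about — is the well-definedness issue hidden in the first equality: I must make sure that $\frac{d}{dx} h$ is still a $C^\infty$ function and that its $n$-th derivative is precisely the $(n+1)$-st derivative of $h$, so that the norm of $\frac{d}{dx}h$ is literally the tail of the series for $h$. This is immediate since $h \in C^\infty(\mathbb{R}_+;\mathbb{R})$ by definition of the space, so $\frac{d}{dx} h \in C^\infty(\mathbb{R}_+;\mathbb{R})$ as well, and $\frac{d^n}{dx^n}\big(\frac{d}{dx}h\big) = \frac{d^{n+1}h}{dx^{n+1}}$ pointwise. Linearity of $\frac{d}{dx}$ is clear, so together with the norm bound just derived the operator is bounded linear, i.e. $\frac{d}{dx} \in \mathcal{L}(H_{\beta,\gamma})$. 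No appeal to semigroup theory or to the point-evaluation property is needed here; the estimate is purely a manipulation of the series defining the norm, and the geometric decay of the weights is exactly what absorbs the index shift.
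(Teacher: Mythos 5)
Your proof is correct: the index shift $m=n+1$ turns the series for $\bigl\| \frac{d}{dx}h \bigr\|_{\beta,\gamma}^2$ into $\beta$ times the tail of the series for $\|h\|_{\beta,\gamma}^2$, giving the bound $\bigl\| \frac{d}{dx}h \bigr\|_{\beta,\gamma} \le \sqrt{\beta}\,\|h\|_{\beta,\gamma}$, and the well-definedness point you flag is handled correctly since $H_{\beta,\gamma}\subset C^\infty(\mathbb{R}_+;\mathbb{R})$ by definition. The paper simply cites Bj\"ork--Svensson, Prop.~4.2, for this fact, and your computation is precisely the standard argument behind that citation, so you have in effect supplied the self-contained proof the paper omits.
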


\begin{proof}
The assertion is a consequence of \cite[Prop. 4.2]{Bj_Sv}.
\end{proof}


\begin{theorem}\label{thm-main-on-bs-spaces}
Let $\sigma_i : \mathbb{R}_+ \times H_{\beta,\gamma} \rightarrow H_{\beta,\gamma}$ be continuous and satisfying $\sigma_i(\mathbb{R}_+ \times H_{\beta,\gamma}) \subset A_{H_{\beta,\gamma}}^{\Psi_i}$ for $i = 1,\ldots,n$. Assume that $\alpha_{\rm HJM}(t,r) \in H_{\beta,\gamma}$ for all $(t,r) \in \mathbb{R}_+ \times H_{\beta,\gamma}$. Furthermore, assume that $\alpha_{\rm HJM}(t,r) : \mathbb{R}_+ \times H_{\beta,\gamma} \rightarrow H_{\beta,\gamma}$ is continuous and that
there is a constant $L \geq 0$ such that for all $t \in \mathbb{R}_+$ and $h_1,h_2 \in H_{\beta,\gamma}$ we have
\begin{align*}
\| \alpha_{\rm HJM}(t,h_1) - \alpha_{\rm HJM}(t,h_2) \|_{\beta,\gamma} &\leq L \| h_1 - h_2 \|_{\beta,\gamma},
\\ \| \sigma_i(t,h_1) - \sigma_i(t,h_2) \|_{\beta,\gamma} &\leq L \| h_1 - h_2 \|_{\beta,\gamma}, \quad i=1,\ldots,n.
\end{align*}
Then, for each $h_0 \in H_{\beta,\gamma}$, there exists a unique strong adapted c\`adl\`ag solution $(r_t)_{t \geq 0}$ to (\ref{hjmX}) with $r_0 = h_0$ satisfying
\begin{align}\label{sup-finite-in-BS-spaces}
\mathbb{E} \bigg[ \sup_{t \in [0,T]} \| r_t \|_{\beta,\gamma}^2 \bigg] < \infty \quad \text{for all $T > 0$.}
\end{align}
\end{theorem}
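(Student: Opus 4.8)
The plan is to reduce Theorem~\ref{thm-main-on-bs-spaces} to the abstract existence result, Theorem~\ref{thm-sde-solution}, which is stated in the appendix for general Hilbert space valued stochastic equations of the form $dr_t = (Ar_t + \alpha(t,r_t))dt + \sum_i \sigma_i(t,r_{t-})dX_t^i$. The key observation that makes this almost immediate is Proposition~\ref{A-bounded-operator}: on the Bj\"ork--Svensson space $H_{\beta,\gamma}$ the operator $A = \frac{d}{dx}$ is \emph{bounded}, hence it generates the uniformly continuous semigroup $S_t = e^{tA}$, and in this situation the notions of strong, mild and weak solution all coincide. So the work is entirely a matter of checking hypotheses.

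First I would verify that the coefficients satisfy the standing assumptions of Theorem~\ref{thm-sde-solution}. By hypothesis $\alpha_{\rm HJM}(t,\cdot)$ maps $H_{\beta,\gamma}$ into $H_{\beta,\gamma}$, is continuous in $(t,r)$, and is globally Lipschitz in $r$ uniformly in $t$; likewise each $\sigma_i(t,\cdot):H_{\beta,\gamma}\to H_{\beta,\gamma}$ is continuous and globally Lipschitz in $r$ uniformly in $t$. A global Lipschitz condition together with continuity in $t$ yields the linear growth bound $\|\alpha_{\rm HJM}(t,r)\|_{\beta,\gamma}+\sum_i\|\sigma_i(t,r)\|_{\beta,\gamma}\le K(1+\|r\|_{\beta,\gamma})$ on compact time intervals, by writing $\alpha_{\rm HJM}(t,r)=\alpha_{\rm HJM}(t,r)-\alpha_{\rm HJM}(t,0)+\alpha_{\rm HJM}(t,0)$ and bounding $\sup_{t\in[0,T]}\|\alpha_{\rm HJM}(t,0)\|_{\beta,\gamma}<\infty$ by continuity, and similarly for $\sigma_i$. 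Since $A\in\mathcal{L}(H_{\beta,\gamma})$, the drift $r\mapsto Ar+\alpha_{\rm HJM}(t,r)$ is again Lipschitz and of linear growth. The driving L\'evy processes $X^1,\dots,X^n$ satisfy the exponential moment condition~\eqref{canonical-cond0}, hence possess moments of all orders, in particular finite second moments, which is what Theorem~\ref{thm-sde-solution} requires of the noise.

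With the hypotheses in place, Theorem~\ref{thm-sde-solution} applied with $H = H_{\beta,\gamma}$, $A = \frac{d}{dx}$, drift $\alpha_{\rm HJM}$ and volatilities $\sigma_i$ delivers, for every deterministic initial value $h_0\in H_{\beta,\gamma}$, a unique adapted c\`adl\`ag mild solution $(r_t)_{t\ge0}$ to~\eqref{hjmX} with the second-moment bound~\eqref{sup-finite-in-BS-spaces}. It remains to upgrade ``mild'' to ``strong.'' Because $A$ is bounded, $S_t h_0 = e^{tA}h_0$ is differentiable in $t$ with $\frac{d}{dt}S_t h_0 = A S_t h_0$, and the map $s\mapsto S_{t-s}$ is continuously differentiable in operator norm; applying the stochastic Fubini theorem (valid here since $A$ is a bounded operator, so no domain issues arise) to the two Lebesgue--Bochner and stochastic convolution integrals in~\eqref{hjmmild} one differentiates under the integral sign and recovers that $(r_t)$ satisfies the integral equation $r_t = h_0 + \int_0^t(Ar_s+\alpha_{\rm HJM}(s,r_s))ds + \sum_i\int_0^t\sigma_i(s,r_{s-})dX_s^i$, i.e.\ $(r_t)$ is a strong solution. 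Uniqueness in the strong sense follows a fortiori from uniqueness of the mild solution.

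The only genuine subtlety—and hence the step I would be most careful about—is the passage from the mild to the strong formulation and, in particular, making sure that the stochastic-Fubini/differentiation argument is applied in precisely the setting the appendix's integration theory supports; but since $A\in\mathcal{L}(H_{\beta,\gamma})$ this is standard and no analytic obstruction remains. (One should also note that the condition $\sigma_i(\mathbb{R}_+\times H_{\beta,\gamma})\subset A_{H_{\beta,\gamma}}^{\Psi_i}$ is used only to guarantee that $\alpha_{\rm HJM}$ is well defined via~\eqref{hjm-drift} and that $\mathbb{P}$ is a local martingale measure—it is not needed for existence per se, which is driven entirely by the Lipschitz and growth hypotheses on $\alpha_{\rm HJM}$ and $\sigma_i$.)
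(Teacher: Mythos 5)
Your proposal is correct and follows essentially the same route as the paper: the paper's proof is a one-line reduction, via Proposition~\ref{A-bounded-operator}, to Corollary~\ref{cor-solution-of-SDE} (the bounded-generator case of Theorem~\ref{thm-sde-solution}), which is exactly the reduction you carry out, including the observation that the hypotheses of the theorem supply the needed Lipschitz and continuity conditions and that the exponential moment condition~(\ref{canonical-cond0}) gives the required square-integrability of the drivers. The only cosmetic difference is in the mild-to-strong upgrade: you argue via stochastic Fubini applied to the mild formulation, whereas Corollary~\ref{cor-solution-of-SDE} obtains the strong form from the weak formulation, using that $\mathcal{D}(A^*)=H_{\beta,\gamma}$ when $A$ is bounded; both arguments are standard and equivalent in this setting.
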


\begin{proof}
Taking into account Proposition \ref{A-bounded-operator}, the result is a consequence of Corollary \ref{cor-solution-of-SDE}.
\end{proof}

Unfortunately, Theorem \ref{thm-main-on-bs-spaces} has some shortcomings, namely it is demanded that the drift term $\alpha_{\rm HJM}$ according to the HJM drift condition (\ref{hjm-drift}) maps again into the space $H_{\beta,\gamma}$. The following simple counter example shows that this condition may be violated.

\begin{example}\label{expoiss}
Let $\sigma = -1$ and $X$ be a compound Poisson process with intensity $\lambda = 1$ and jump size distribution $N(0,1)$. Notice that the compound Poisson process satisfies the exponential moments condition (\ref{canonical-cond0}) for all $z \in \mathbb{R}$, because its 
L\'evy measure is given by
\begin{align*}
F(dx) = \frac{1}{\sqrt{2 \pi}} e^{- \frac{x^2}{2}} dx.
\end{align*}
But we have $\alpha_{\rm HJM} \notin H_{\beta,\gamma}$, because
\begin{align*}
&\int_0^{\infty} \alpha_{\rm HJM}(x)^2 e^{-\gamma x} dx = \int_0^{\infty} \left( \frac{d}{dx} \Psi(x) \right)^2 e^{-\gamma x} dx
\\ &= \int_0^{\infty} \left( \frac{d}{dx} \left( e^{\frac{x^2}{2}} - 1 \right) \right)^2 e^{-\gamma x} dx = \int_0^{\infty} x^2 e^{x^2 - \gamma x} dx = \infty.
\end{align*}
\end{example}

The phenomena that the drift $\alpha_{\rm HJM}$ may be located outside the space of forward curves $H_{\beta,\gamma}$ has to do with the fact that the space $H_{\beta,\gamma}$ is a very small space in a sense, in particular, every function must necessarily be real-analytic (see \cite[Prop. 4.2]{Bj_Sv}). 

The small size of this space arises from the requirement that $\frac{d}{dx}$ should be a bounded operator, because we deal with the existence of \textit{strong} solutions. When dealing with \textit{mild} and \textit{weak} solutions in the next Section \ref{sec-mild}, problems of this kind will not occur.
 
Nevertheless, for certain types of term structure models, we can apply Theorem \ref{thm-main-on-bs-spaces}. For this purpose, we proceed with a lemma. For a given real-analytic function $h : \mathbb{R}_+ \rightarrow \mathbb{R}$ it is, in general, difficult to decide whether $h$ belongs to $H_{\beta,\gamma}$ or not. For the following functions this can be provided.

\begin{lemma}\label{lemma-in-space}
Every polynomial $p$ belongs to $H_{\beta,\gamma}$, and for $\delta \in \mathbb{R}$ satisfying $\delta^2 < \beta$ and $\delta < \frac{\gamma}{2}$, the function $h(x) = e^{\delta x}$ belongs to $H_{\beta,\gamma}$.
\end{lemma}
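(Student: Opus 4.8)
The plan is to verify the defining series directly in each case, using that the relevant norms reduce to a sum over derivatives of explicitly computable Gaussian-type integrals.

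First I would treat the polynomial case. If $p$ has degree $d$, then $\frac{d^n p}{dx^n} \equiv 0$ for all $n > d$, so the series defining $\|p\|_{\beta,\gamma}^2$ has only finitely many nonzero terms. Each remaining term is of the form $\beta^{-n}\int_0^\infty q_n(x)^2 e^{-\gamma x}\,dx$ where $q_n = \frac{d^n p}{dx^n}$ is a polynomial; since $\int_0^\infty x^k e^{-\gamma x}\,dx = k!/\gamma^{k+1} < \infty$ for every $k \ge 0$, each such integral is finite. A finite sum of finite numbers is finite, so $p \in H_{\beta,\gamma}$.

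Next I would treat $h(x) = e^{\delta x}$. Here $\frac{d^n h}{dx^n}(x) = \delta^n e^{\delta x}$, so
\begin{align*}
\|h\|_{\beta,\gamma}^2 = \sum_{n=0}^\infty \Big(\frac{1}{\beta}\Big)^n \int_0^\infty \delta^{2n} e^{2\delta x} e^{-\gamma x}\,dx = \Big(\sum_{n=0}^\infty \Big(\frac{\delta^2}{\beta}\Big)^n\Big)\Big(\int_0^\infty e^{(2\delta - \gamma)x}\,dx\Big).
\end{align*}
The geometric series converges precisely because $\delta^2 < \beta$, giving the finite value $\beta/(\beta - \delta^2)$; the integral converges precisely because $2\delta - \gamma < 0$, i.e.\ $\delta < \gamma/2$, giving $1/(\gamma - 2\delta)$. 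Hence the product is finite and $h \in H_{\beta,\gamma}$. (One should also note $h \in C^\infty(\mathbb{R}_+;\mathbb{R})$, which is immediate.)

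I do not expect a genuine obstacle here: both computations are elementary, the only subtlety being to keep track of which of the two hypotheses $\delta^2 < \beta$ and $\delta < \gamma/2$ controls which factor. The mild point of care is justifying the interchange of sum and integral in the exponential case, which is immediate by Tonelli since all integrands are nonnegative.
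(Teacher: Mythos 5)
Your proof is correct and follows essentially the same route as the paper: the exponential case is the identical direct computation yielding $\frac{\beta}{(\beta-\delta^2)(\gamma-2\delta)}$, and the polynomial case (which the paper dismisses as clear) is handled by the same obvious finiteness argument, just spelled out.
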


\begin{proof}
The first statement is clear. For $h(x) = e^{\delta x}$ we obtain
\begin{align*}
&\sum_{n=0}^{\infty} \left( \frac{1}{\beta} \right)^n \int_0^{\infty} \left(\frac{d^n h(x)}{dx^n}\right)^2 e^{-\gamma x} dx = \sum_{n=0}^{\infty} \left( \frac{1}{\beta} \right)^n \int_0^{\infty} \left( \delta^n e^{\delta x} \right)^2 e^{-\gamma x} dx
\\ &= \sum_{n=0}^{\infty} \left( \frac{\delta^2}{\beta} \right)^n \int_0^{\infty} e^{-(\gamma - 2\delta)x} dx = \frac{1}{1 - \frac{\delta^2}{\beta}} \cdot \frac{1}{\gamma - 2\delta} = \frac{\beta}{(\beta - \delta^2)(\gamma - 2\delta)},
\end{align*}
whence $h \in H_{\beta,\gamma}$.
\end{proof}

Let $n=3$, that is we have three independent driving processes. We denote by $X^1,X^2$ two standard Wiener processes, and $X^3$ is a Poisson process with intensity $\lambda > 0$.
We specify the volatilities as
\begin{align*}
\sigma_1(r)(x) = \varphi_1(r) p(x), \text{ } \sigma_2(r)(x) = \varphi_2(r) e^{\delta x} \text{ and } \sigma_3(r)(x) = -\eta,
\end{align*}
where $p$ is a polynomial, $\delta, \eta \in \mathbb{R}$ satisfy $4\delta^2 < \beta$, $\delta < \frac{\gamma}{4}$ and $\eta^2 < \beta$, $\eta < \frac{\gamma}{2}$, and where $\varphi_i : H_{\gamma,\beta} \rightarrow \mathbb{R}$ for $i=1,2$. Note that $\sigma_i(H_{\beta,\gamma}) \subset H_{\beta,\gamma}$ for $i=1,2,3$ by Lemma \ref{lemma-in-space}. The drift according to the HJM drift condition (\ref{hjm-drift}) is given by
\begin{align*}
\alpha_{\rm HJM}(r)(x) = \frac{d}{dx} \left[ \frac{1}{2} \varphi_1(r)^2 q(x)^2 + \frac{1}{2} \varphi_2(r)^2 \left( \frac{e^{\delta x} - 1}{\delta} \right)^2 + \lambda \left( e^{\eta x} - 1 \right) \right],
\end{align*}
where $q(x) = \int_0^x p(\eta) d\eta$ is again a polynomial. From Lemma \ref{lemma-in-space} and Proposition \ref{A-bounded-operator} we infer $\alpha_{\rm HJM}(H_{\beta,\gamma}) \subset H_{\beta,\gamma}$.

\begin{proposition}\label{prop-ex-1}
Assume there is a constant $L \geq 0$ such that for all $h_1,h_2 \in H_{\beta,\gamma}$ we have
\begin{align*}
|\varphi_i(h_1) - \varphi_i(h_2)| &\leq L \| h_1 - h_1 \|_{\beta,\gamma}, \quad i=1,2
\\ |\varphi_i(h_1)^2 - \varphi_i(h_2)^2| &\leq L \| h_1 - h_1 \|_{\beta,\gamma}, \quad i=1,2.
\end{align*}
Then, for each $h_0 \in H_{\beta,\gamma}$, there exists a unique strong adapted c\`adl\`ag solution $(r_t)_{t \geq 0}$ to (\ref{hjmX}) with $r_0 = h_0$ satisfying (\ref{sup-finite-in-BS-spaces}).
\end{proposition}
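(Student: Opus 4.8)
The plan is to check that the data $(\sigma_1,\sigma_2,\sigma_3,\alpha_{\rm HJM})$ meet every hypothesis of Theorem~\ref{thm-main-on-bs-spaces} and then simply invoke it. The membership statements $\sigma_i(H_{\beta,\gamma}) \subset H_{\beta,\gamma}$ and $\alpha_{\rm HJM}(H_{\beta,\gamma}) \subset H_{\beta,\gamma}$ have already been recorded: the first follows from Lemma~\ref{lemma-in-space}, since $p$ is a polynomial and $e^{\delta x} \in H_{\beta,\gamma}$ because $4\delta^2 < \beta$ (so $\delta^2 < \beta$) and $\delta < \gamma/4 < \gamma/2$; the second follows from Lemma~\ref{lemma-in-space} together with Proposition~\ref{A-bounded-operator}, upon noting that the bracketed expression in the definition of $\alpha_{\rm HJM}(r)$ is, for each fixed $r$, a scalar linear combination of the polynomial $q^2$ and of the functions $1, e^{\delta x}, e^{2\delta x}, e^{\eta x}$, each of which lies in $H_{\beta,\gamma}$ precisely because $4\delta^2 < \beta$, $\delta < \gamma/4$, $\eta^2 < \beta$, $\eta < \gamma/2$; here it is the exponent $2\delta$ occurring in $((e^{\delta x}-1)/\delta)^2 = \delta^{-2}(e^{2\delta x} - 2e^{\delta x} + 1)$ that dictates the bounds on $\delta$. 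Since $X^1, X^2$ are Wiener processes and $X^3$ is a Poisson process, the cumulant generating functions $\Psi_i$ are entire, so condition (\ref{canonical-cond0}) holds on all of $\mathbb{R}$, the sets $A_{H_{\beta,\gamma}}^{\Psi_i}$ may be taken to be all of $H_{\beta,\gamma}$, and the constraint $\sigma_i(\mathbb{R}_+ \times H_{\beta,\gamma}) \subset A_{H_{\beta,\gamma}}^{\Psi_i}$ is automatic.

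It then remains to verify continuity and the Lipschitz bounds on $\mathbb{R}_+ \times H_{\beta,\gamma}$; as there is no $t$-dependence, only the dependence on $r$ is at stake. For the volatilities, $\| \sigma_1(h_1) - \sigma_1(h_2) \|_{\beta,\gamma} = |\varphi_1(h_1) - \varphi_1(h_2)| \, \| p \|_{\beta,\gamma}$ and analogously for $\sigma_2$, so the first Lipschitz assumption on $\varphi_1,\varphi_2$ (together with $\sigma_3$ being constant) yields the required estimates. For the drift, set $g_1 := q^2$ and $g_2 := ((e^{\delta x}-1)/\delta)^2$, which are fixed elements of $H_{\beta,\gamma}$; then $\alpha_{\rm HJM}(r) = \frac{d}{dx}\big( \tfrac{1}{2} \varphi_1(r)^2 g_1 + \tfrac{1}{2} \varphi_2(r)^2 g_2 + \lambda(e^{\eta x} - 1) \big)$, and since $\frac{d}{dx} \in \mathcal{L}(H_{\beta,\gamma})$,
\begin{equation*}
\| \alpha_{\rm HJM}(h_1) - \alpha_{\rm HJM}(h_2) \|_{\beta,\gamma} \leq \Big\| \tfrac{d}{dx} \Big\|_{\mathcal{L}(H_{\beta,\gamma})} \sum_{i=1}^{2} \tfrac{1}{2} \, | \varphi_i(h_1)^2 - \varphi_i(h_2)^2 | \, \| g_i \|_{\beta,\gamma},
\end{equation*}
which is bounded by a constant multiple of $\| h_1 - h_2 \|_{\beta,\gamma}$ by the second Lipschitz assumption on the $\varphi_i$. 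Continuity of $\sigma_i$ and of $\alpha_{\rm HJM}$ on $\mathbb{R}_+ \times H_{\beta,\gamma}$ is read off the same identities, using that $\varphi_i$ and $\varphi_i^2$ are continuous (being Lipschitz).

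With all hypotheses of Theorem~\ref{thm-main-on-bs-spaces} in force, that theorem produces, for each $h_0 \in H_{\beta,\gamma}$, the asserted unique strong adapted c\`adl\`ag solution to (\ref{hjmX}) with $r_0 = h_0$ obeying (\ref{sup-finite-in-BS-spaces}). The only point requiring genuine care is the Lipschitz continuity of $\alpha_{\rm HJM}$: the HJM drift is quadratic in the $\varphi_i$, so Lipschitz continuity of the $\varphi_i$ alone does not suffice — one also needs Lipschitz continuity of $\varphi_i^2$, which is exactly the extra hypothesis imposed in the proposition and which would be automatic if the $\varphi_i$ were known to be bounded.
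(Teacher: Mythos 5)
Your proof is correct and follows essentially the same route as the paper: verify the Lipschitz estimates for $\sigma_1,\sigma_2$ (via the first hypothesis on $\varphi_i$) and for $\alpha_{\rm HJM}$ (via the boundedness of $\frac{d}{dx}$ from Proposition~\ref{A-bounded-operator} together with the Lipschitz hypothesis on $\varphi_i^2$), then invoke Theorem~\ref{thm-main-on-bs-spaces}. The only difference is presentational: the membership statements $\sigma_i(H_{\beta,\gamma})\subset H_{\beta,\gamma}$ and $\alpha_{\rm HJM}(H_{\beta,\gamma})\subset H_{\beta,\gamma}$, and the role of the exponent $2\delta$ in forcing $4\delta^2<\beta$ and $\delta<\gamma/4$, are handled by the paper in the discussion preceding the proposition rather than inside the proof, and your remark on why Lipschitz continuity of $\varphi_i^2$ must be assumed separately is a correct observation that the paper leaves implicit.
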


\begin{proof}
We have for all $h_1,h_2 \in H_{\beta,\gamma}$
\begin{align*}
\| \sigma_1(h_1) - \sigma_1(h_2) \| &\leq L \| p \|_{\beta,\gamma} \| h_1 - h_2 \|_{\beta,\gamma},
\\ \| \sigma_2(h_1) - \sigma_2(h_2) \| &\leq L \| e^{\delta \bullet} \|_{\beta,\gamma} \| h_1 - h_2 \|_{\beta,\gamma}. 
\end{align*}
Using Proposition \ref{A-bounded-operator}, we obtain for all $h_1,h_2 \in H_{\beta,\gamma}$
\begin{align*}
\| \alpha_{\rm HJM}(h_1) - \alpha_{\rm HJM}(h_2) \|_{\beta,\gamma} \leq \frac{L}{2} \| A \|_{\mathcal{L}(H_{\beta,\gamma})} \left( \| q^2 \|_{\beta,\gamma} + \| \textstyle\frac{1}{\delta^2} (e^{\delta \bullet} - 1)^2 \|_{\beta,\gamma} \right) \| h_1 - h_2 \|_{\beta,\gamma}.
\end{align*}
Applying Theorem \ref{thm-main-on-bs-spaces} completes the proof.
\end{proof}

In order to generalize Proposition \ref{prop-ex-1}, by allowing that $\eta$ may depend on the present state of the forward curve, we prepare two auxiliary results.

\begin{lemma}\label{lemma-part-int-3-factors}
Let $\gamma > 0$ and $g,h \in C^1(\mathbb{R}_+;\mathbb{R})$. Assume there are $c > 0$, $\varepsilon \in (-\infty,\gamma)$ and $x_0 \in \mathbb{R}_+$ such that
\begin{align*}
|g(x) h(x)| \leq c e^{\varepsilon x} \quad \text{for all $x \geq x_0$.}
\end{align*}
Then we have
\begin{align*}
&\int_0^{\infty} g(x) h(x) e^{-\gamma x} dx 
\\ &= \frac{1}{\gamma} \left[ g(0)h(0) + \int_0^{\infty} g'(x) h(x) e^{-\gamma x} dx
+ \int_0^{\infty} g(x) h'(x) e^{-\gamma x} dx \right].
\end{align*}
\end{lemma}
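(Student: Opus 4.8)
The plan is to derive the identity from a single integration by parts on a truncated interval $[0,N]$, followed by the limit $N\to\infty$; the exponential bound $|gh|\le c\,e^{\varepsilon x}$ with $\varepsilon<\gamma$ will be used precisely to make the boundary term at infinity disappear and to guarantee that the left-hand integral converges absolutely.

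First I would note that $e^{-\gamma x}=-\tfrac{1}{\gamma}\tfrac{d}{dx}e^{-\gamma x}$, so that, since $gh\in C^1(\mathbb{R}_+;\mathbb{R})$, integration by parts gives for every $N>0$
\[ \int_0^N g(x)h(x)e^{-\gamma x}\,dx=\frac{1}{\gamma}\Big[g(0)h(0)-g(N)h(N)e^{-\gamma N}+\int_0^N\big(g'(x)h(x)+g(x)h'(x)\big)e^{-\gamma x}\,dx\Big]. \]
Then I would let $N\to\infty$. For $x\ge x_0$ one has $|g(x)h(x)e^{-\gamma x}|\le c\,e^{-(\gamma-\varepsilon)x}$, and since $\gamma-\varepsilon>0$ this is integrable on $[x_0,\infty)$, while on $[0,x_0]$ the integrand is continuous; hence the left-hand side converges to $\int_0^\infty gh\,e^{-\gamma x}\,dx$. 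The same estimate gives $|g(N)h(N)e^{-\gamma N}|\le c\,e^{-(\gamma-\varepsilon)N}\to 0$. Consequently $\int_0^N(g'h+gh')e^{-\gamma x}\,dx$ also converges as $N\to\infty$ (it equals a difference of the two convergent quantities above), and passing to the limit in the displayed identity yields the assertion; as happens in all our applications, once $g'h\,e^{-\gamma\bullet}$ and $gh'\,e^{-\gamma\bullet}$ are separately integrable the combined integral splits into the two integrals written in the statement.

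The only point that requires care is the passage to the limit, and it is entirely controlled by the single hypothesis: the bound on $|gh|$ simultaneously forces absolute convergence of the target integral and decay of the boundary term, so that no separate growth assumption on $g'$ or $h'$ is needed. Writing out the elementary integration by parts and the limit estimates should be completely routine; I do not expect any genuine obstacle.
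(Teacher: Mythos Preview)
Your proof is correct and follows essentially the same route as the paper: integrate by parts so that the derivative falls on $e^{-\gamma x}$, use the growth bound to kill the boundary term at infinity, and rearrange. The paper simply writes the integration by parts directly on $[0,\infty)$ without the intermediate truncation to $[0,N]$, but the content is identical; your added remark about the splitting of $\int_0^\infty (g'h+gh')e^{-\gamma x}\,dx$ into two separate integrals is a point the paper also passes over silently.
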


\begin{proof}
Performing partial integration with three factors, we obtain
\begin{align*}
\Big[ g(x) h(x) e^{-\gamma x} \Big]_0^{\infty} &= \int_0^{\infty} g'(x) h(x) e^{-\gamma x} dx + \int_0^{\infty} g(x) h'(x) e^{-\gamma x} dx 
\\ & \quad - \gamma \int_0^{\infty} g(x) h(x) e^{-\gamma x} dx.
\end{align*}
By hypothesis, we have $\lim_{x \rightarrow \infty} g(x) h(x) e^{-\gamma x} = 0$, and
so the stated formula follows.
\end{proof}

\begin{lemma}\label{lemma-estimation-fder-f}
Let $\gamma > 0$ and $h \in C^2(\mathbb{R}_+;\mathbb{R})$ be such that $h, h', h'' \geq 0$. Assume there are $c > 0$, $\varepsilon \in (-\infty,\frac{\gamma}{2})$ and $x_0 \in \mathbb{R}_+$ such that
\begin{align*}
|h(x)| \leq c e^{\varepsilon x} \text{ and } |h'(x)| \leq c e^{\varepsilon x} \quad \text{for all $x \geq x_0$.}
\end{align*}
Then we have
\begin{align*}
\int_0^{\infty} h'(x)^2 e^{-\gamma x} dx \leq \frac{\gamma^2}{2} \int_0^{\infty} h(x)^2 e^{-\gamma x} dx.
\end{align*}
\end{lemma}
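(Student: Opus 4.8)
The plan is to derive the inequality from two applications of Lemma \ref{lemma-part-int-3-factors}, using the sign conditions $h,h',h''\geq 0$ to throw away the terms that would otherwise get in the way. Set
\[
I_0 := \int_0^{\infty} h(x)^2 e^{-\gamma x}\,dx,\qquad I_1 := \int_0^{\infty} h'(x)^2 e^{-\gamma x}\,dx,\qquad J := \int_0^{\infty} h(x)h'(x) e^{-\gamma x}\,dx .
\]
All three are finite: the growth hypotheses give $h^2,(h')^2,hh' = O(e^{2\varepsilon x})$ on $[x_0,\infty)$ with $2\varepsilon < \gamma$, while on the compact piece $[0,x_0]$ the integrands are continuous, hence bounded.

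First I would apply Lemma \ref{lemma-part-int-3-factors} to the pair of functions $h$ and $h'$ (that is, with the lemma's $g$ equal to $h$ and its second function equal to $h'$); the hypothesis of that lemma is met because $|h(x)h'(x)| \leq c^2 e^{2\varepsilon x}$ for $x\geq x_0$ and $2\varepsilon < \gamma$. This yields
\[
J = \frac{1}{\gamma}\Big[\, h(0)h'(0) + I_1 + \int_0^{\infty} h(x)h''(x)\,e^{-\gamma x}\,dx \,\Big].
\]
Since $h(0)\geq 0$, $h'(0)\geq 0$, and $h\geq 0$, $h''\geq 0$ so that the remaining integral is nonnegative, this already gives $I_1 \leq \gamma J$.

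Next I would apply Lemma \ref{lemma-part-int-3-factors} with both functions equal to $h$ (the bound $|h(x)^2|\leq c^2 e^{2\varepsilon x}$ again being in force), obtaining $I_0 = \frac{1}{\gamma}\big[\,h(0)^2 + 2J\,\big]$, hence $J = \tfrac12\big(\gamma I_0 - h(0)^2\big) \leq \tfrac{\gamma}{2}\,I_0$. Chaining the two estimates gives $I_1 \leq \gamma J \leq \tfrac{\gamma^2}{2}\,I_0$, which is the assertion.

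I do not expect a genuine obstacle here; the only thing needing a little care is the bookkeeping when invoking Lemma \ref{lemma-part-int-3-factors}, namely checking that the \emph{product} of the two functions fed into it (here $hh'$, resp.\ $h^2$) obeys an exponential bound with exponent strictly below $\gamma$ — which is exactly what the assumption $\varepsilon < \gamma/2$ provides — and noticing that the integral $\int_0^{\infty} h h'' e^{-\gamma x}\,dx$, although not obviously finite from the stated growth bounds on $h$ and $h'$ alone, is forced to be finite and nonnegative by the displayed identity once $J$ and $I_1$ are known to be finite.
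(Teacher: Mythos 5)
Your proof is correct and takes essentially the same route as the paper's: two applications of Lemma \ref{lemma-part-int-3-factors} (once to the pair $h,h$ and once to the pair $h,h'$), followed by discarding the nonnegative terms $h(0)^2$, $h(0)h'(0)$ and $\int_0^\infty h h'' e^{-\gamma x}\,dx$. The only cosmetic difference is that you chain the two inequalities $I_1\le\gamma J\le\frac{\gamma^2}{2}I_0$ where the paper substitutes one identity into the other; your verification of the hypotheses of Lemma \ref{lemma-part-int-3-factors} and of the finiteness of $\int_0^\infty h h'' e^{-\gamma x}\,dx$ is exactly what the paper leaves implicit.
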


\begin{proof}
Using two times Lemma \ref{lemma-part-int-3-factors}, we obtain
\begin{align*}
&\int_0^{\infty} h(x)^2 e^{-\gamma x} dx 
= \frac{2}{\gamma} \int_0^{\infty} h(x) h'(x) e^{-\gamma x} dx + \frac{1}{\gamma} h(0)^2
\\ &= \frac{2}{\gamma^2} \left[ \int_0^{\infty} h'(x)^2 e^{-\gamma x} dx +
\int_0^{\infty} h(x) h''(x) e^{-\gamma x} dx \right] + \frac{1}{\gamma} \left[ h(0)^2 + \frac{2}{\gamma} h(0) h'(0) \right].
\end{align*}
Since $h, h', h'' \geq 0$ by hypothesis, the stated inequality follows.
\end{proof}

Now we generalize Proposition \ref{prop-ex-1} by assuming that $\eta : H_{\beta,\gamma} \rightarrow \mathbb{R}$ is allowed to depend on the current state of the forward curve. The rest of our present framework is exactly as in Proposition \ref{prop-ex-1}.

\begin{proposition}\label{prop-ex-2}
Assume that, in addition to the hypothesis of Proposition \ref{prop-ex-1}, we have $\gamma \leq \sqrt{2}$, $\eta(H_{\beta,\gamma}) \subset [0,\frac{\gamma}{2}) \cap [0,\sqrt{\beta})$ and
\begin{align*} 
| \eta(h_1) - \eta(h_2) | &\leq L \| h_1 - h_2 \|_{\beta,\gamma}
\end{align*}
for all $h_1,h_2 \in H_{\beta,\gamma}$. Then, for each $h_0 \in H_{\beta,\gamma}$, there exists a unique strong adapted c\`adl\`ag solution $(r_t)_{t \geq 0}$ to (\ref{hjmX}) with $r_0 = h_0$ satisfying (\ref{sup-finite-in-BS-spaces}).
\end{proposition}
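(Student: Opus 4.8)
The plan is to verify the hypotheses of Theorem~\ref{thm-main-on-bs-spaces}, whose conclusion is precisely that of Proposition~\ref{prop-ex-2}. The only structural change compared with Proposition~\ref{prop-ex-1} is that $\sigma_3(r)=-\eta(r)$ now depends on $r$, so that the HJM drift picks up the extra, state-dependent summand $\lambda\,\frac{d}{dx}\bigl(e^{\eta(r)\bullet}-1\bigr)=\lambda\,\eta(r)\,e^{\eta(r)\bullet}$; everything concerning $\sigma_1,\sigma_2$ and the first two summands of $\alpha_{\rm HJM}$ carries over word for word from the proof of Proposition~\ref{prop-ex-1}. Hence all the work sits in $\sigma_3$ and this new Poisson drift term. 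For $\sigma_3$ there is almost nothing to do: $\sigma_3(r)$ is the constant function $-\eta(r)$, hence lies in $H_{\beta,\gamma}$; the condition $\sigma_3(\mathbb{R}_+\times H_{\beta,\gamma})\subset A^{\Psi_3}_{H_{\beta,\gamma}}$ causes no trouble because $\Psi_3$ is entire, so the HJM drift formula makes sense for every $x$ (as it tacitly does already for the polynomial volatility $\sigma_1$ in Proposition~\ref{prop-ex-1}); continuity is inherited from that of $\eta$; and $\|\sigma_3(h_1)-\sigma_3(h_2)\|_{\beta,\gamma}=\gamma^{-1/2}|\eta(h_1)-\eta(h_2)|\le\gamma^{-1/2}L\,\|h_1-h_2\|_{\beta,\gamma}$.

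For the drift I would first dispose of well-definedness and continuity. Since $\eta(H_{\beta,\gamma})\subset[0,\tfrac{\gamma}{2})\cap[0,\sqrt\beta)$, Lemma~\ref{lemma-in-space} gives $e^{\eta(r)\bullet}\in H_{\beta,\gamma}$ for every $r$; with Proposition~\ref{A-bounded-operator} this yields $\lambda\eta(r)e^{\eta(r)\bullet}\in H_{\beta,\gamma}$, and together with the computation preceding Proposition~\ref{prop-ex-1} (which handles the first two summands) it gives $\alpha_{\rm HJM}(r)\in H_{\beta,\gamma}$ for all $r$. Continuity of $r\mapsto\lambda\eta(r)e^{\eta(r)\bullet}$ reduces, by continuity of $\eta$, to continuity of the scalar map $\delta\mapsto\delta e^{\delta\bullet}$ from $[0,\tfrac{\gamma}{2})\cap[0,\sqrt\beta)$ into $H_{\beta,\gamma}$, which follows from a localized form of the Lipschitz estimate below, since any convergent sequence in $[0,\tfrac{\gamma}{2})$ eventually stays in a compact subinterval.

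The decisive step is the global Lipschitz bound for $\alpha_{\rm HJM}$. As the first two summands contribute the same estimate as in Proposition~\ref{prop-ex-1}, it is enough to bound $\|\lambda\eta(h_1)e^{\eta(h_1)\bullet}-\lambda\eta(h_2)e^{\eta(h_2)\bullet}\|_{\beta,\gamma}$ by a constant times $|\eta(h_1)-\eta(h_2)|\le L\|h_1-h_2\|_{\beta,\gamma}$; writing this summand as $\lambda A\bigl(e^{\eta(\cdot)\bullet}-1\bigr)$ with $A=\frac{d}{dx}$ and using Proposition~\ref{A-bounded-operator}, it suffices to bound $\|e^{\eta_1\bullet}-e^{\eta_2\bullet}\|_{\beta,\gamma}$ by a constant times $|\eta_1-\eta_2|$, uniformly over the range of $\eta$. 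This is where Lemma~\ref{lemma-estimation-fder-f} and the hypothesis $\gamma\le\sqrt2$ come in. Assuming without loss of generality $\eta_1\ge\eta_2\ge0$, the function $g:=e^{\eta_1\bullet}-e^{\eta_2\bullet}$ has $g^{(n)}(x)=\eta_1^ne^{\eta_1x}-\eta_2^ne^{\eta_2x}\ge0$ for all $n$ and $x\ge0$, and each $g^{(n)}$ satisfies the exponential growth bounds required in Lemma~\ref{lemma-estimation-fder-f} because $\eta_1<\tfrac{\gamma}{2}$. Applying Lemma~\ref{lemma-estimation-fder-f} successively to $g,g',g'',\dots$ gives $\int_0^\infty(g^{(n)})^2e^{-\gamma x}\,dx\le\bigl(\tfrac{\gamma^2}{2}\bigr)^n\int_0^\infty g^2e^{-\gamma x}\,dx$, and since $\gamma\le\sqrt2$ and $\beta>1$ we have $\gamma^2/(2\beta)<1$, so summing the geometric series yields
\[ \|e^{\eta_1\bullet}-e^{\eta_2\bullet}\|_{\beta,\gamma}^2\ \le\ \frac{1}{1-\gamma^2/(2\beta)}\int_0^\infty\bigl(e^{\eta_1x}-e^{\eta_2x}\bigr)^2e^{-\gamma x}\,dx . \]
I would then bound the remaining scalar integral by a constant times $|\eta_1-\eta_2|^2$, using $e^{\eta_1x}-e^{\eta_2x}=\int_{\eta_2}^{\eta_1}xe^{sx}\,ds\le|\eta_1-\eta_2|\,xe^{\eta_1x}$, so that the integral is at most $|\eta_1-\eta_2|^2\int_0^\infty x^2e^{-(\gamma-2\eta_1)x}\,dx$. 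With all hypotheses of Theorem~\ref{thm-main-on-bs-spaces} then in place, the theorem delivers the desired unique adapted c\`adl\`ag strong solution $(r_t)_{t\ge0}$ to \eqref{hjmX} satisfying \eqref{sup-finite-in-BS-spaces}.

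The hard part is the uniformity of the constant in that last estimate. The closed form $\|e^{\delta\bullet}\|_{\beta,\gamma}^2=\beta/\bigl((\beta-\delta^2)(\gamma-2\delta)\bigr)$ from Lemma~\ref{lemma-in-space} blows up as $\delta\uparrow\tfrac{\gamma}{2}$, so some grip on how far the values of $\eta$ stay from $\tfrac{\gamma}{2}$ is indispensable; the role of $\gamma\le\sqrt2$, channelled through Lemma~\ref{lemma-estimation-fder-f}, is exactly to kill the offending $(\beta-\delta^2)^{-1}$ factor and collapse the full weighted Sobolev norm to a single weighted $L^2$ integral, while the positivity of $g$ and of all its derivatives is what makes Lemma~\ref{lemma-estimation-fder-f} applicable in the first place. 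Checking that the surviving $(\gamma-2\eta_1)$ in the denominator does no harm — i.e.\ that the bound really is linear in $|\eta_1-\eta_2|$ with a constant valid over the whole admissible range of $\eta$ — is the one spot where the argument needs genuine care.
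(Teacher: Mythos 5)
Your argument follows the paper's proof essentially verbatim: reduce everything to the Lipschitz continuity of $r\mapsto e^{\eta(r)\bullet}$ in $H_{\beta,\gamma}$, iterate Lemma~\ref{lemma-estimation-fder-f} (this is exactly where $\gamma\le\sqrt2$ enters) to collapse the weighted Sobolev norm onto the $n=0$ integral, and finish with the scalar bound $|e^{\eta_1 x}-e^{\eta_2 x}|\le |\eta_1-\eta_2|\,x\,e^{\eta_1 x}$. The uniformity caveat you flag at the end --- that the resulting constant $\int_0^\infty x^2 e^{(2\eta(h_1)-\gamma)x}\,dx$ degenerates as $\eta(h_1)\uparrow\gamma/2$, so a genuinely uniform Lipschitz constant needs the range of $\eta$ to stay away from $\gamma/2$ --- is present, and left unaddressed, in the paper's own proof as well, so you are not missing any ingredient that the paper supplies.
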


\begin{proof}
It suffices to show that $\Gamma : H_{\beta,\gamma} \rightarrow H_{\beta,\gamma}$ defined as $\Gamma(r)(x) := e^{\eta(r) x}$ is Lipschitz continuous. So let $h_1,h_2 \in H_{\beta,\gamma}$ be arbitrary. Without loss of generality we assume that $\eta(h_2) \leq \eta(h_1)$. Observe that all derivatives of $\Gamma(h_1) - \Gamma(h_2)$ are non-negative. So we obtain by applying Lemma \ref{lemma-estimation-fder-f} (notice that $\gamma \leq \sqrt{2}$ by hypothesis), and the Lipschitz property $|e^x - e^y| \leq e^x |x-y|$ for $y \leq x$ that
\begin{align*}
\left\| \Gamma(h_1) - \Gamma(h_2) \right\|_{\beta,\gamma}^2 &= \sum_{n=0}^{\infty} \left( \frac{1}{\beta} \right)^n \int_0^{\infty} \left( \eta(h_1)^n e^{\eta(h_1)x} - \eta(h_2)^n e^{\eta(h_2)x} \right)^2 e^{-\gamma x} dx 
\\ &\leq \frac{\beta}{\beta - 1} \int_0^{\infty} \left( e^{\eta(h_1)x} - e^{\eta(h_2)x} \right)^2 e^{-\gamma x} dx
\\ &\leq \frac{\beta}{\beta - 1} \int_0^{\infty} \left( e^{\eta(h_1)x} (\eta(h_1) - \eta(h_2))x \right)^2 e^{-\gamma x} dx
\\ &\leq \frac{\beta}{\beta - 1} \left( \int_0^{\infty} \left( x e^{\eta(h_1)x} \right)^2 e^{-\gamma x} dx \right) L^2 \| h_1 - h_2 \|_{\beta,\gamma}^2.
\end{align*}
The integral is finite, because we have $\eta(h_1) \in [0,\frac{\gamma}{2})$ by assumption. Applying Theorem \ref{thm-main-on-bs-spaces} finishes the proof.
\end{proof}

\section{Forward curve evolutions as mild and weak solutions of infinite dimensional stochastic differential equations}\label{sec-mild}

In this section, where we deal with the existence of mild and weak solutions to (\ref{hjmX}), we consider the spaces $H_w$ of forward curves, which have been introduced in \cite[Chap. 5]{fillnm}.

Let $w : \mathbb{R}_+ \rightarrow [1,\infty)$ be a non-decreasing $C^1$-function such that
$w^{-\frac{1}{3}} \in L^1(\mathbb{R}_+)$.

\begin{example}\label{ex-weighting-function}
$w(x) = e^{\alpha x}$, for $\alpha > 0$.
\end{example}

\begin{example}
$w(x) = (1+x)^{\alpha}$, for $\alpha > 3$.
\end{example}

Let $H_w$ be the linear space of all absolutely continuous functions $h : \mathbb{R}_+ \rightarrow \mathbb{R}$ satisfying
\begin{align*}
\int_{\mathbb{R}_+} |h'(x)|^2 w(x) dx < \infty,
\end{align*}
where $h'$ denotes the weak derivative of $h$. We define the inner product
\begin{align*}
( g,h )_w := g(0)h(0) + \int_{\mathbb{R}_+} g'(x)h'(x)w(x)dx
\end{align*}
and denote the corresponding norm by $\interleave \cdot \interleave_w$. Since forward curves flatten for large time to maturity $x$, the choice of $H_w$ is reasonable from an economic point of view.

\begin{proposition}
The space $(H_w,( \cdot,\cdot )_w)$ is a separable Hilbert space. Each $h \in H_w$ is continuous, bounded and the limit $h(\infty) := \lim_{x \rightarrow \infty} h(x)$ exists. Moreover, for each $x \in \mathbb{R}_+$, the point evaluation $h \mapsto h(x) : H_w \rightarrow \mathbb{R}$ is a continuous linear functional.
\end{proposition}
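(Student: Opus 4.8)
The plan is to identify $H_w$, up to isometry, with the separable Hilbert space $\mathbb{R} \oplus L^2(\mathbb{R}_+,w(x)dx)$, and then to exploit that $w^{-1}\in L^1(\mathbb{R}_+)$. First I would check that $(\cdot,\cdot)_w$ is genuinely an inner product: bilinearity and symmetry are obvious, and if $(h,h)_w=0$ then $h(0)=0$ and $h'=0$ almost everywhere, so the absolutely continuous function $h$ vanishes identically. Then I consider the linear map $\iota : H_w \to \mathbb{R}\oplus L^2(\mathbb{R}_+,w(x)dx)$, $\iota(h):=(h(0),h')$, which by the very definition of $(\cdot,\cdot)_w$ is an isometry.

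The map $\iota$ is also surjective. Given $(a,g)\in\mathbb{R}\oplus L^2(\mathbb{R}_+,w(x)dx)$, set $h(x):=a+\int_0^x g(\eta)\,d\eta$; since $w$ is continuous and $w\ge 1$, the function $g$ is square integrable, hence integrable, on every compact interval, so $h$ is well defined and absolutely continuous with weak derivative $g$, and $\int_{\mathbb{R}_+}|h'|^2 w = \|g\|^2 <\infty$ shows $h\in H_w$ with $\iota(h)=(a,g)$. Since $L^2(\mathbb{R}_+,w(x)dx)$ is a separable Hilbert space (it is isometric to $L^2(\mathbb{R}_+,dx)$ via $g\mapsto\sqrt{w}\,g$), so is $\mathbb{R}\oplus L^2(\mathbb{R}_+,w(x)dx)$, and therefore $H_w$, being isometrically isomorphic to it, is a separable Hilbert space.

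For the remaining assertions I would first record that, because $w\ge 1$, we have $w^{-1}\le w^{-1/3}$, so $w^{-1}\in L^1(\mathbb{R}_+)$ by hypothesis. Let $h\in H_w$. For $0\le x\le y$, the Cauchy--Schwarz inequality gives
\[
|h(y)-h(x)| = \left|\int_x^y h'(\eta)\,d\eta\right| \le \left(\int_x^y |h'(\eta)|^2 w(\eta)\,d\eta\right)^{1/2}\left(\int_x^y \frac{d\eta}{w(\eta)}\right)^{1/2} \le \interleave h\interleave_w\left(\int_x^y\frac{d\eta}{w(\eta)}\right)^{1/2}.
\]
Since $w^{-1}$ is non-increasing and integrable, $\int_x^y w^{-1}$ tends to $0$ both as $y-x\to 0$, uniformly in $x$, and as $x,y\to\infty$; the first fact yields (uniform) continuity of $h$, the second, via the Cauchy criterion, the existence of $h(\infty)=\lim_{x\to\infty}h(x)$. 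Taking $x=0$ in the display, together with $|h(0)|\le\interleave h\interleave_w$, gives
\[
|h(x)| \le \Big(1+\big(\textstyle\int_{\mathbb{R}_+}w^{-1}\big)^{1/2}\Big)\interleave h\interleave_w \qquad\text{for all }x\in\mathbb{R}_+,
\]
which shows both that $h$ is bounded and that each point evaluation $h\mapsto h(x)$ is a bounded, hence continuous, linear functional on $H_w$ (linearity being clear).

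I do not expect a serious obstacle here. The only points requiring a little care are the surjectivity of $\iota$ — that is, verifying that a primitive of an $L^2(\mathbb{R}_+,w(x)dx)$-function really lies in $H_w$ — and the elementary but crucial reduction $w^{-1}\le w^{-1/3}$, which turns the standing hypothesis $w^{-1/3}\in L^1(\mathbb{R}_+)$ into the integrability of $w^{-1}$ actually used in the estimates above.
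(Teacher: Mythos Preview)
Your argument is correct and self-contained. The paper itself does not give a proof but simply refers to \cite[Thm.~5.1.1]{fillnm}; your identification of $H_w$ with $\mathbb{R}\oplus L^2(\mathbb{R}_+,w(x)\,dx)$ via $h\mapsto(h(0),h')$ and the Cauchy--Schwarz estimate $|h(y)-h(x)|\le \interleave h\interleave_w\,(\int_x^y w^{-1})^{1/2}$ are exactly the ingredients one finds in that reference, so there is nothing further to compare.
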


\begin{proof}
All of these statements can be found in the proof of \cite[Thm. 5.1.1]{fillnm}.
\end{proof}

The fact that each point evaluation is a continuous linear functional ensures that forward curves $(r_t)$ solving (\ref{hjmX}) satisfy the variation of constants formula (\ref{hjmmild}).

Defining the constants $C_1, \ldots, C_4 > 0$ as
\begin{align*}
C_1 := \| w^{-1} \|_{L^1(\mathbb{R}_+)}^{\frac{1}{2}}, \quad
C_2 := 1 + C_1, \quad
C_3 := \| w^{- \frac{1}{3}} \|_{L^1(\mathbb{R}_+)}^2, \quad
C_4 := \| w^{- \frac{1}{3}} \|_{L^1(\mathbb{R}_+)}^{\frac{7}{2}},
\end{align*}
we have for all $h \in H_w$ the estimates
\begin{align}
\label{estimate-c1} \| h' \|_{L^1(\mathbb{R}_+)} &\leq C_1 \interleave h \interleave_w,
\\ \label{estimate-c2} \| h \|_{L^{\infty}(\mathbb{R}_+)} &\leq C_2 \interleave h \interleave_w,
\\ \label{est-h} \| h - h(\infty) \|_{L^1(\mathbb{R}_+)} &\leq C_3 \interleave h \interleave_w,
\\ \label{est-h-4} \| (h - h(\infty))^4 w \|_{L^1(\mathbb{R}_+)} &\leq C_4 \interleave h \interleave_w^4,
\end{align}
which also follows by inspecting the proof of \cite[Thm. 5.1.1]{fillnm}.

Since for an application of Theorem \ref{thm-sde-solution} we require that the shift semigroup $(S_t)_{t \geq 0}$ defined by $S_t h = h(t + \cdot)$ for $t \in \mathbb{R}_+$ is pseudo-contractive in a closed subspace of $H_w$, we perform an idea, which is due to Tehranchi \cite{Tehranchi}, namely we change to the inner product
\begin{align*}
\langle g,h \rangle_w := g(\infty)h(\infty) + \int_{\mathbb{R}_+} g'(x)h'(x)w(x)dx
\end{align*}
and denote the corresponding norm by $\| \cdot \|_w$. The estimates (\ref{estimate-c1})--(\ref{est-h-4}) are also valid with the norm $\| \cdot \|_w$ for all $h \in H_w$, which is proven exactly as for the original norm $\interleave \cdot \interleave_w$. Therefore we conclude, by using (\ref{estimate-c2}),
\begin{align*}
\frac{1}{(1+C_2^2)^{\frac{1}{2}}} \| h \|_w \leq \interleave h \interleave_w \leq (1 + C_2^2)^{\frac{1}{2}} \| h \|_w, \quad h \in H_w
\end{align*}
showing that $\| \cdot \|_w$ and $\interleave \cdot \interleave_w$ are equivalent norms on $H_w$. From now on, we shall work with the norm $\| \cdot \|_w$.

\begin{proposition}\label{prop-pseudo-contractive}
$(S_t)$ is a $C_0$-semigroup in $H_w$ with generator $\frac{d}{dx} : \mathcal{D}(\frac{d}{dx}) \subset H_w \rightarrow H_w$, $\frac{d}{dx}h = h'$, and domain
 \begin{align*}
\mathcal{D}({\textstyle \frac{d}{dx}}) = \{ h \in H_w \, | \, h' \in H_w \}.
\end{align*}
The subspace $H_w^0 := \{ h \in H_w \, | \, h(\infty) = 0 \}$ is a closed subspace of $H_w$ and $(S_t)$ is contractive in $H_w^0$ with respect to the norm $\| \cdot \|_w$.
\end{proposition}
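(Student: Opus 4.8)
The plan is to establish the three assertions in turn: the $C_0$-property and the identification of the generator essentially from first principles (these facts are also contained in the analysis of \cite{fillnm}), and then the contraction property as a short consequence of the monotonicity of $w$ together with the invariance of the limit at infinity under shifts.

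First I would check that $(S_t)$ is a $C_0$-semigroup on $H_w$. For $h\in H_w$ one has $(S_th)'=h'(\cdot+t)$ and $(S_th)(\infty)=h(\infty)$, so substituting $y=x+t$ and using $w(y-t)\le w(y)$,
\[
\|S_th\|_w^2=h(\infty)^2+\int_t^\infty|h'(y)|^2w(y-t)\,dy\le h(\infty)^2+\int_{\mathbb{R}_+}|h'(y)|^2w(y)\,dy=\|h\|_w^2 ,
\]
so $S_t\in\mathcal{L}(H_w)$ with $\|S_t\|\le1$, while $S_0=\mathrm{Id}$ and $S_sS_t=S_{s+t}$ are immediate. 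Applying the same computation to $S_th-h$ reduces strong continuity to the assertion that $\|h'(\cdot+t)-h'\|_{L^2(\mathbb{R}_+,w\,dx)}\to0$ as $t\downarrow0$, i.e.\ to continuity of translations on $L^2(\mathbb{R}_+,w\,dx)$; this follows by approximating $h'$ in that space by continuous functions of compact support (dense, since $w$ is positive and continuous) and using that the translations are contractive on $L^2(\mathbb{R}_+,w\,dx)$, again because $w$ is non-decreasing.

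Next I would identify the generator $A$. If $h\in H_w$ with $h'\in H_w$, then $\tfrac1t(S_th-h)=\tfrac1t\int_0^tS_sh'\,ds$ as a Bochner integral in $H_w$ (point evaluations being continuous), which tends to $h'$ as $t\downarrow0$ by strong continuity; hence every such $h$ lies in $\mathcal{D}(A)$ with $Ah=h'$, so $A$ extends the operator $A_0h:=h'$ with domain $\mathcal{D}(A_0)=\{h\in H_w\mid h'\in H_w\}$. For the converse I would show that $\lambda-A_0$ maps $\mathcal{D}(A_0)$ bijectively onto $H_w$ for $\lambda>0$: given $g\in H_w$, set $h(x):=\int_x^\infty e^{-\lambda(y-x)}g(y)\,dy$, which is finite since $g$ is bounded by (\ref{estimate-c2}); then $\lambda h-h'=g$ and $h(\infty)=g(\infty)/\lambda$, and writing $h'(x)=\int_x^\infty\lambda e^{-\lambda(y-x)}\big(g(y)-g(x)\big)\,dy=\int_x^\infty e^{-\lambda(y-x)}g'(y)\,dy$, one obtains by Cauchy--Schwarz and Fubini together with $w(x)\le w(y)$ for $x\le y$ that $\int|h'|^2w\le\lambda^{-2}\int|g'|^2w$ and hence also $\int|h''|^2w<\infty$, so that $h\in\mathcal{D}(A_0)$; uniqueness in $\mathcal{D}(A_0)$ is clear. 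Since the generator $\lambda-A$ is a bijection of $\mathcal{D}(A)$ onto $H_w$ extending the bijection $\lambda-A_0$, it is injective, whence $\mathcal{D}(A)=\mathcal{D}(A_0)$ and $A=A_0$, which is the claimed description of $\frac{d}{dx}$ and its domain.

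Finally, $h\mapsto h(\infty)$ is a bounded linear functional on $(H_w,\|\cdot\|_w)$, since $|h(\infty)|\le\|h\|_w$, so its kernel $H_w^0$ is a closed subspace, and it is $(S_t)$-invariant because $(S_th)(\infty)=h(\infty)$. The contraction estimate is then read off from the display above: for $h\in H_w^0$ we have $h(\infty)=0$, so $\|S_th\|_w^2=\int_t^\infty|h'(y)|^2w(y-t)\,dy\le\|h\|_w^2$ (the same computation in fact gives a contraction on all of $H_w$ in the norm $\|\cdot\|_w$, which is the point of having passed, following Tehranchi \cite{Tehranchi}, from $\interleave\cdot\interleave_w$ to $\|\cdot\|_w$). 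The only genuinely technical point I expect is the characterization of $\mathcal{D}(A)$, namely verifying that the resolvent $g\mapsto\int_\cdot^\infty e^{-\lambda(y-\cdot)}g(y)\,dy$ maps $H_w$ back into $\{h\in H_w\mid h'\in H_w\}$; here the reproducing identity for $h'$ and the monotonicity of $w$ (and, if preferred, the estimates (\ref{estimate-c1})--(\ref{est-h-4})) do the work, while the $C_0$-property and the contractivity reduce to the short direct computations above.
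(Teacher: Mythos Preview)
Your argument is correct. The contractivity computation is exactly the one the paper gives, and your observation that it in fact yields $\|S_th\|_w\le\|h\|_w$ on all of $H_w$ (not just on $H_w^0$) is accurate and worth noting.

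The difference from the paper is purely one of presentation: for the $C_0$-property, the identification of the generator with its domain, and the closedness of $H_w^0$, the paper does not argue anything directly but simply refers to \cite[Thm.~5.1.1]{fillnm}. You have instead supplied a self-contained proof: strong continuity via density of $C_c$ in $L^2(\mathbb{R}_+,w\,dx)$ and contractivity of translations there, and the characterization of $\mathcal{D}(A)$ via the explicit resolvent $h(x)=\int_x^\infty e^{-\lambda(y-x)}g(y)\,dy$ together with the standard ``maximal extension'' argument (if $\lambda-A_0$ is already onto and $A_0\subset A$, then $A_0=A$). Your resolvent estimate $\int|h'|^2w\le\lambda^{-2}\int|g'|^2w$ is correct and follows from Cauchy--Schwarz and Fubini with $w(x)\le w(y)$, and then $h''=\lambda h'-g'$ gives $h'\in H_w$. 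So you are effectively reproducing the content of the cited reference; nothing is missing.
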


\begin{proof}
Except for the last statement, we refer to the proof of \cite[Thm. 5.1.1]{fillnm}. By the monotonicity of $w$ we have
\begin{align*}
\| S_t h \|_w^2 = \int_{\mathbb{R}_+} |h'(x+t)|^2w(x)dx \leq  \| h \|_w^2
\end{align*}
for all $t \in \mathbb{R}_+$ and $h \in H_w^0$, showing that $(S_t)$ is contractive in $H_w^0$.
\end{proof}

We define for any $h = (h_1,\ldots,h_n) \in \Pi_{i=1}^n A_{H_w^0}^{\Psi_i}$
\begin{align}\label{def-of-capital-sigma}
\Sigma h(x) := - \sum_{i=1}^n h_i(x) \Psi_i' \left( -\int_0^x h_i(\eta) d\eta \right), \quad x \in \mathbb{R}_+.
\end{align}

\begin{proposition}\label{prop-const-C5}
There is a constant $C_5 > 0$ such that for all $g,h \in \Pi_{i=1}^n A_{H_w^0}^{\Psi_i}$ we have
\begin{align}\label{locally-Lipschitz}
\| \Sigma g - \Sigma h \|_w \leq C_5 \sum_{i=1}^n \left( 1 + \| h_i \|_w + \| g_i \|_w + \| g_i \|_w^2 \right) \| g_i - h_i \|_w.
\end{align}
Furthermore, for each $h \in \Pi_{i=1}^n A_{H_w^0}^{\Psi_i}$ we have $\Sigma h \in H_w^0$, and the map $\Sigma : \Pi_{i=1}^n A_{H_w^0}^{\Psi_i} \rightarrow H_w^0$ is continuous.
\end{proposition}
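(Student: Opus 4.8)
The plan is to reduce to a single driver, exploit the smoothness of the cumulant generating functions to get uniform bounds along the relevant curves, and then obtain both the inclusion $\Sigma h \in H_w^0$ and the estimate (\ref{locally-Lipschitz}) from one algebraic decomposition of $(\Sigma g)'-(\Sigma h)'$ together with the estimates (\ref{estimate-c1})--(\ref{est-h-4}). Concretely, $\Sigma h = \sum_{i=1}^n \Sigma_i h_i$ with $\Sigma_i h_i(x) = -h_i(x)\Psi_i'(-Th_i(x))$, so by the triangle inequality in $H_w^0$ it suffices to treat each $i$ separately; I fix $i$, drop the index, and write $\phi := \Psi_i'$. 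Since $\Psi_i \in C^\infty([a_i,b_i])$ and $[c_i,d_i]\subset(a_i,b_i)$ is compact, the quantities $M_0 := \sup_{[c_i,d_i]}|\phi|$, $M_1 := \sup_{[c_i,d_i]}|\phi'|$, $M_2 := \sup_{[c_i,d_i]}|\phi''|$ are finite and $\phi,\phi'$ are Lipschitz on $[c_i,d_i]$ with constants $M_1,M_2$; because $-Th(\mathbb{R}_+)\subset[c_i,d_i]$ for $h \in A := A_{H_w^0}^{\Psi_i}$, all these bounds hold along every curve $-Th$. From $(\Sigma h)'(x) = -h'(x)\phi(-Th(x)) + h(x)^2\phi'(-Th(x))$ one gets $|(\Sigma h)'| \le M_0|h'| + M_1 h^2$, and since $h(\infty)=0$, (\ref{est-h-4}) gives $\int_{\mathbb{R}_+}|(\Sigma h)'|^2 w \le 2M_0^2\|h\|_w^2 + 2M_1^2 C_4\|h\|_w^4 < \infty$; together with absolute continuity of $\Sigma h$ (product of the absolutely continuous $h$ and the bounded $C^1$-function $x\mapsto\phi(-Th(x))$) and $(\Sigma h)(\infty)=0$ (as $h(\infty)=0$ and $\phi(-Th(\cdot))$ is bounded), this shows $\Sigma h \in H_w^0$.

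For the estimate, note $(\Sigma g - \Sigma h)(\infty)=0$, hence $\|\Sigma g - \Sigma h\|_w^2 = \int_{\mathbb{R}_+}|(\Sigma g)'-(\Sigma h)'|^2 w$. With $u := -Tg$ and $v := -Th$ (both valued in $[c_i,d_i]$), I use the decomposition
\[ (\Sigma g)' - (\Sigma h)' = -(g'-h')\phi(u) - h'\bigl(\phi(u)-\phi(v)\bigr) + (g^2-h^2)\phi'(v) + g^2\bigl(\phi'(u)-\phi'(v)\bigr) \]
and bound each summand in $L^2(w\,dx)$ via the triangle inequality. The first is $\le M_0\|g-h\|_w$, using $\int(g'-h')^2 w = \|g-h\|_w^2$. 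For the second and fourth I use $|u(x)-v(x)| = |\int_0^x (g-h)(\eta)\,d\eta| \le \|g-h\|_{L^1(\mathbb{R}_+)} \le C_3\|g-h\|_w$ uniformly in $x$ by (\ref{est-h}), so the Lipschitz bounds for $\phi,\phi'$ give, together with $\int(h')^2 w = \|h\|_w^2$ and $\int g^4 w \le C_4\|g\|_w^4$ from (\ref{est-h-4}), the contributions $M_1 C_3\|h\|_w\|g-h\|_w$ and $M_2 C_3 C_4^{1/2}\|g\|_w^2\|g-h\|_w$. For the third, writing $g^2-h^2 = (g-h)(g+h)$, Cauchy--Schwarz and (\ref{est-h-4}) applied to $g-h$ and $g+h$ yield $\int(g-h)^2(g+h)^2 w \le C_4\|g-h\|_w^2(\|g\|_w+\|h\|_w)^2$, giving the contribution $M_1 C_4^{1/2}(\|g\|_w+\|h\|_w)\|g-h\|_w$. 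Summing the four bounds gives (\ref{locally-Lipschitz}) for $n=1$ with a suitable $C_5$, and summing over $i$ gives the general case. Continuity of $\Sigma$ is then immediate: along a convergent sequence $g^{(k)}\to h$ the norms $\|g_i^{(k)}\|_w$ stay bounded, so by (\ref{locally-Lipschitz}) $\|\Sigma g^{(k)}-\Sigma h\|_w\to 0$.

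The main obstacle is the choice of decomposition in the second step. The prefactor in (\ref{locally-Lipschitz}) is prescribed to be exactly $1 + \|h_i\|_w + \|g_i\|_w + \|g_i\|_w^2$, so the quartic integral that unavoidably arises from differentiating the quadratic nonlinearity $h\mapsto h^2\Psi_i'(-Th)$ must be charged to $g$ (never to $h$) and must be controlled through (\ref{est-h-4}) rather than through the cruder supremum estimate (\ref{estimate-c2}), which would cost too many powers of the norm; the split $g^2\phi'(u)-h^2\phi'(v) = (g^2-h^2)\phi'(v) + g^2(\phi'(u)-\phi'(v))$ is what achieves this. Once that decomposition is in place, everything else is routine bookkeeping with (\ref{estimate-c1})--(\ref{est-h-4}) and the $C^\infty$-smoothness of the $\Psi_i$.
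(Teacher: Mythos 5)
Your proof is correct and follows essentially the same route as the paper's: your four-term decomposition of $(\Sigma g)'-(\Sigma h)'$ corresponds exactly to the paper's terms $I_2,I_1,I_4,I_3$, and you control them with the same tools, namely the $L^1$-bound (\ref{est-h}) for the arguments of $\Psi_i'$ and $\Psi_i''$, the quartic bound (\ref{est-h-4}) charged to $g$, and Cauchy--Schwarz for the $(g^2-h^2)$ term. The only (immaterial) deviation is that you verify $\Sigma h\in H_w^0$ directly from the derivative bound and absolute continuity, whereas the paper deduces it from the Lipschitz estimate together with $\Sigma h(\infty)=0$.
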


\begin{proof}
We define 
\begin{align*}
K_i := \sup_{x \in [c_i,d_i]} |\Psi_i'(x)|, \text{ } L_i := \sup_{x \in [c_i,d_i]} |\Psi_i''(x)| \text{ and } M_i := \sup_{x \in [c_i,d_i]} |\Psi_i'''(x)|
\end{align*}
for $i = 1,\ldots,n$. By the boundedness of the derivatives $\Psi_i'$ on $[c_i,d_i]$, the definition (\ref{def-of-capital-sigma}) of $\Sigma$ yields that for each $h \in \Pi_{i=1}^n A_{H_w^0}^{\Psi_i}$ the limit $\Sigma h (\infty) := \lim_{x \rightarrow \infty} \Sigma h(x)$ exists and 
\begin{align}\label{limit-cap-sigma-zero}
\Sigma h (\infty) = 0, \quad h \in \Pi_{i=1}^n A_{H_w^0}^{\Psi_i}.
\end{align}
By using (\ref{limit-cap-sigma-zero}) and the universal inequality
\begin{align*}
|x_1 + \ldots + x_k|^2 \leq k \left( |x_1|^2 + \ldots + |x_k|^2 \right), \quad k \in \mathbb{N}
\end{align*} 
we get for arbitrary $g,h \in \Pi_{i=1}^n A_{H_w^0}^{\Psi_i}$ the estimation
\begin{align*}
&\| \Sigma g - \Sigma h \|_w^2 = \int_{\mathbb{R}_+} \Big| \sum_{i=1}^n h_i'(x) \Psi_i' \left( -\int_0^x h_i(\eta)d\eta \right) - \sum_{i=1}^n g_i'(x) \Psi_i' \left( -\int_0^x g_i(\eta)d\eta \right) 
\\ &\quad + \sum_{i=1}^n g_i(x)^2 \Psi_i'' \left( -\int_0^x g_i(\eta)d\eta \right) - \sum_{i=1}^n h_i(x)^2 \Psi_i'' \left( -\int_0^x h_i(\eta)d\eta \right) \Big|^2 w(x) dx
\\ &\quad \leq 4n(I_1 + I_2 + I_3 + I_4),
\end{align*}
where we have put
\begin{align*}
I_1 &:= \sum_{i=1}^n \int_{\mathbb{R}_+} |h_i'(x)|^2 \Big| \Psi_i' \left( -\int_0^x h_i(\eta)d\eta \right) - \Psi_i' \left( -\int_0^x g_i(\eta)d\eta \right) \Big|^2 w(x) dx,
\\ I_2 &:= \sum_{i=1}^n \int_{\mathbb{R}_+} \Psi_i' \left( -\int_0^x g_i(\eta)d\eta \right)^2 |h_i'(x) - g_i'(x)|^2 w(x) dx,
\\ I_3 &:= \sum_{i=1}^n \int_{\mathbb{R}_+} g_i(x)^4 \left[ \Psi_i'' \left( -\int_0^x g_i(\eta)d\eta \right) - \Psi_i'' \left( -\int_0^x h_i(\eta)d\eta \right) \right]^2 w(x) dx,
\\ I_4 &:= \sum_{i=1}^n \int_{\mathbb{R}_+} \Psi_i'' \left( -\int_0^x h_i(\eta)d\eta \right)^2 (g_i(x)^2 - h_i(x)^2)^2 w(x) dx.
\end{align*}
Using (\ref{est-h}) yields
\begin{align*}
I_1 \leq \sum_{i=1}^n L_i^2 \| h_i \|_w^2 \| g_i-h_i \|_{L^1(\mathbb{R}_+)}^2 \leq C_3^2 \sum_{i=1}^n L_i^2 \| h_i \|_w^2 \| g_i-h_i \|_w^2,
\end{align*}
and $I_2$ is estimated as
\begin{align*}
I_2 \leq \sum_{i=1}^n K_i^2 \| g_i-h_i \|_w^2.
\end{align*}
Taking into account (\ref{est-h}) and (\ref{est-h-4}), we get
\begin{align*}
I_3 \leq \sum_{i=1}^n M_i^2 \| g_i^4 w \|_{L^1(\mathbb{R}_+)} \| g_i-h_i \|_{L^1(\mathbb{R}_+)}^2 \leq C_3^2 C_4 \sum_{i=1}^n M_i^2 \| g_i \|_w^4 \| g_i-h_i \|_w^2,
\end{align*}
and by using H\"older's inequality and (\ref{est-h-4}), we obtain
\begin{align*}
I_4 &\leq \sum_{i=1}^n L_i^2 \int_{\mathbb{R}_+} (g_i(x)+h_i(x))^2 w(x)^{\frac{1}{2}} (g_i(x)-h_i(x))^2 w(x)^{\frac{1}{2}} dx
\\ &\leq \sum_{i=1}^n L_i^2 \|(g_i+h_i)^4 w \|_{L^1(\mathbb{R}_+)}^{\frac{1}{2}} \|(g_i-h_i)^4 w \|_{L^1(\mathbb{R}_+)}^{\frac{1}{2}}
\\ &\leq 2 C_4 \sum_{i=1}^n L_i^2 (\| g_i \|_w^2 + \| h_i \|_w^2) \| g_i-h_i \|_w^2,
\end{align*}
which gives us the desired estimation (\ref{locally-Lipschitz}). For all $h \in \Pi_{i=1}^n A_{H_w^0}^{\Psi_i}$ we have $\Sigma h \in H_w^0$ by (\ref{locally-Lipschitz}) and (\ref{limit-cap-sigma-zero}), and the map $\Sigma : \Pi_{i=1}^n A_{H_w^0}^{\Psi_i} \rightarrow H_w^0$ is locally Lipschitz continuous by (\ref{locally-Lipschitz}).
\end{proof}

By Proposition \ref{prop-const-C5} we can, for given volatilities $\sigma_i : \mathbb{R}_+ \times H_w \rightarrow H_w^0$ satisfying $\sigma_i(\mathbb{R}_+ \times H_w) \subset A_{H_w^0}^{\Psi_i}$ for $i = 1,\ldots,n$, define the drift term $\alpha_{\rm HJM}$ according to the HJM drift condition (\ref{hjm-drift}) by
\begin{align}\label{alpha-HJM-Hw}
\alpha_{\rm HJM} := \Sigma \circ \sigma : \mathbb{R}_+ \times H_w \rightarrow H_w^0,
\end{align}
where $\sigma = (\sigma_1,\ldots,\sigma_n)$.

Now, we are ready to establish the existence of L\'evy term structure models on the space $H_w$ of forward curves.

\begin{theorem}\label{thm-main-on-hw-spaces}
Let $\sigma_i : \mathbb{R}_+ \times H_w \rightarrow H_w^0$ be continuous and satisfying $\sigma_i(\mathbb{R}_+ \times H_w) \subset A_{H_w^0}^{\Psi_i}$ for $i = 1,\ldots,n$. Assume there are $M,L \geq 0$ such that for all $i = 1,\ldots,n$ and $t \in \mathbb{R}_+$ we have
\begin{align*}
\| \sigma_i(t,h) \|_w &\leq M, \quad h \in H_w
\\ \| \sigma_i(t,h_1) - \sigma_i(t,h_2) \|_w &\leq L \| h_1 - h_2 \|_w, \quad h_1,h_2 \in H_w. 
\end{align*}
Then, for each $h_0 \in H_w$, there exists a unique mild and a unique weak adapted c\`adl\`ag solution $(r_t)_{t \geq 0}$ to (\ref{hjmX}) with $r_0 = h_0$ satisfying
\begin{align}\label{sup-L2-finite-Hw}
\mathbb{E} \bigg[ \sup_{t \in [0,T]} \| r_t \|_w^2 \bigg] < \infty \quad \text{for all $T > 0$.}
\end{align}
\end{theorem}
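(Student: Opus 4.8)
The plan is to reduce the statement to the general existence, uniqueness and regularity result Theorem~\ref{thm-sde-solution} for Hilbert space valued stochastic equations, applied on $H=H_w$ equipped with the norm $\|\cdot\|_w$, with $(S_t)$ as driving $C_0$-semigroup and $\frac{d}{dx}$ as its generator. Accordingly, what has to be checked is: (a) $(S_t)$ is a pseudo-contractive $C_0$-semigroup on $(H_w,\|\cdot\|_w)$; (b) each $\sigma_i:\mathbb{R}_+\times H_w\to H_w$ is continuous, of linear growth, and globally Lipschitz in the second variable uniformly in $t$; (c) the drift $\alpha_{\rm HJM}$, defined by \eqref{alpha-HJM-Hw}, enjoys the same three properties.

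Ad (a): By Proposition~\ref{prop-pseudo-contractive}, $(S_t)$ is a $C_0$-semigroup with generator $\frac{d}{dx}$, and the computation used there, which only invokes the monotonicity of $w$, in fact gives for \emph{every} $h\in H_w$ the estimate $\|S_t h\|_w^2 = h(\infty)^2 + \int_{\mathbb{R}_+}|h'(x+t)|^2 w(x)\,dx \le h(\infty)^2 + \int_{\mathbb{R}_+}|h'(x)|^2 w(x)\,dx = \|h\|_w^2$, so $(S_t)$ is contractive, hence pseudo-contractive, on all of $H_w$. (If one wishes to work on the closed subspace $H_w^0$, where Proposition~\ref{prop-pseudo-contractive} already gives contractivity, one decomposes the initial datum as $h_0=(h_0-h_0(\infty))+h_0(\infty)$, subtracts the constant $h_0(\infty)$ — which is fixed by $(S_t)$ and annihilated by $\frac{d}{dx}$ — and solves for the $H_w^0$-valued part, using that $\alpha_{\rm HJM}$ and the $\sigma_i$ take values in $H_w^0$; the two routes produce the same solution.) Ad (b): this is precisely the standing hypothesis of the theorem, the uniform bound $\|\sigma_i(t,h)\|_w\le M$ providing (sub)linear growth.

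Ad (c): Since $\sigma_i(\mathbb{R}_+\times H_w)\subset A_{H_w^0}^{\Psi_i}$, the drift $\alpha_{\rm HJM}=\Sigma\circ\sigma$ is well defined by \eqref{alpha-HJM-Hw}, takes values in $H_w^0\subset H_w$, and is continuous because $\Sigma$ is continuous by Proposition~\ref{prop-const-C5} and $\sigma$ is continuous. The essential point is that the boundedness of $\sigma$ upgrades the \emph{local} Lipschitz estimate \eqref{locally-Lipschitz} to a \emph{global} one: for arguments $g_i,h_i$ lying in the image of $\sigma_i$ one has $\|g_i\|_w,\|h_i\|_w\le M$, hence $1+\|h_i\|_w+\|g_i\|_w+\|g_i\|_w^2\le(1+M)^2$, so \eqref{locally-Lipschitz} becomes $\|\Sigma g-\Sigma h\|_w\le C_5(1+M)^2\sum_{i=1}^n\|g_i-h_i\|_w$. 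Composing with $\sigma$ yields, for all $t\in\mathbb{R}_+$ and $h_1,h_2\in H_w$, $\|\alpha_{\rm HJM}(t,h_1)-\alpha_{\rm HJM}(t,h_2)\|_w\le C_5(1+M)^2\sum_{i=1}^n\|\sigma_i(t,h_1)-\sigma_i(t,h_2)\|_w\le nC_5 L(1+M)^2\|h_1-h_2\|_w$, i.e.\ $\alpha_{\rm HJM}$ is globally Lipschitz uniformly in $t$. Since $0\in\Pi_{i=1}^n A_{H_w^0}^{\Psi_i}$ (zero being an inner point of each $[c_i,d_i]$) and $\Sigma 0=0$ by \eqref{def-of-capital-sigma}, we also get $\|\alpha_{\rm HJM}(t,h)\|_w=\|\Sigma\sigma(t,h)-\Sigma 0\|_w\le nC_5 M(1+M)^2$, so $\alpha_{\rm HJM}$ is even bounded, in particular of linear growth.

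With (a)--(c) established, Theorem~\ref{thm-sde-solution} applies and delivers, for each $h_0\in H_w$, a unique mild and a unique weak adapted c\`adl\`ag solution $(r_t)_{t\ge0}$ of \eqref{hjmX} with $r_0=h_0$ (the c\`adl\`ag property and the identification with the weak solution being the refinements of van Gaans' result proved in the appendix), satisfying $\mathbb{E}[\sup_{t\in[0,T]}\|r_t\|_w^2]<\infty$ for all $T>0$, which is \eqref{sup-L2-finite-Hw}; since point evaluations on $H_w$ are continuous linear functionals, $(r_t)$ moreover fulfils \eqref{hjmmild} pointwise in $x$. The only genuine work is step (c): recognising that the boundedness hypothesis on $\sigma$ is exactly what is needed to absorb the quadratically growing factor in Proposition~\ref{prop-const-C5} and thereby obtain a globally Lipschitz, bounded HJM drift — everything else is a matter of matching the hypotheses of the abstract theorem.
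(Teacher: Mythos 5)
Your proposal is correct and follows essentially the same route as the paper: both reduce the statement to Theorem~\ref{thm-sde-solution} with $H=H_w$ and the closed subspace $H_w^0$ (Proposition~\ref{prop-pseudo-contractive}), and both obtain the global Lipschitz bound $C_5(1+M)^2nL$ for $\alpha_{\rm HJM}=\Sigma\circ\sigma$ by using the uniform bound $\|\sigma_i(t,h)\|_w\le M$ to absorb the quadratic factor in the local Lipschitz estimate \eqref{locally-Lipschitz} of Proposition~\ref{prop-const-C5}. Your additional observations (contractivity of $(S_t)$ on all of $H_w$ in the Tehranchi norm, boundedness of $\alpha_{\rm HJM}$ via $\Sigma 0=0$) are correct but not needed.
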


\begin{proof}
By Proposition \ref{prop-const-C5}, $\alpha_{\rm HJM}$ maps into $H_w^0$, see (\ref{alpha-HJM-Hw}).
Since $\sigma = (\sigma_1,\ldots,\sigma_n) : \mathbb{R}_+ \times H_w \rightarrow \Pi_{i=1}^n A_{H_w^0}^{\Psi_i}$ is continuous by assumption and $\Sigma : \Pi_{i=1}^n A_{H_w^0}^{\Psi_i} \rightarrow H_w^0$ is continuous by Proposition \ref{prop-const-C5}, it follows that $\alpha_{\rm HJM} = \Sigma \circ \sigma$ is continuous. Moreover, by estimate (\ref{locally-Lipschitz}), we obtain for all $t \in \mathbb{R}_+$ and $h_1,h_2 \in H_w$ the estimation
\begin{align*}
\| \alpha_{\rm HJM}(t,h_1) - \alpha_{\rm HJM}(t,h_2) \|_w &\leq C_5 (1+M)^2 \sum_{i=1}^n \| \sigma_i(t,h_1) - \sigma_i(t,h_2) \|_w
\\ &\leq C_5 ( 1 + M )^2 n L \| h_1 - h_2 \|_w.
\end{align*}
Taking also into account Proposition \ref{prop-pseudo-contractive}, applying Theorem \ref{thm-sde-solution} finishes the proof.
\end{proof}

As an immediate consequence, we get the existence of L\'evy term structure models with constant direction volatilities.

\begin{corollary}\label{cor-main-on-hw-spaces}
Let $\sigma_i : \mathbb{R}_+ \times H_w \rightarrow H_w^0$ be defined by $\sigma_i(t,r) = \sigma_i(r) = \varphi_i(r) \lambda_i$, where $\lambda_i \in A_{H_w^0}^{\Psi_i}$ and $\varphi_i : H_w \rightarrow [0,1]$ for $i = 1,\ldots,n$.
Assume there is $L \geq 0$ such that for all $i = 1,\ldots,n$ we have
\begin{align*}
| \varphi_i(h_1) - \varphi_i(h_2) | &\leq L \| h_1 - h_2 \|_w, \quad h_1,h_2 \in H_w. 
\end{align*}
Then, for each $h_0 \in H_w$, there exists a unique mild and a unique weak adapted c\`adl\`ag solution $(r_t)_{t \geq 0}$ to (\ref{hjmX}) with $r_0 = h_0$ satisfying (\ref{sup-L2-finite-Hw}).
\end{corollary}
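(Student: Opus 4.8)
The plan is to check that the volatilities $\sigma_i(t,r) = \varphi_i(r)\lambda_i$ meet every hypothesis of Theorem~\ref{thm-main-on-hw-spaces} and then simply quote that theorem.

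First I would verify the structural inclusion $\sigma_i(\mathbb{R}_+ \times H_w) \subset A_{H_w^0}^{\Psi_i}$. Since $\lambda_i \in H_w^0$ and $H_w^0$ is a linear subspace, $\varphi_i(r)\lambda_i \in H_w^0$ for every $r$. For the constraint defining $A_{H_w^0}^{\Psi_i}$, note that $-T(\varphi_i(r)\lambda_i)(x) = \varphi_i(r)\bigl(-T\lambda_i(x)\bigr)$, and $-T\lambda_i(x) \in [c_i,d_i]$ because $\lambda_i \in A_{H_w^0}^{\Psi_i}$. Since $[c_i,d_i]$ contains $0$ as an inner point and $\varphi_i(r) \in [0,1]$, the number $\varphi_i(r)\bigl(-T\lambda_i(x)\bigr)$ lies between $0$ and $-T\lambda_i(x)$, hence again in $[c_i,d_i]$. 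This yields the required inclusion.

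Next I would check continuity and the two estimates. As $\varphi_i$ is Lipschitz (hence continuous) and $\sigma_i(t,r)=\varphi_i(r)\lambda_i$ does not depend on $t$, the map $(t,r)\mapsto\sigma_i(t,r)$ is continuous from $\mathbb{R}_+\times H_w$ into $H_w$, and because its image lies in the closed subspace $H_w^0$ it is continuous into $H_w^0$. Setting $M := \max_{1\le i\le n}\|\lambda_i\|_w$, the bound $\|\sigma_i(t,h)\|_w = \varphi_i(h)\|\lambda_i\|_w \le M$ holds since $\varphi_i$ is $[0,1]$-valued, and $\|\sigma_i(t,h_1)-\sigma_i(t,h_2)\|_w = |\varphi_i(h_1)-\varphi_i(h_2)|\,\|\lambda_i\|_w \le (LM)\|h_1-h_2\|_w$ by the assumed Lipschitz property of $\varphi_i$.

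With all hypotheses of Theorem~\ref{thm-main-on-hw-spaces} in place (with the constant $M$ above and Lipschitz constant $LM$), the existence and uniqueness of the mild and weak adapted c\`adl\`ag solution satisfying (\ref{sup-L2-finite-Hw}) follows at once. There is no genuine obstacle here; the only step requiring a moment's thought is the verification that $\varphi_i(r)\lambda_i$ remains in $A_{H_w^0}^{\Psi_i}$, which rests on $0$ being an inner point of $[c_i,d_i]$ together with $\varphi_i$ taking values in $[0,1]$.
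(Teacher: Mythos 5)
Your proposal is correct and follows the same route as the paper: verify the uniform bound $\|\sigma_i(h)\|_w \le \|\lambda_i\|_w$ and the Lipschitz estimate $\|\sigma_i(h_1)-\sigma_i(h_2)\|_w \le L\|\lambda_i\|_w\|h_1-h_2\|_w$, then invoke Theorem~\ref{thm-main-on-hw-spaces}. Your explicit check that $\varphi_i(r)\lambda_i$ stays in $A_{H_w^0}^{\Psi_i}$ (using $\varphi_i\in[0,1]$ and $0\in[c_i,d_i]$) is a detail the paper leaves implicit, and it is verified correctly.
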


\begin{proof}
For all $h_1, h_2 \in H_w$ and all $i = 1,\ldots,n$ we get
\begin{align*}
\| \sigma_i(h_1) - \sigma_i(h_2) \|_w \leq L \| \lambda_i \|_w \| h_1 - h_2 \|_w.
\end{align*}
Also observing that $\| \sigma_i(h) \|_w \leq \| \lambda_i \|_w$ for all $h \in H_w$ and $i = 1,\ldots,n$, the proof is a straightforward consequence of Theorem \ref{thm-main-on-hw-spaces}.
\end{proof}

The only assumption on the driving L\'evy processes $X^1,\ldots,X^n$, in order to apply the previous results, is the exponential moments condition (\ref{canonical-cond0}). It is clearly satisfied for Brownian motions and Poisson processes.

There are also several purely discontinuous L\'evy processes fulfilling (\ref{canonical-cond0}), for instance generalized hyperbolic processes, which have been introduced by Barndorff-Nielsen \cite{Barndorff-Nielsen}, and their subclasses, namely the normal inverse Gaussian and hyperbolic processes. They have been applied to finance by Eberlein and co-authors in a series of papers, e.g. in \cite{Eberlein-Keller}.

Other purely discontinuous L\'evy processes satisfying (\ref{canonical-cond0}) are the generalized tempered stable processes, see \cite[Sec. 4.5]{Cont-Tankov}, which include Variance Gamma processes \cite{Madan}, CGMY processes \cite{CGMY} and bilateral Gamma processes \cite{Kuechler-Tappe}.

Consequently, Theorem \ref{thm-main-on-hw-spaces} applies to term structure models driven by any of the above types of L\'evy processes.

\section{Conclusion}\label{sec-conclusion}

We have established the existence of L\'evy term structure models on two spaces of forward curves, namely in Section \ref{sec-strong} on the Bj\"ork--Svensson space $H_{\beta,\gamma}$, on which $\frac{d}{dx}$ is a bounded linear operator, and in Section \ref{sec-mild} on the larger space $H_w$, where $\frac{d}{dx}$ becomes unbounded.

In Section \ref{sec-strong} it turned out that $H_{\beta,\gamma}$ is too small to assert that $\alpha_{\rm HJM}$ given by the HJM drift-condition (\ref{hjm-drift}) lies in $H_{\beta,\gamma}$. However, for certain jump-diffusion models we have established existence and uniqueness on this space, see Proposition \ref{prop-ex-1} and Proposition \ref{prop-ex-2}.

Our main results of Section \ref{sec-mild} (Theorem \ref{thm-main-on-hw-spaces} and Corollary \ref{cor-main-on-hw-spaces}), where we work on the larger space $H_w$, are applicable for a large range of driving L\'evy processes, including mixtures of Brownian motion and Poisson processes, and purely discontinuous L\'evy processes such as generalized hyperbolic processes and generalized tempered stable processes as well as several subclasses.

The existence results for L\'evy term structure models are based on a general result for Hilbert space valued stochastic equations, see Theorem \ref{thm-sde-solution} from the appendix. This result relies on two works of van Gaans \cite{Onno,Onno-Levy}. In order to make \cite[Thm. 4.1]{Onno-Levy} applicable for financial applications, where one is in particular interested in a solution with c\`adl\`ag trajectories, we have shown in the appendix that the stochastic integral constructed in van Gaans \cite{Onno-Levy} has a c\`adl\`ag modification and we have analyzed when it coincides with the usual It\^o-integral.

\begin{appendix}

\section{Overview and notation}\label{app-overview}

The goal of Appendix \ref{app-overview} -- Appendix \ref{app-SDE} is to provide an existence result for solutions of infinite dimensional stochastic differential equations, which is required in order to establish the existence of L\'evy term structure models.

We intend to apply a result of van Gaans \cite[Thm. 4.1]{Onno-Levy}. However, as we shall see in Section \ref{app-integration}, the stochastic integral $\text{(G-)}\int_0^t \Phi_s dX_s$ defined in van Gaans \cite{Onno-Levy} is not consistent with the usual It\^o-integral $\int_0^t \Phi_s dX_s$, which is used for financial modelling. This matters in view of applications to finance, because, as we have argued at the end of Section \ref{sec-intro}, we are in particular interested in a solution process with c\`adl\`ag paths.

In order to make \cite[Thm. 4.1]{Onno-Levy} applicable, we review the stochastic integral, which is defined in van Gaans \cite{Onno-Levy}, in Appendix \ref{app-integration}, show that it always possesses a c\`adl\`ag modification and analyze when it coincides with the usual It\^o-integral. In Appendix \ref{app-SDE}, we obtain the desired existence result concerning mild solutions, Theorem \ref{thm-sde-solution}, by applying \cite[Thm. 4.1]{Onno-Levy}. Using our findings of Appendix \ref{app-integration}, we additionally show that the solution has a c\`adl\`ag modification and that it is also a weak solution.

Let $H$ denote a separable Hilbert space with inner product $\langle \cdot,\cdot \rangle_H$ and associated norm $\| \cdot \|_H$. If there is no ambiguity, we shall simply write $\langle \cdot,\cdot \rangle$ and $\| \cdot \|$.

Let $T>0$ be a finite time horizon. We denote by $C_{\rm ad}([0,T];L^2(\Omega;H))$ the space of all continuous mappings $\Phi : [0,T] \rightarrow L^2(\Omega;H)$ which are also adapted.

For two stochastic processes $(\Phi_t)_{t \in [0,T]}$ and $(\Psi)_{t \in [0,T]}$ we say that $\Psi$ is a \textit{modification} of $\Phi$ if $\mathbb{P}(\Phi_t = \Psi_t) = 1$ for all $t \in [0,T]$.

An adapted $H$-valued process $(\Phi_t)_{t \in [0,T]}$ is called a \textit{martingale} if
\begin{itemize}
\item $\mathbb{E} \left[ \| \Phi_t \| \right] < \infty$ for all $t \in [0,T]$;

\item $\mathbb{E}[\Phi_t \, | \, \mathcal{F}_s] = \Phi_s$ ($\mathbb{P}$ -- a.s.) for all $0 \leq s \leq t \leq T$.
 
\end{itemize}
For the notion of conditional expectation of random variables having values in a separable Banach space, we refer to \cite[Sec. 1.3]{Da_Prato}.

An indispensable tool will be \textit{Doob's martingale inequality}
\begin{align}\label{doob-inequality}
\mathbb{E} \bigg[ \sup_{t \in [0,T]} \| \Phi_t \|^2 \bigg] \leq 4 \sup_{t \in [0,T]} \mathbb{E} \left[ \| \Phi_t \|^2 \right] = 4 \mathbb{E} \left[ \| \Phi_T \|^2 \right],
\end{align}
valid for every $H$-valued c\`adl\`ag martingale $\Phi$, which is a consequence of Thm. 3.8 and Prop. 3.7 in \cite{Da_Prato}.

\section{Stochastic integration}\label{app-integration}


Let $M$ be a real-valued L\'evy martingale satisfying $\mathbb{E}[M_1^2] < \infty$. We recall how in this case the stochastic integral $\text{(G-)} \int_0^t \Phi_s dM_s$, in the sense of van Gaans \cite[Sec. 3]{Onno-Levy}, is defined for $\Phi \in C_{\rm ad}([0,T];L^2(\Omega;H))$.

\begin{lemma}\label{lemma-approx-dX-int}
Let $\Phi \in C_{\rm ad}([0,T];L^2(\Omega;H))$. For each $t \in [0,T]$, there exists a unique random variable $Y_t \in L^2(\Omega;H)$ such that for every $\varepsilon > 0$ there exists $\delta > 0$ such that
\begin{align}\label{def-van-Gaans-int}
\mathbb{E} \left[ \bigg\| Y_t - \sum_{i=0}^{n-1} ( M_{t_{i+1}} - M_{t_i} ) \Phi_{t_i} \bigg\|^2 \right] < \varepsilon
\end{align}
for every partition $0 = t_0 < t_1 < \ldots < t_n = t$ with $\sup_{i=0,\ldots,n-1} |t_{i+1} - t_i| < \delta$.
\end{lemma}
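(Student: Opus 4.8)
The plan is to show that the Riemann--Stieltjes-type sums in \eqref{def-van-Gaans-int} form a Cauchy net in $L^2(\Omega;H)$ as the mesh of the partition tends to zero; since $L^2(\Omega;H)$ is complete, this net then has a limit $Y_t$, and the limit is automatically unique. The key structural fact to exploit is that $M$ is a square-integrable L\'evy martingale, so it has \emph{stationary and independent increments}, and hence $\mathbb{E}[(M_{v}-M_{u})^2] = \kappa (v-u)$ for some constant $\kappa = \mathbb{E}[M_1^2] \geq 0$ and all $u \leq v$. Combined with adaptedness of $\Phi$, this gives the orthogonality relation: for a step process built on a partition, the increments $(M_{t_{i+1}}-M_{t_i})\Phi_{t_i}$ are pairwise uncorrelated in $L^2(\Omega;H)$, because for $i<j$ the increment $M_{t_{j+1}}-M_{t_j}$ is independent of $\mathcal{F}_{t_j} \supset \sigma(\Phi_{t_i},\Phi_{t_j}, M_{t_{i+1}}-M_{t_i})$ and has mean zero. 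This yields an It\^o-type isometry for the sums: the $L^2$-norm of a sum equals $\kappa \sum_i (t_{i+1}-t_i)\,\mathbb{E}\|\Phi_{t_i}\|^2$.

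The main work is the Cauchy estimate. Given two partitions, I would pass to their common refinement and estimate the $L^2$-distance between the sum over a partition and the sum over a refinement. Writing the difference again as a sum of martingale-increment terms — now of the form $(M_{s_{k+1}}-M_{s_k})(\Phi_{s_k}-\Phi_{t_{i(k)}})$ where $t_{i(k)}$ is the left endpoint of the coarse subinterval containing $s_k$ — and invoking the same orthogonality, the squared $L^2$-norm becomes $\kappa \sum_k (s_{k+1}-s_k)\,\mathbb{E}\|\Phi_{s_k}-\Phi_{t_{i(k)}}\|^2$. Here I use that $\Phi \in C_{\rm ad}([0,T];L^2(\Omega;H))$ is continuous, hence \emph{uniformly continuous}, as a map from the compact interval $[0,T]$ into $L^2(\Omega;H)$: for every $\eta>0$ there is $\delta>0$ with $\mathbb{E}\|\Phi_s-\Phi_t\|^2 < \eta$ whenever $|s-t|<\delta$. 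So if both partitions have mesh below $\delta$, each term satisfies $\mathbb{E}\|\Phi_{s_k}-\Phi_{t_{i(k)}}\|^2 < \eta$, and the total is bounded by $\kappa T \eta$. Choosing $\eta$ small enough and a general $L^2$ triangle-inequality argument (comparing two arbitrary partitions via their common refinement) gives the Cauchy property, which simultaneously produces $Y_t$ and verifies the approximation property \eqref{def-van-Gaans-int} by taking the limit along any sequence of refining partitions with mesh $\to 0$.

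For uniqueness: if $Y_t$ and $Y_t'$ both satisfy \eqref{def-van-Gaans-int}, then for any $\varepsilon>0$ one can pick a single partition fine enough that both are within $\varepsilon$ of the same Riemann sum, so $\mathbb{E}\|Y_t - Y_t'\|^2 \leq 2\varepsilon + 2\varepsilon$ by the elementary inequality $\|a-b\|^2 \leq 2\|a-c\|^2 + 2\|c-b\|^2$; letting $\varepsilon\downarrow 0$ forces $Y_t = Y_t'$ in $L^2(\Omega;H)$.

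The step I expect to be the main obstacle is the bookkeeping in the refinement estimate — correctly identifying which coarse-partition point $t_{i(k)}$ each fine increment should be compared against, and carefully checking that the cross terms vanish (i.e. that the relevant increments of $M$ are orthogonal to the $\mathcal{F}$-measurable coefficients). Once the orthogonality is set up cleanly, everything reduces to the uniform continuity of $\Phi$ in $L^2$, which is immediate from compactness of $[0,T]$; there is no genuine analytic difficulty, only care with indices.
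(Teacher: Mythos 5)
Your argument is correct and is essentially the standard proof: the paper itself only cites van Gaans \cite[Prop.\ 3.2.1]{Onno-Levy} here, and that proposition is proved by exactly your Cauchy-net argument, combining the It\^o-type isometry for adapted step coefficients with the uniform $L^2$-continuity of $\Phi$ on $[0,T]$. Note also that the isometry you derive from independent, mean-zero, square-integrable increments is the same identity the paper establishes later as Lemma~\ref{lemma-covariation}, since $\langle M,M\rangle_t = t\bigl(c+\int_{\mathbb{R}} x^2F(dx)\bigr) = t\,\mathbb{E}[M_1^2]$; the paper reuses it in Theorem~\ref{thm-cadlag-mod} for the dyadic partitions in just the way you describe.
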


\begin{proof}
The assertion is a consequence of \cite[Prop. 3.2.1]{Onno-Levy}.
\end{proof}

\begin{definition}\label{def-onno-int}
Let $\Phi \in C_{\rm ad}([0,T];L^2(\Omega;H))$. Then the stochastic integral $Y_t = \text{{\rm (G-)}}\int_0^t \Phi_s dM_s$, $t \in [0,T]$ is the stochastic process $Y = (Y_t)_{t \in [0,T]}$ where every $Y_t$ is the unique element from $L^2(\Omega;H)$ such that (\ref{def-van-Gaans-int}) is valid.
\end{definition}

We observe that for every $t \in [0,T]$ the stochastic integral ${\rm \text{(G-)}}\int_0^t \Phi_s dM_s$ is only determined up to a $\mathbb{P}$-null set. With regard to our applications to finance it arises the question if we can find a modification of the stochastic integral with c\`adl\`ag paths, a question which is not treated in \cite{Onno-Levy}.

Let $\Phi \in C_{\rm ad}([0,T];L^2(\Omega;H))$. We define 
\begin{align}\label{def-I}
I_t(\Phi) := \text{(G-)} \int_0^t \Phi_s dM_s, \quad t \in [0,T]
\end{align}
and the sequence of c\`adl\`ag adapted processes
\begin{align}\label{def-I-n}
I^n(\Phi) := \sum_{i=0}^{2^n - 1} (M^{t_{i+1}^n} - M^{t_i^n}) \Phi_{t_i^n}, \quad n \in \mathbb{N}_0
\end{align}
where we set for $n \in \mathbb{N}_0$ and $i \in \{ 0,\ldots,2^n \}$
\begin{align}\label{def-t-i-n}
t_i^n := i 2^{-n} T,
\end{align} 
that is, we have a sequence of dyadic decompositions of the interval $[0,T]$.
Note that each $I^n(\Phi)$ is a martingale and that for each $t \in [0,T]$ we have $I_t^n(\Phi) \rightarrow I_t(\Phi)$ in $L^2(\Omega;H)$ by Lemma \ref{lemma-approx-dX-int}.

We let $\mathcal{M}^2$ be the linear space of all c\`adl\`ag $H$-valued martingales $(\Phi_t)_{t \in [0,T]}$, which are square-integrable, i.e. $\mathbb{E} \left[ \| \Phi_t \|^2 \right] < \infty$ for all $t \in [0,T]$, equipped with the norm
\begin{align*}
\| \Phi \|_2 = \mathbb{E} \bigg[ \sup_{t \in [0,T]} \| \Phi_t \|^2 \bigg]^{\frac{1}{2}}.
\end{align*}
Note that by Doob's martingale inequality (\ref{doob-inequality}), $\| \Phi \|_2$ is finite for every $\Phi \in \mathcal{M}^2$, and therefore $\| \cdot \|_2$ defines a norm on the linear space $\mathcal{M}^2$. For the next result, we can almost literally follow the proof of \cite[Prop. 3.9]{Da_Prato}, which considers the continuous time case. For convenience of the reader, we provide the proof here.

\begin{proposition}\label{prop-m2-banach}
The normed space $(\mathcal{M}^2,\| \cdot \|_2)$ is a Banach space.
\end{proposition}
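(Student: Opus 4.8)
The plan is to establish completeness of $(\mathcal{M}^2, \|\cdot\|_2)$ by the standard argument: take a Cauchy sequence, extract a fast subsequence, show it converges pointwise in $L^2(\Omega;H)$ to a candidate limit, upgrade this to uniform convergence in $t$ via Doob's inequality, and finally verify the limit object is again a c\`adl\`ag square-integrable martingale. Since $\|\cdot\|_2$ is already known (from the preceding discussion) to be a genuine norm on $\mathcal{M}^2$, only the completeness needs proof.

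First I would let $(\Phi^{(k)})_{k\in\mathbb{N}}$ be a $\|\cdot\|_2$-Cauchy sequence in $\mathcal{M}^2$ and pass to a subsequence, still denoted $(\Phi^{(k)})$, with $\|\Phi^{(k+1)} - \Phi^{(k)}\|_2 \leq 2^{-k}$. Applying Doob's martingale inequality \eqref{doob-inequality} to the c\`adl\`ag martingale $\Phi^{(k+1)} - \Phi^{(k)}$ shows that $\mathbb{E}[\sup_{t\in[0,T]}\|\Phi_t^{(k+1)} - \Phi_t^{(k)}\|]$ is summable in $k$ (using $\mathbb{E}[Z] \leq \mathbb{E}[Z^2]^{1/2}$), so that $\sum_k \sup_{t\in[0,T]}\|\Phi_t^{(k+1)} - \Phi_t^{(k)}\|$ is finite $\mathbb{P}$-a.s. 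Hence, off a $\mathbb{P}$-null set $N$, the sequence of paths $t\mapsto \Phi_t^{(k)}(\omega)$ converges uniformly on $[0,T]$ to a limit path, which I call $t\mapsto\Phi_t(\omega)$; I set $\Phi_t := 0$ on $N$. A uniform limit of c\`adl\`ag functions is c\`adl\`ag, so $\Phi$ has c\`adl\`ag paths, and it is adapted as an a.s.\ pointwise limit of adapted processes (completeness of the filtration gives measurability of the limit).

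Next I would check $\Phi\in\mathcal{M}^2$ and $\|\Phi^{(k)} - \Phi\|_2 \to 0$. Fatou's lemma applied along the a.s.\ convergent subsequence gives $\mathbb{E}[\sup_{t}\|\Phi_t - \Phi_t^{(k)}\|^2] \leq \liminf_{m}\mathbb{E}[\sup_t \|\Phi_t^{(m)} - \Phi_t^{(k)}\|^2] = \liminf_m \|\Phi^{(m)} - \Phi^{(k)}\|_2^2$, which tends to $0$ as $k\to\infty$ by the Cauchy property; in particular $\Phi$ has finite $\|\cdot\|_2$-norm. For the martingale property, fix $0\le s\le t\le T$; since $\Phi_t^{(k)}\to\Phi_t$ in $L^2(\Omega;H)$ (hence in $L^1(\Omega;H)$) and conditional expectation is a contraction on $L^1(\Omega;H)$, we get $\mathbb{E}[\Phi_t\,|\,\mathcal{F}_s] = \lim_k \mathbb{E}[\Phi_t^{(k)}\,|\,\mathcal{F}_s] = \lim_k \Phi_s^{(k)} = \Phi_s$ in $L^1(\Omega;H)$, so the martingale identity holds $\mathbb{P}$-a.s. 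Integrability $\mathbb{E}[\|\Phi_t\|]<\infty$ is immediate from $\Phi\in L^2$. Finally, a Cauchy sequence with a convergent subsequence converges, so the original Cauchy sequence converges to $\Phi$ in $\mathcal{M}^2$, proving completeness.

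I expect the only mildly delicate point to be bookkeeping the passage from the fast subsequence back to the full sequence and making sure each claimed convergence is in the right mode ($L^2(\Omega;H)$ for the Fatou step, $L^1(\Omega;H)$ for the conditional-expectation step, a.s.\ uniform on $[0,T]$ for the c\`adl\`ag-path step); there is no genuine obstacle, as this is the Hilbert-space-valued, discontinuous-time analogue of \cite[Prop. 3.9]{Da_Prato} and Doob's inequality \eqref{doob-inequality} supplies exactly the estimate that makes the path-wise uniform convergence work.
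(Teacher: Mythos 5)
Your proof is correct and follows essentially the same route as the paper's: extract a fast subsequence, obtain almost sure uniform convergence of the c\`adl\`ag paths, identify the limit as an adapted c\`adl\`ag square-integrable martingale, and upgrade to convergence in $\|\cdot\|_2$. The only cosmetic differences are that the paper gets the a.s.\ uniformly convergent subsequence via Markov's inequality and Borel--Cantelli and closes with Doob's inequality applied to $\Phi-\Phi^n$, whereas you use summability of $\mathbb{E}[\sup_t\|\Phi^{(k+1)}_t-\Phi^{(k)}_t\|]$ (where the appeal to Doob is actually superfluous, since $\|\cdot\|_2$ already controls this quantity) and close with Fatou's lemma; both work.
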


\begin{proof}
Let $(\Phi^n)$ be a Cauchy sequence in $\mathcal{M}^2$, i.e. for every $\varepsilon > 0$ there is an index $n_0 \in \mathbb{N}$ such that
\begin{align}\label{Cauchy-in-M2}
\mathbb{E} \bigg[ \sup_{t \in [0,T]} \| \Phi_t^n - \Phi_t^m \|^2 \bigg] < \varepsilon \quad \text{for all $n,m \geq n_0$.}
\end{align}
By the Markov inequality, there exists a subsequence $(\Phi^{n_k})$ such that
\begin{align*}
\mathbb{P} \bigg( \sup_{t \in [0,T]} \| \Phi_t^{n_{k+1}} - \Phi_t^{n_k} \| \geq 2^{-k} \bigg) \leq 2^{-k} \quad \text{for all $k \in \mathbb{N}$.}
\end{align*}
The Borel-Cantelli lemma implies that for almost all $\omega \in \Omega$ the sequence $(\Phi^{n_k}(\omega))$ is a Cauchy sequence in the space of c\`adl\`ag functions on $[0,T]$ equipped with the supremum-norm. Therefore, $(\Phi^{n_k})$ converges $\mathbb{P}$--a.s. to an adapted process $\Phi$, uniformly on $[0,T]$. Hence, $\Phi$ is c\`adl\`ag. 

For each $t \in [0,T]$, the convergence $\Phi_t^{n_k} \rightarrow \Phi_t$ is valid in $L^2(\Omega;H)$, because $(\Phi_t^n)$ is a Cauchy sequence in $L^2(\Omega;H)$ by (\ref{Cauchy-in-M2}). For $0 \leq s \leq t \leq T$ and $k \in \mathbb{N}$ we have
$\mathbb{E}[\Phi_t^{n_k} \,|\, \mathcal{F}_s] = \Phi_s^{n_k}$ ($\mathbb{P}$--a.s.), implying that $\mathbb{E}[\Phi_t \,|\, \mathcal{F}_s] = \Phi_s$ ($\mathbb{P}$--a.s.). Consequently, $\Phi$ is a martingale, and by Doob's martingale inequality (\ref{doob-inequality}), we get
\begin{align*}
\mathbb{E} \bigg[ \sup_{t \in [0,T]} \| \Phi_t - \Phi_t^n \|^2 \bigg] \leq 4 \sup_{t \in [0,T]} \mathbb{E} \left[ \| \Phi_t - \Phi_t^n \|^2 \right] = 4 \mathbb{E} \left[ \| \Phi_T - \Phi_T^n \|^2 \right] \rightarrow 0
\end{align*}
by (\ref{Cauchy-in-M2}) and completeness of $L^2(\Omega;H)$, i.e. $\Phi^n \rightarrow \Phi$ in $\mathcal{M}^2$.
\end{proof}

In the following auxiliary result, $\langle M,M \rangle$ denotes the \textit{predictable quadratic covariation} of the real-valued square-integrable martingale $M$, see \cite[Thm. I.4.2]{JS}.

\begin{lemma}\label{lemma-covariation}
Let $0 = t_0 < \ldots < t_n = T$ and $Z_i : \Omega \rightarrow H$ be $\mathcal{F}_{t_i}$-measurable for $i = 0,\ldots,n-1$. Then we have
\begin{align*}
\mathbb{E} \left[ \bigg\| \sum_{i=0}^{n-1} (M_{t_{i+1}} - M_{t_i}) Z_i \bigg\|^2 \right] = \mathbb{E} \left[ \sum_{i=0}^{n-1} (\langle M,M \rangle_{t_{i+1}} - \langle M,M \rangle_{t_i}) \| Z_i \|^2 \right].
\end{align*}
\end{lemma}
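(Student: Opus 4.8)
Lemma \ref{lemma-covariation} is an isometry-type identity for stochastic integrals of simple processes. Here is how I would prove it.

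The plan is to expand the square of the norm and exploit the martingale property of $M$ together with the $\mathcal{F}_{t_i}$-measurability of the $Z_i$. Writing $\Delta_i M := M_{t_{i+1}} - M_{t_i}$, one has
\[
\Big\| \sum_{i=0}^{n-1} \Delta_i M \, Z_i \Big\|^2 = \sum_{i=0}^{n-1} |\Delta_i M|^2 \| Z_i \|^2 + 2 \sum_{i < j} (\Delta_i M)(\Delta_j M) \langle Z_i, Z_j \rangle.
\]
First I would show the cross terms vanish in expectation. For $i < j$, the random variable $(\Delta_i M) Z_i$ and $Z_j$ are all $\mathcal{F}_{t_j}$-measurable, so by conditioning on $\mathcal{F}_{t_j}$ and using $\mathbb{E}[\Delta_j M \mid \mathcal{F}_{t_j}] = \mathbb{E}[M_{t_{j+1}} - M_{t_j} \mid \mathcal{F}_{t_j}] = 0$ (martingale property of $M$), one gets $\mathbb{E}[(\Delta_i M)(\Delta_j M)\langle Z_i, Z_j\rangle] = 0$. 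This requires a mild integrability check so that the conditioning is legitimate: since $M$ is square-integrable and each $Z_i \in L^2(\Omega;H)$, one should note that the products lie in $L^1$ — for instance by Cauchy--Schwarz if one additionally knows $\|Z_i\|$ bounded, or more carefully by a truncation/localization argument in $Z_i$ and then passing to the limit. (In the intended application the $Z_i$ will be bounded or one argues by approximation, so I would either add a boundedness hypothesis on the $Z_i$ or remark that the identity extends by monotone approximation.)

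Next I would handle the diagonal terms. For each fixed $i$, condition on $\mathcal{F}_{t_i}$: since $\|Z_i\|^2$ is $\mathcal{F}_{t_i}$-measurable,
\[
\mathbb{E}\big[ |\Delta_i M|^2 \| Z_i \|^2 \big] = \mathbb{E}\big[ \|Z_i\|^2 \, \mathbb{E}[ |M_{t_{i+1}} - M_{t_i}|^2 \mid \mathcal{F}_{t_i} ] \big].
\]
The final step is to identify $\mathbb{E}[ |M_{t_{i+1}} - M_{t_i}|^2 \mid \mathcal{F}_{t_i} ]$ with $\mathbb{E}[ \langle M,M\rangle_{t_{i+1}} - \langle M,M\rangle_{t_i} \mid \mathcal{F}_{t_i} ]$. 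This is exactly the defining property of the predictable quadratic variation: $M^2 - \langle M,M\rangle$ is a martingale (see \cite[Thm. I.4.2]{JS}), so $\mathbb{E}[M_{t_{i+1}}^2 - \langle M,M\rangle_{t_{i+1}} \mid \mathcal{F}_{t_i}] = M_{t_i}^2 - \langle M,M\rangle_{t_i}$, and combining this with $\mathbb{E}[M_{t_{i+1}} \mid \mathcal{F}_{t_i}] = M_{t_i}$ gives $\mathbb{E}[(M_{t_{i+1}} - M_{t_i})^2 \mid \mathcal{F}_{t_i}] = \mathbb{E}[\langle M,M\rangle_{t_{i+1}} - \langle M,M\rangle_{t_i} \mid \mathcal{F}_{t_i}]$. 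Plugging this back in, summing over $i$, and using the tower property yields the claimed identity.

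The main obstacle is the integrability bookkeeping needed to justify all the conditional-expectation manipulations when the $Z_i$ are merely in $L^2(\Omega;H)$ and $M$ is only $L^2$: a priori the products $|\Delta_i M|^2 \|Z_i\|^2$ need not be integrable. In the actual application (Lemma \ref{lemma-approx-dX-int}) the integrand process is continuous into $L^2(\Omega;H)$, and van Gaans works with a convenient dense class; so I expect the clean route is to first prove the identity for bounded $Z_i$, where everything is manifestly integrable, and then either restrict to that case (which suffices downstream) or extend by a standard truncation argument, noting both sides are continuous in $Z_i$ with respect to the appropriate $L^2$-type topology. Everything else is routine orthogonality of martingale increments.
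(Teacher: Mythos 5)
Your proof is correct and is essentially the paper's argument: the paper expands $\|\cdot\|^2$ in exactly the same way, observes that the cross terms and the compensated diagonal terms form a martingale, and then invokes uniqueness of the predictable quadratic variation --- which amounts to the same conditioning computation (orthogonality of increments plus the martingale property of $M^2-\langle M,M\rangle$) that you carry out directly in expectation. The integrability hedging you add at the end is not needed here: $M$ is a \emph{L\'evy} martingale, so the increment $M_{t_{j+1}}-M_{t_j}$ is independent of $\mathcal{F}_{t_j}$, hence for $Z_i\in L^2(\Omega;H)$ (the only case used downstream, in Theorem \ref{thm-cadlag-mod}) all the products are integrable by independence rather than by truncation, and in fact $\langle M,M\rangle_t=\bigl(c+\int_{\mathbb{R}}x^2F(dx)\bigr)t$ is deterministic, which is precisely how the identity is applied there.
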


\begin{proof}
By using the identity $\| x \|^2 = \langle x,x \rangle_H$, $x \in H$ we obtain that
\begin{align*}
&\bigg\| \sum_{i=0}^{n-1} (M^{t_{i+1}} - M^{t_i}) Z_i \bigg\|^2 - \sum_{i=0}^{n-1} (\langle M,M \rangle^{t_{i+1}} - \langle M,M \rangle^{t_i}) \| Z_i \|^2
\\ &= 2 \sum_{i,j=0 \atop i < j}^{n-1} (M^{t_{i+1}} - M^{t_i}) (M^{t_{j+1}} - M^{t_j}) \langle Z_i, Z_j \rangle_H
\\ &\quad + \sum_{i=0}^{n-1} \left[ ( M^{t_{i+1}} )^2 - \langle M,M \rangle^{t_{i+1}} - ( M^{t_i} )^2 + \langle M,M \rangle^{t_i} - 2 M_{t_i} (M^{t_{i+1}} - M^{t_i}) \right] \| Z_i \|^2
\end{align*}
is a martingale. Since $\sum_{i=0}^{n-1} (\langle M,M \rangle^{t_{i+1}} - \langle M,M \rangle^{t_i}) \| Z_i \|^2$ is continuous and therefore predictable, the uniqueness of the predictable quadratic covariation yields
\begin{align*}
\Big\langle \bigg\| \sum_{i=0}^{n-1} (M^{t_{i+1}} - M^{t_i}) Z_i \bigg\|, \bigg\| \sum_{i=0}^{n-1} (M^{t_{i+1}} - M^{t_i}) Z_i \bigg\| \Big\rangle = \sum_{i=0}^{n-1} (\langle M,M \rangle^{t_{i+1}} - \langle M,M \rangle^{t_i}) \| Z_i \|^2,
\end{align*}
proving the claimed equation.
\end{proof}

\begin{theorem}\label{thm-cadlag-mod}
Let $\Phi \in C_{\rm ad}([0,T];L^2(\Omega;H))$. Then $I(\Phi)$ has a modification which belongs to $\mathcal{M}^2$ and, moreover, $I^n(\Phi) \rightarrow I(\Phi)$ in $\mathcal{M}^2$.
\end{theorem}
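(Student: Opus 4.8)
The strategy is to show that the sequence of c\`adl\`ag martingales $I^n(\Phi)$, defined by the dyadic approximating sums \eqref{def-I-n}, is a Cauchy sequence in the Banach space $(\mathcal{M}^2, \|\cdot\|_2)$ established in Proposition \ref{prop-m2-banach}. Once Cauchy-ness is shown, completeness gives a limit $J(\Phi) \in \mathcal{M}^2$, and since $\mathcal{M}^2$-convergence implies $L^2(\Omega;H)$-convergence at each fixed $t$, while Lemma \ref{lemma-approx-dX-int} already gives $I^n_t(\Phi) \to I_t(\Phi)$ in $L^2(\Omega;H)$, we get $J_t(\Phi) = I_t(\Phi)$ $\mathbb{P}$-a.s.\ for every $t$; that is, $J(\Phi)$ is the desired c\`adl\`ag modification, and $I^n(\Phi) \to I(\Phi)$ in $\mathcal{M}^2$ is exactly the statement that $I^n(\Phi) \to J(\Phi)$ in $\mathcal{M}^2$.

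To prove the Cauchy property, fix $m < n$. The difference $I^n_T(\Phi) - I^m_T(\Phi)$ can be written, by refining the coarser dyadic partition into the finer one, as a single sum $\sum_k (M_{s_{k+1}} - M_{s_k}) Z_k$ over the fine partition, where $Z_k = \Phi_{s_k} - \Phi_{t(k)}$ is the difference between the process sampled at the fine grid point $s_k$ and at the preceding coarse grid point $t(k)$; each $Z_k$ is $\mathcal{F}_{s_k}$-measurable. Applying Lemma \ref{lemma-covariation} gives
\begin{align*}
\mathbb{E}\left[ \| I^n_T(\Phi) - I^m_T(\Phi) \|^2 \right] = \mathbb{E}\left[ \sum_k (\langle M,M\rangle_{s_{k+1}} - \langle M,M\rangle_{s_k}) \| \Phi_{s_k} - \Phi_{t(k)} \|^2 \right].
\end{align*}
Since $M$ is a L\'evy martingale, $\langle M,M\rangle_t = t\,\mathbb{E}[M_1^2]$, so the weights $\langle M,M\rangle_{s_{k+1}} - \langle M,M\rangle_{s_k} = (s_{k+1}-s_k)\mathbb{E}[M_1^2]$ sum to $T\,\mathbb{E}[M_1^2]$. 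Because $\Phi \in C_{\rm ad}([0,T];L^2(\Omega;H))$ is continuous on the compact interval $[0,T]$, it is uniformly continuous, so $\sup_{|s-t|\le 2^{-m}T} \mathbb{E}[\|\Phi_s - \Phi_t\|^2] \to 0$ as $m \to \infty$; hence the whole expression is bounded by $T\,\mathbb{E}[M_1^2] \cdot \sup_{|s-t|\le 2^{-m}T}\mathbb{E}[\|\Phi_s-\Phi_t\|^2]$, which tends to $0$ uniformly in $n > m$. Then $\|I^n(\Phi) - I^m(\Phi)\|_2^2 = \mathbb{E}[\sup_t \|I^n_t(\Phi) - I^m_t(\Phi)\|^2] \le 4\,\mathbb{E}[\|I^n_T(\Phi) - I^m_T(\Phi)\|^2]$ by Doob's inequality \eqref{doob-inequality} (valid since $I^n(\Phi) - I^m(\Phi)$ is a c\`adl\`ag martingale), completing the Cauchy estimate.

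The main obstacle is the combinatorial bookkeeping in re-expressing $I^n_T(\Phi) - I^m_T(\Phi)$ as a martingale difference sum over the common refinement with $\mathcal{F}_{s_k}$-measurable coefficients, so that Lemma \ref{lemma-covariation} applies cleanly; one must be careful that the coefficient attached to the fine increment $(M_{s_{k+1}} - M_{s_k})$ is indeed $\Phi_{s_k} - \Phi_{t(k)}$ with $t(k) \le s_k$, which is where adaptedness of $\Phi$ is used. Everything else — the identification of $\langle M,M\rangle$ for a L\'evy martingale, uniform continuity of $\Phi$, and the passage from the $L^2$ estimate at time $T$ to the $\mathcal{M}^2$ estimate via Doob — is routine.
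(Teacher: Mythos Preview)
Your proposal is correct and follows essentially the same approach as the paper: show that $(I^n(\Phi))$ is Cauchy in $\mathcal{M}^2$ by rewriting $I^n(\Phi)-I^m(\Phi)$ as a martingale-difference sum over the finer dyadic partition, apply Lemma~\ref{lemma-covariation} together with the identity $\langle M,M\rangle_t = t\,\mathbb{E}[M_1^2]$ for the L\'evy martingale $M$, and combine uniform continuity of $\Phi$ with Doob's inequality~\eqref{doob-inequality}; then conclude via Proposition~\ref{prop-m2-banach} and Lemma~\ref{lemma-approx-dX-int}. The only cosmetic difference is that the paper writes the variance constant as $c+\int_{\mathbb{R}}x^2\,F(dx)$ and applies Doob before invoking Lemma~\ref{lemma-covariation} rather than after, but the argument is the same.
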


\begin{proof}
Let $\varepsilon > 0$ be arbitrary. Since $\Phi : [0,T] \rightarrow L^2(\Omega;H)$ is uniformly continuous on the compact interval $[0,T]$, there exists $\delta > 0$ such that
\begin{align}\label{est-c-F}
\mathbb{E} \left[ \| \Phi_t - \Phi_s \|^2 \right] < \frac{\varepsilon}{4T \left( c + \int_{\mathbb{R}} x^2 F(dx) \right)}
\end{align}
for all $s,t \in [0,T]$ with $|t-s| < \delta$, where $c$ denotes the Gaussian part and $F$ the L\'evy measure of $M$. Choose $n_0 \in \mathbb{N}$ such that $2^{-n_0} T < \delta$.
For all $n,m \in \mathbb{N}_0$ with $n > m \geq n_0$ we obtain
\begin{align*}
I^n(\Phi) - I^m(\Phi) = \sum_{i=0}^{2^n - 1} \big( M^{t_{i+1}^n} - M^{t_i^n} \big) \big( \Phi_{t_i^n} - \Phi_{t_{j(i)}^n} \big) 
\end{align*}
with $j(i) \in \{ 0,\ldots,i \}$ such that $|t_i^n - t_{j(i)}^n| < 2^{-n_0} T < \delta$ for all $i = 0,\ldots,2^n - 1$.
We obtain by Doob's martingale inequality (\ref{doob-inequality}), Lemma \ref{lemma-covariation} and (\ref{est-c-F}) for all $n,m \in \mathbb{N}_0$ with $n > m \geq n_0$
\begin{align*}
&\mathbb{E} \bigg[ \sup_{t \in [0,T]} \| I_t^n(\Phi) - I_t^m(\Phi) \|^2 \bigg] \leq 4 \mathbb{E} \left[ \bigg\| \sum_{i=0}^{2^n - 1} \big( M_{t_{i+1}^n} - M_{t_i^n} \big) \big( \Phi_{t_i^n} - \Phi_{t_{j(i)}^n} \big) \bigg\|^2 \right]
\\ &= 4 \sum_{i=0}^{2^n - 1} \mathbb{E} \left[ \big( \langle M,M \rangle_{t_{i+1}^n} - \langle M,M \rangle_{t_i^n} \big)  \| \Phi_{t_i^n} - \Phi_{t_{j(i)}^n} \|^2 \right]
\\ &= 4 \left( c + \int_{\mathbb{R}} x^2 F(dx) \right) \sum_{i=0}^{2^n - 1} (t_{i+1}^n - t_i^n) \mathbb{E} \left[ \| \Phi_{t_i^n} - \Phi_{t_{j(i)}^n} \|^2 \right] < \varepsilon.
\end{align*}
The latter identity is valid, because $\langle M,M \rangle$ is the compensator of $[M,M]$ by \cite[Prop. I.4.50.b]{JS} and because the relation $[M,M]_t = ct + \sum_{s \leq t} \Delta M_s^2$ is valid according to \cite[Thm. I.4.52]{JS}.

Thus, the sequence $(I^n(\Phi))$ is a Cauchy sequence in $\mathcal{M}^2$. Proposition \ref{prop-m2-banach} and Lemma \ref{lemma-approx-dX-int} complete the proof.
\end{proof}

For $\Phi \in C_{\rm ad}([0,T];L^2(\Omega;H))$, the integral with respect to $dt$ can, according to \cite[Lemma 3.6]{Onno-Levy}, be defined as a Riemann integral. More precisely:

\begin{lemma}\label{lemma-approx-dt}
Let $\Phi \in C_{\rm ad}([0,T];L^2(\Omega;H))$. For each $t \in [0,T]$, there exists a unique random variable $Y_t \in L^2(\Omega;H)$ such that for every $\varepsilon > 0$ there exists $\delta > 0$ such that
\begin{align}\label{def-van-Gaans-int-dt}
\mathbb{E} \left[ \bigg\| Y_t - \sum_{i=0}^{n-1} ( t_{i+1} - t_i ) \Phi_{t_i} \bigg\|^2 \right] < \varepsilon
\end{align}
for every partition $0 = t_0 < t_1 < \ldots < t_n = t$ with $\sup_{i=0,\ldots,n-1} |t_{i+1} - t_i| < \delta$.
\end{lemma}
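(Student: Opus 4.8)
The plan is to establish the existence and uniqueness of $Y_t$ for fixed $t \in [0,T]$ by the standard Cauchy-sequence-in-$L^2(\Omega;H)$ argument, exploiting the completeness of $L^2(\Omega;H)$ together with the uniform continuity of $\Phi$ on the compact interval $[0,T]$. First I would fix $t \in [0,T]$ and, given a partition $\pi : 0 = t_0 < t_1 < \cdots < t_n = t$, write $S(\pi) := \sum_{i=0}^{n-1} (t_{i+1} - t_i) \Phi_{t_i}$ for the associated Riemann sum. Since $\Phi : [0,T] \to L^2(\Omega;H)$ is continuous on the compact interval $[0,T]$, it is uniformly continuous there, so for every $\varepsilon > 0$ there is $\delta > 0$ with $\mathbb{E}[\|\Phi_u - \Phi_v\|^2] < \varepsilon$ whenever $|u-v| < \delta$.

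The heart of the argument is to show that the net $(S(\pi))$ is Cauchy in $L^2(\Omega;H)$ as the mesh of $\pi$ tends to zero. Given two partitions with mesh less than $\delta$, pass to a common refinement; on each subinterval of the refinement one compares $\Phi$ evaluated at two points lying within $\delta$ of each other, so by the triangle inequality in $L^2(\Omega;H)$ together with the elementary bound $\|\sum_k (s_{k+1}-s_k) Z_k\|_{L^2} \le \sum_k (s_{k+1}-s_k)\|Z_k\|_{L^2} \le t \cdot \sup_k \|Z_k\|_{L^2}$ one gets $\|S(\pi) - S(\pi')\|_{L^2(\Omega;H)} \le t\sqrt{\varepsilon}$ (or a similar constant multiple of $\sqrt{\varepsilon}$). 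Hence the Riemann sums form a Cauchy net, and by completeness of $L^2(\Omega;H)$ they converge to a limit $Y_t \in L^2(\Omega;H)$, which by construction satisfies the stated $\varepsilon$--$\delta$ approximation property (\ref{def-van-Gaans-int-dt}). Uniqueness of $Y_t$ is immediate: if $Y_t$ and $\tilde Y_t$ both satisfy (\ref{def-van-Gaans-int-dt}), then for any $\varepsilon > 0$ one can pick a single partition fine enough that both $\mathbb{E}[\|Y_t - S(\pi)\|^2] < \varepsilon$ and $\mathbb{E}[\|\tilde Y_t - S(\pi)\|^2] < \varepsilon$, whence $\mathbb{E}[\|Y_t - \tilde Y_t\|^2] \le 2\varepsilon + 2\varepsilon$, and letting $\varepsilon \downarrow 0$ gives $Y_t = \tilde Y_t$ in $L^2(\Omega;H)$.

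It also needs to be checked that each $\Phi_{t_i}$ indeed lies in $L^2(\Omega;H)$ (so the Riemann sums are well defined elements of $L^2(\Omega;H)$) and is $\mathcal{F}_{t_i}$-measurable, which is exactly the content of $\Phi \in C_{\mathrm{ad}}([0,T];L^2(\Omega;H))$; adaptedness is not needed for the existence of the $dt$-integral but is recorded for later use. I do not expect any genuine obstacle here: the only mild subtlety is bookkeeping with common refinements of partitions, and this is precisely why the result is quoted from \cite[Lemma 3.6]{Onno-Levy} — so in the write-up I would simply invoke that reference, as the paper does for the analogous Lemma~\ref{lemma-approx-dX-int}, rather than reproduce the elementary argument in full.
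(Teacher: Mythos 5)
Your proposal is correct and follows essentially the same route as the paper: uniform continuity of $\Phi$ on the compact interval, comparison of two fine partitions via their common refinement, a bound of the form $t\cdot(\text{modulus of continuity})$ on the difference of Riemann sums (the paper gets it via Cauchy--Schwarz on the squared norm, you via the triangle inequality in $L^2(\Omega;H)$ --- an immaterial difference), and completeness of $L^2(\Omega;H)$. The only cosmetic discrepancy is that the paper actually writes this argument out in full rather than citing \cite[Lemma 3.6]{Onno-Levy}, but your sketch already contains all the substance of that written-out proof.
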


\begin{proof}
Fix $t \in [0,T]$ and let $\varepsilon > 0$ be arbitrary. Since $\Phi : [0,t] \rightarrow L^2(\Omega;H)$ is uniformly continuous on the compact interval $[0,t]$, there exists $\delta > 0$ such that 
\begin{align}\label{delta-choice-Riemann}
\mathbb{E} \left[ \| \Phi_s - \Phi_r \|^2 \right] < \frac{\varepsilon}{t^2}
\end{align}
for all $r,s \in [0,t]$ with $|s-r| < \delta$.

Let $Z_1 = \{ 0 = t_0 < t_1 < \ldots < t_n = t \}$ and $Z_2 = \{ 0 = s_0 < s_1 < \ldots < s_m = t \}$ be two decompositions satisfying $\sup_{i=0,\ldots,n-1} |t_{i+1} - t_i| < \delta$ and $\sup_{i=0,\ldots,m-1} |s_{i+1} - s_i| < \delta$. Then there is a unique decomposition $Z = \{ 0 = r_0 < r_1 < \ldots < r_p = t \}$ such that $Z = Z_1 \cup Z_2$. Thus, we get
\begin{align*}
\sum_{i=0}^{n-1} (t_{i+1} - t_i) \Phi_{t_i} - \sum_{i=0}^{m-1} (s_{i+1} - s_i) \Phi_{s_i} = \sum_{i=0}^{p-1} (r_{i+1} - r_i) (\Phi_{a_i} - \Phi_{b_i})
\end{align*}
with $a_i \in Z_1$, $b_i \in Z_2$ and $| b_i - a_i| < \delta$ for all $i = 0,\ldots,p-1$. We obtain by the Cauchy-Schwarz inequality and (\ref{delta-choice-Riemann})
\begin{align*}
&\mathbb{E} \left[ \bigg\| \sum_{i=0}^{n-1} (t_{i+1} - t_i) \Phi_{t_i} - \sum_{i=0}^{m-1} (s_{i+1} - s_i) \Phi_{s_i} \bigg\|^2 \right] 
\leq \mathbb{E} \bigg[ \bigg( \sum_{i=0}^{p-1} (r_{i+1} - r_i) \| (\Phi_{a_i} - \Phi_{b_i}) \| \bigg)^2 \bigg]
\\ &\leq \mathbb{E} \bigg[ \bigg( \sum_{i=0}^{p-1} (r_{i+1} - r_i) \bigg) \bigg( \sum_{i=0}^{p-1} (r_{i+1} - r_i) \| \Phi_{a_i} - \Phi_{b_i} \|^2 \bigg) \bigg]
\\ &= t \sum_{i=0}^{p-1} (r_{i+1} - r_i) \mathbb{E} \left[ \| \Phi_{a_i} - \Phi_{b_i} \|^2 \right] < \varepsilon.
\end{align*}
By the completeness of $L^2(\Omega;H)$, the lemma is proven.
\end{proof}

\begin{definition}
Let $\Phi \in C_{\rm ad}([0,T];L^2(\Omega;H))$. Then the integral $Y_t = \text{{\rm (G-)}}\int_0^t \Phi_s ds$, $t \in [0,T]$ is the stochastic process $Y = (Y_t)_{t \in [0,T]}$ where every $Y_t$ is the unique element from $L^2(\Omega;H)$ such that (\ref{def-van-Gaans-int-dt}) is valid.
\end{definition}

Again, for every $t \in [0,T]$ the integral $\text{{\rm (G-)}}\int_0^t \Phi_s ds$ is only determined up to a $\mathbb{P}$--null set. We shall prove the existence of a continuous modification.

Let $\Phi \in C_{\rm ad}([0,T],L^2(\Omega;H))$. We define 
\begin{align}\label{def-J}
J_t(\Phi) := \text{(G-)} \int_0^t \Phi_s ds, \quad t \in [0,T]
\end{align}
and the sequence of continuous adapted processes
\begin{align}\label{def-J-n}
J_t^n(\Phi) := \sum_{i=0}^{2^n - 1} (t_{i+1}^n \wedge t - t_i^n \wedge t) \Phi_{t_i^n}, \quad n \in \mathbb{N}_0
\end{align}
where the $t_i^n$ are defined in (\ref{def-t-i-n}). Note that for each $t \in [0,T]$ we have $J_t^n(\Phi) \rightarrow J_t(\Phi)$ in $L^2(\Omega;H)$ by Lemma \ref{lemma-approx-dt}.

\begin{theorem}\label{thm-cont-mod}
Let $\Phi \in C_{\rm ad}([0,T];L^2(\Omega;H))$. Then $J(\Phi)$ has a continuous modification and, moreover,
\begin{align*}
\sup_{t \in [0,T]} \| J_t^n(\Phi) - J_t(\Phi) \| \rightarrow 0 \quad \text{$\mathbb{P}$--a.s.}
\end{align*}
\end{theorem}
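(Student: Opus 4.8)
The plan is to mimic the strategy used for Theorem~\ref{thm-cadlag-mod}, but working with $\mathbb{P}$-almost sure uniform convergence instead of convergence in the Banach space $\mathcal{M}^2$, since now the relevant limit process is continuous rather than merely c\`adl\`ag. First I would show that $(J^n(\Phi))_{n \in \mathbb{N}_0}$ is a Cauchy sequence with respect to the (random) supremum norm on $[0,T]$, in a sufficiently strong sense to invoke the Borel--Cantelli lemma. For fixed $\varepsilon > 0$, uniform continuity of $\Phi : [0,T] \rightarrow L^2(\Omega;H)$ gives a $\delta > 0$ with $\mathbb{E}[\|\Phi_t - \Phi_s\|^2] < \varepsilon / t^2$ whenever $|t - s| < \delta$ (exactly as in \eqref{delta-choice-Riemann}), and then choosing $n_0$ with $2^{-n_0} T < \delta$, for $n > m \geq n_0$ one writes, using the common refinement of the dyadic partitions,
\begin{align*}
J_t^n(\Phi) - J_t^m(\Phi) = \sum_{i=0}^{2^n-1} (t_{i+1}^n \wedge t - t_i^n \wedge t)\big(\Phi_{t_i^n} - \Phi_{t_{j(i)}^n}\big),
\end{align*}
where $|t_i^n - t_{j(i)}^n| < \delta$. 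Applying the Cauchy--Schwarz inequality pathwise as in the proof of Lemma~\ref{lemma-approx-dt} yields
\begin{align*}
\sup_{t \in [0,T]} \big\| J_t^n(\Phi) - J_t^m(\Phi) \big\|^2 \leq T \sum_{i=0}^{2^n-1} (t_{i+1}^n - t_i^n)\, \big\| \Phi_{t_i^n} - \Phi_{t_{j(i)}^n} \big\|^2,
\end{align*}
so that $\mathbb{E}[\sup_{t \in [0,T]} \| J_t^n(\Phi) - J_t^m(\Phi) \|^2] < \varepsilon$ for all $n > m \geq n_0$.

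Next I would upgrade this $L^2$-Cauchy estimate to almost sure uniform convergence exactly as in the proof of Proposition~\ref{prop-m2-banach}: by the Markov inequality extract a subsequence $(J^{n_k}(\Phi))$ with $\mathbb{P}(\sup_{t}\| J_t^{n_{k+1}}(\Phi) - J_t^{n_k}(\Phi) \| \geq 2^{-k}) \leq 2^{-k}$, and Borel--Cantelli then gives that, for $\mathbb{P}$-almost all $\omega$, the sequence $(J^{n_k}(\Phi)(\omega))$ is Cauchy in the sup-norm, hence converges uniformly on $[0,T]$ to some limit; being a uniform limit of continuous adapted processes, this limit is itself continuous and adapted. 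Call it $\tilde J(\Phi)$. Since for each fixed $t$ we already know $J_t^n(\Phi) \rightarrow J_t(\Phi)$ in $L^2(\Omega;H)$ by Lemma~\ref{lemma-approx-dt}, and the subsequence converges $\mathbb{P}$-a.s., the limits must agree a.s., i.e. $\mathbb{P}(\tilde J_t(\Phi) = J_t(\Phi)) = 1$ for every $t \in [0,T]$; thus $\tilde J(\Phi)$ is a continuous modification of $J(\Phi)$.

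Finally, to obtain the full sequence convergence $\sup_{t \in [0,T]} \| J_t^n(\Phi) - J_t(\Phi) \| \rightarrow 0$ $\mathbb{P}$-a.s.\ (not just along the subsequence), I would feed the Cauchy estimate back in: given $\varepsilon > 0$ pick $n_0$ as above, and note that for each $n \geq n_0$, letting $k \rightarrow \infty$ along the subsequence and using Fatou's lemma, $\mathbb{E}[\sup_t \| J_t^n(\Phi) - \tilde J_t(\Phi) \|^2] \leq \varepsilon$. A second Markov/Borel--Cantelli argument applied to a sequence $\varepsilon_m \downarrow 0$ then yields $\sup_t \| J_t^n(\Phi) - \tilde J_t(\Phi) \| \rightarrow 0$ $\mathbb{P}$-a.s. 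The main obstacle is a modest bookkeeping one, namely being careful that the convergence claimed is along the \emph{entire} sequence rather than only a subsequence; apart from that, every step is a routine adaptation of the arguments already carried out for the stochastic-integral case in Theorem~\ref{thm-cadlag-mod} and Proposition~\ref{prop-m2-banach}, with the It\^o-isometry input (Lemma~\ref{lemma-covariation}) replaced by the elementary pathwise Cauchy--Schwarz bound.
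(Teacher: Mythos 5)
Your argument for the existence of a continuous modification is exactly the paper's: the same uniform-continuity and Cauchy--Schwarz estimate giving $\mathbb{E}\big[\sup_{t\in[0,T]}\|J_t^n(\Phi)-J_t^m(\Phi)\|^2\big]<\varepsilon$ for $n>m\ge n_0$, followed by the Markov/Borel--Cantelli extraction of an almost surely uniformly convergent subsequence whose continuous adapted limit is identified with $J(\Phi)$ via Lemma~\ref{lemma-approx-dt}. Up to that point there is nothing to add.

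The step that does not go through as described is your final upgrade to almost sure uniform convergence of the \emph{entire} sequence. The Fatou step gives $\mathbb{E}\big[\sup_{t}\|J_t^n(\Phi)-\tilde J_t(\Phi)\|^2\big]\le\varepsilon_m$ for all $n\ge n_0(\varepsilon_m)$, i.e.\ convergence of $\sup_{t}\|J_t^n(\Phi)-\tilde J_t(\Phi)\|$ to $0$ in $L^2(\Omega)$ and hence in probability. A ``second Markov/Borel--Cantelli argument'' converts this into almost sure convergence of the whole sequence only if the tail probabilities $\mathbb{P}\big(\sup_t\|J_t^n(\Phi)-\tilde J_t(\Phi)\|>\eta\big)$ are summable over $n$, which requires a quantitative, summable rate for $\varepsilon_m$; mere uniform continuity of $\Phi:[0,T]\to L^2(\Omega;H)$ provides no such rate, since its modulus of continuity at scale $2^{-n}T$ may decay arbitrarily slowly. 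As written, the argument therefore only produces a further almost surely convergent subsequence. You are in good company: the paper's own proof likewise stops at the subsequence and does not justify the ``moreover'' clause as stated. Note that where the result is actually used (Theorem~\ref{thm-vanGaans-versus-usual}), convergence in probability of the supremum suffices, and that much your estimates do deliver.
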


\begin{proof}
Let $\varepsilon > 0$ be arbitrary. Since $\Phi : [0,T] \rightarrow L^2(\Omega;H)$ is uniformly continuous on the compact interval $[0,T]$, there exists $\delta > 0$ such that
\begin{align}\label{est-T-square}
\mathbb{E} \left[ \| \Phi_t - \Phi_s \|^2 \right] < \frac{\varepsilon}{T^2}
\end{align}
for all $s,t \in [0,T]$ with $|t-s| < \delta$. Choose $n_0 \in \mathbb{N}$ such that $2^{-n_0} T < \delta$.
For all $n,m \in \mathbb{N}_0$ with $n > m \geq n_0$ we obtain
\begin{align*}
J^n(\Phi) - J^m(\Phi) = \sum_{i=0}^{2^n - 1} \big( t_{i+1}^n \wedge t - t_i^n \wedge t \big) \big( \Phi_{t_i^n} - \Phi_{t_{j(i)}^n} \big) 
\end{align*}
with $j(i) \in \{ 0,\ldots,i \}$ such that $|t_i^n - t_{j(i)}^n| < 2^{-n_0} T < \delta$ for all $i = 0,\ldots,2^n - 1$.
We obtain by the Cauchy-Schwarz inequality and (\ref{est-T-square}) for all $n,m \geq n_0$ with $n > m$
\begin{align*}
\mathbb{E} \bigg[ \sup_{t \in [0,T]} \| J_t^n(\Phi) - J_t^m(\Phi) \|^2 \bigg] &= \mathbb{E} \left[ \sup_{t \in [0,T]} \bigg\| \sum_{i=0}^{2^n - 1} \big( t \wedge t_{i+1}^n - t \wedge t_i^n \big) \big( \Phi_{t_i^n} - \Phi_{t_{j(i)}^n} \big) \bigg\|^2 \right]
\\ &\leq T \mathbb{E} \bigg[ \sup_{t \in [0,T]} \sum_{i=0}^{2^n - 1} \big( t \wedge t_{i+1}^n - t \wedge t_i^n \big) \| \Phi_{t_i^n} - \Phi_{t_{j(i)}^n} \|^2 \bigg]
\\ &= T \sum_{i=0}^{2^n - 1} \big( t_{i+1} - t_i \big) \mathbb{E} \left[ \| \Phi_{t_i^n} - \Phi_{t_{j(i)}^n} \|^2 \right] < \varepsilon.
\end{align*}
By the Markov inequality, there exists a subsequence $(J^{n_k}(\Phi))$ such that
\begin{align*}
\mathbb{P} \bigg( \sup_{t \in [0,T]} \| J_t^{n_{k+1}}(\Phi) - J_t^{n_k}(\Phi) \| \geq 2^{-k} \bigg) \leq 2^{-k} \quad \text{for all $k \in \mathbb{N}$.}
\end{align*}
The Borel-Cantelli lemma implies that for almost all $\omega \in \Omega$ the sequence $(J^{n_k}(\Phi)(\omega))$ is a Cauchy sequence in the space of continuous functions on $[0,T]$ equipped with the supremum-norm. Therefore, $(J^{n_k}(\Phi))$ converges $\mathbb{P}$--a.s. to an adapted process, uniformly on $[0,T]$, which is therefore continuous.

According to Lemma \ref{lemma-approx-dt}, this limit process is a modification of the integral process $J(\Phi)$.
\end{proof}

Now let $X$ be a real-valued L\'evy process with $\mathbb{E}[X_1^2] < \infty$. Then it admits a unique decomposition $X_t = M_t + bt$, where $M$ is a L\'evy martingale satisfying $\mathbb{E}[M_1^2] < \infty$ and $b = \mathbb{E}[X_1]$. According to \cite[Def. 3.7]{Onno-Levy}, we set
\begin{align*}
\text{(G-)}\int_0^t \Phi_s dX_s := \text{(G-)}\int_0^t \Phi_s dM_s + b \cdot (\text{G-}) \int_0^t \Phi_s ds.
\end{align*}
We shall also use the notation
\begin{align*}
G(\Phi)_t = \text{(G-)} \int_0^t \Phi_s dX_s, \quad t \in [0,T].
\end{align*}
Note that $G(\Phi) = I(\Phi) + b \cdot J(\Phi)$, where $I(\Phi)$ is defined in (\ref{def-I}) and $J(\Phi)$ is defined in (\ref{def-J}). We also introduce $G^n(\Phi) = I^n(\Phi) + b \cdot J^n(\Phi)$ for $n \in \mathbb{N}_0$, where $I^n(\Phi)$ is defined in (\ref{def-I-n}) and $J^n(\Phi)$ is defined in (\ref{def-J-n}).

For a predictable $H$-valued process $\Phi$ and a real-valued semimartingale $X$, we can define the usual It\^o-integral (developed e.g. in Jacod and Shiryaev \cite{JS} or Protter \cite{Protter})
\begin{align*}
\int_0^t \Phi_s dX_s,
\end{align*}
which is used for financial modelling. The construction is just as for real-valued integrands, namely by defining the integral first for simple integrands and then extending it via the It\^o-isometry. In order to get the It\^o-isometry, it is vital that the state space $H$ is a Hilbert space. 

The construction of the stochastic integral in the more general situation, where the driving semimartingale may also be infinite dimensional, can be found in M\'etivier \cite{Metivier}. Da~Prato and Zabczyk \cite{Da_Prato} and Carmona and Tehranchi \cite{carteh} treat the case with infinite dimensional Brownian motion as integrator, in \cite{carteh} also with a focus on interest rate models.

We also remark that the stochastic integral can still be defined on appropriate Banach spaces, so-called M-type 2 spaces. Then the integral is still a bounded linear operator, but no isometry, in general. We refer to \cite{barbara-integration} for further details.



We now observe that the integral $\text{(G-)}\int_0^t \Phi_s dX_s$ of van Gaans \cite{Onno-Levy} is not consistent with the usual stochastic integral $\int_0^t \Phi_s dX_s$ used in financial modelling. As an example, let $X$ be a standard Poisson process with values in $\mathbb{R}$. In Ex. 3.9 in \cite{Onno-Levy} it is derived that
\begin{align*}
{\rm \text{(G-)}}\int_0^t X_s dX_s = \frac{1}{2} \left( X_t^2 - X_t \right).
\end{align*}
Apparently, this does not coincide with the pathwise Lebesgue-Stieltjes integral
\begin{align*}
\int_0^t X_s dX_s = \frac{1}{2} \left( X_t^2 + X_t \right),
\end{align*} 
but we have
\begin{align*}
{\rm \text{(G-)}}\int_0^t X_s dX_s = \int_0^t X_{s-} dX_s,
\end{align*}
showing that inconsistencies occur as soon as integrands with jumps are used. Indeed, we have the following general result about the relation between the integral of van Gaans and the usual It\^o-integral:


\begin{theorem}\label{thm-vanGaans-versus-usual}
Let $\Phi \in C_{\rm ad}([0,T];L^2(\Omega;H))$ be left-continuous or c\`adl\`ag. Then we have for all $t \in [0,T]$
\begin{align}\label{coincidence-2}
\text{{\em (G-)}}\int_0^t \Phi_s dX_s = \int_0^t \Phi_{s-} dX_s \quad \text{$\mathbb{P}$--a.s.}
\end{align}
\end{theorem}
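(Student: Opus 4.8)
The plan is to reduce the identity to its martingale and drift parts and, in each, to realise both the van Gaans integral and the It\^o integral as limits of \emph{one and the same} approximating sequence, namely the dyadic Riemann sums $I^n(\Phi)$ and $J^n(\Phi)$ from (\ref{def-I-n}) and (\ref{def-J-n}). Write $X_t=M_t+bt$ with $M$ the square-integrable L\'evy martingale and $b=\mathbb{E}[X_1]$. By the very definition of the van Gaans integral, $\text{(G-)}\int_0^t\Phi_s\,dX_s=I_t(\Phi)+bJ_t(\Phi)$, and by linearity of the It\^o integral in the integrator, $\int_0^t\Phi_{s-}\,dX_s=\int_0^t\Phi_{s-}\,dM_s+b\int_0^t\Phi_{s-}\,ds$. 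Hence it suffices to prove, for each $t\in[0,T]$ and $\mathbb{P}$-a.s., that $I_t(\Phi)=\int_0^t\Phi_{s-}\,dM_s$ and $J_t(\Phi)=\int_0^t\Phi_{s-}\,ds$.

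Introduce the simple $H$-valued process $\Phi_s^n:=\sum_{i=0}^{2^n-1}\Phi_{t_i^n}\mathbf{1}_{(t_i^n,t_{i+1}^n]}(s)$, with $t_i^n$ as in (\ref{def-t-i-n}). Since $\Phi$ is adapted, each $\Phi_{t_i^n}$ is $\mathcal{F}_{t_i^n}$-measurable and the interval $(t_i^n,t_{i+1}^n]$ is ``left-open'', so $\Phi^n$ is \emph{predictable}; a direct computation then identifies $I_t^n(\Phi)=\int_0^t\Phi_s^n\,dM_s$ (elementary It\^o integral of a simple predictable process) and $J_t^n(\Phi)=\int_0^t\Phi_s^n\,ds$. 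The key estimate is that $\Phi^n\to\Phi_{\cdot-}$ in $L^2([0,T]\times\Omega,\,ds\otimes\mathbb{P};H)$: because $\Phi:[0,T]\to L^2(\Omega;H)$ is uniformly continuous on the compact interval $[0,T]$ and $|s-t_i^n|\le 2^{-n}T$ for $s$ in the $i$-th dyadic cell, one has $\sup_{s\in[0,T]}\mathbb{E}\|\Phi_s^n-\Phi_s\|^2\to 0$, hence $\int_0^T\mathbb{E}\|\Phi_s^n-\Phi_s\|^2\,ds\to 0$; and since a c\`adl\`ag (or left-continuous) path has at most countably many jumps, Fubini gives $\Phi_s=\Phi_{s-}$ for $ds\otimes\mathbb{P}$-a.e.\ $(s,\omega)$, so that $\int_0^T\mathbb{E}\|\Phi_s-\Phi_{s-}\|^2\,ds=0$.

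Now one invokes the It\^o isometry for $H$-valued predictable integrands together with the fact that $\langle M,M\rangle$ is a constant multiple of Lebesgue measure (as used in the proof of Theorem~\ref{thm-cadlag-mod}), and Doob's inequality (\ref{doob-inequality}): this gives $I^n(\Phi)\to\int_0^\cdot\Phi_{s-}\,dM_s$ in $\mathcal{M}^2$, while the Cauchy--Schwarz inequality gives $J_t^n(\Phi)\to\int_0^t\Phi_{s-}\,ds$ in $L^2(\Omega;H)$ for each $t$. Combining this with the convergences already at our disposal finishes the argument: Theorem~\ref{thm-cadlag-mod} yields $I^n(\Phi)\to I(\Phi)$ in $\mathcal{M}^2$, so by uniqueness of limits $I(\Phi)$ and $\int_0^\cdot\Phi_{s-}\,dM_s$ agree as elements of $\mathcal{M}^2$, in particular $I_t(\Phi)=\int_0^t\Phi_{s-}\,dM_s$ $\mathbb{P}$-a.s.; and Theorem~\ref{thm-cont-mod} yields $J_t^n(\Phi)\to J_t(\Phi)$ $\mathbb{P}$-a.s., which together with the $L^2$ convergence forces $J_t(\Phi)=\int_0^t\Phi_{s-}\,ds$ $\mathbb{P}$-a.s. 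Adding the two pieces gives (\ref{coincidence-2}).

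I expect the only genuinely delicate point to be the passage from the elementary identities $I_t^n(\Phi)=\int_0^t\Phi_s^n\,dM_s$ to the limit: one must make sure that $\Phi^n$ is a legitimate predictable step integrand --- this is exactly where the left-endpoint evaluation $\Phi_{t_i^n}$ on $(t_i^n,t_{i+1}^n]$, and hence ultimately the hypothesis that $\Phi$ be c\`adl\`ag or left-continuous, enters, producing $\Phi_{\cdot-}$ rather than $\Phi_{\cdot}$ in the limit --- and that the two modes of convergence available here, in $\mathcal{M}^2$ for the jump part and $\mathbb{P}$-a.s.\ for the drift part, are strong enough to be reconciled with the $L^2$-convergence on the It\^o side.
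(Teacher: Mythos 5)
Your proof is correct, and it shares the paper's basic skeleton --- both integrals are identified as limits of the same dyadic Riemann sums $I^n(\Phi)+b\,J^n(\Phi)$ --- but the way you establish convergence of these sums to the It\^o integral is genuinely different. The paper, for left-continuous $\Phi$, simply cites the classical fact (Prop.~I.4.44 in Jacod--Shiryaev, ``proven as in the real-valued case'') that the Riemann sums of a left-continuous adapted integrand converge uniformly on $[0,T]$ in probability to the It\^o integral, and then disposes of the c\`adl\`ag case by showing that $\Phi_-$ is a modification of $\Phi$ (via the $L^2$-continuity of $t\mapsto\Phi_t$) and invoking Lemma~\ref{lemma-modification}. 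You instead recognize $I^n(\Phi)$ as the It\^o integral of the simple predictable process $\Phi^n=\sum_i\Phi_{t_i^n}\mathbf{1}_{(t_i^n,t_{i+1}^n]}$, prove $\Phi^n\to\Phi_{\cdot-}$ in $L^2(ds\otimes\mathbb{P};H)$ from the uniform $L^2$-continuity of $\Phi$ together with the countability of the jump set, and pass to the limit via the It\^o isometry (using $\langle M,M\rangle_t=(c+\int_{\mathbb{R}}x^2F(dx))\,t$) and Doob's inequality. This buys a self-contained argument that treats the left-continuous and c\`adl\`ag cases uniformly, makes explicit where the left limit $\Phi_{s-}$ comes from, and upgrades the convergence to $\mathcal{M}^2$; the price is that it leans on the deterministic angle bracket of the L\'evy martingale and on the $L^2$-continuity of $\Phi$ in $t$, whereas the paper's citation route is the one that generalizes to integrands and integrators lacking this structure. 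One cosmetic point: $\sup_{s\in[0,T]}\mathbb{E}\|\Phi^n_s-\Phi_s\|^2$ does not tend to zero at $s=0$ (where $\Phi^n_0=0$), but the $L^2(ds\otimes\mathbb{P})$ convergence you actually use is unaffected.
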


\begin{proof}
If $\Phi$ is left-continuous, we have $\sup_{t \in [0,T]} \| G_t^n - G_t \| \rightarrow 0$ almost surely by Theorem \ref{thm-cadlag-mod} and Theorem \ref{thm-cont-mod}, and therefore also in probability. For the usual It\^o-integral we have
$\sup_{t \in [0,T]} \left\| G_t^n - \int_0^t \Phi_s dX_s \right\| \rightarrow 0$
in probability, which is proven as in the real-valued case, see e.g. \cite[Prop. I.4.44]{JS}. Thus we obtain for all $t \in [0,T]$
\begin{align}\label{coincidence-1}
{\rm \text{(G-)}}\int_0^t \Phi_s dX_s = \int_0^t \Phi_s dX_s \quad \text{$\mathbb{P}$--a.s.}
\end{align} 
and, since $\Phi$ is left-continuous, also relation (\ref{coincidence-2}).

If $\Phi$ is c\`adl\`ag, we show that $\Phi_-$ is a modification of $\Phi$, because then (\ref{coincidence-2}) is a consequence of (\ref{coincidence-1}) and Lemma \ref{lemma-modification} below. Let $t \in (0,T]$ be arbitrary and $(t_n)$ be a sequence such that $t_n \uparrow t$. Since $\Phi : [0,T] \rightarrow L^2(\Omega;H)$ is continuous, we deduce $\mathbb{E}[ \| \Phi_t - \Phi_{t_n} \|^2 ] \rightarrow 0$. Thus there is a subsequence $(n_k)$ with $\| \Phi_t - \Phi_{t_{n_k}} \| \rightarrow 0$ almost surely, and therefore we have $\mathbb{P}(\Phi_t = \Phi_{t-}) = 1$.
\end{proof}

\begin{remark}
If the driving process $X$ is a (possibly infinite dimensional) Brownian motion, the equivalence of the van Gaans integral with the usual stochastic integral (see Da~Prato and Zabczyk \cite{Da_Prato} for the infinite dimensional case) is provided in \cite[Sec. 3]{Onno}.
\end{remark}

It remains to show the following auxiliary result, which we have used in the proof of Theorem \ref{thm-vanGaans-versus-usual}.

\begin{lemma}\label{lemma-modification}
Let $\Phi, \Psi \in C_{\rm ad}([0,T];L^2(\Omega;H))$ be such that $\Psi$ is a modification of $\Phi$. Then $G(\Psi)$ is a modification of $G(\Phi)$.
\end{lemma}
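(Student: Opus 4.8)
The plan is to reduce the statement to the two building-block integrals $I(\Phi)$ and $J(\Phi)$ from which $G(\Phi)=I(\Phi)+b\cdot J(\Phi)$ is assembled, and to use the uniqueness of these integrals in $L^2(\Omega;H)$ established in Lemma~\ref{lemma-approx-dX-int} and Lemma~\ref{lemma-approx-dt}. Fix $t\in[0,T]$. Since $\Psi$ is a modification of $\Phi$, we have $\mathbb{P}(\Phi_s=\Psi_s)=1$ for each fixed $s$; in particular, for the dyadic partition points $t_0^n<\dots<t_{2^n}^n$ of $[0,t]$ (or any partition) the approximating Riemann-type sums
\begin{align*}
\sum_{i=0}^{n-1}(M_{t_{i+1}}-M_{t_i})\Phi_{t_i}
\quad\text{and}\quad
\sum_{i=0}^{n-1}(M_{t_{i+1}}-M_{t_i})\Psi_{t_i}
\end{align*}
agree $\mathbb{P}$-almost surely, because they are finite sums of terms that coincide a.s. The same is true for the sums $\sum_i(t_{i+1}-t_i)\Phi_{t_i}$ and $\sum_i(t_{i+1}-t_i)\Psi_{t_i}$ used to define $J$.

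First I would record that $\Phi$ and $\Psi$ induce the \emph{same} element of $C_{\rm ad}([0,T];L^2(\Omega;H))$: indeed $\Phi_s=\Psi_s$ as elements of $L^2(\Omega;H)$ for every $s$, since they agree $\mathbb{P}$-a.s.\ and hence have the same equivalence class. Consequently the defining approximation property \eqref{def-van-Gaans-int} for $I_t(\Psi)$ is literally the same statement as for $I_t(\Phi)$ — both require a random variable $Y_t\in L^2(\Omega;H)$ with $\mathbb{E}\|Y_t-\sum_i(M_{t_{i+1}}-M_{t_i})\Phi_{t_i}\|^2<\varepsilon$, since the sums built from $\Psi$ equal those built from $\Phi$ in $L^2(\Omega;H)$. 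By the uniqueness clause of Lemma~\ref{lemma-approx-dX-int}, $I_t(\Phi)=I_t(\Psi)$ as elements of $L^2(\Omega;H)$, i.e.\ $\mathbb{P}(I_t(\Phi)=I_t(\Psi))=1$. The identical argument with Lemma~\ref{lemma-approx-dt} gives $\mathbb{P}(J_t(\Phi)=J_t(\Psi))=1$ for every $t\in[0,T]$.

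Finally, since $G(\Phi)_t=I_t(\Phi)+b\,J_t(\Phi)$ and $G(\Psi)_t=I_t(\Psi)+b\,J_t(\Psi)$ with $b=\mathbb{E}[X_1]$, and since a finite intersection of probability-one events has probability one, we conclude $\mathbb{P}(G(\Phi)_t=G(\Psi)_t)=1$ for every $t\in[0,T]$, which is exactly the assertion that $G(\Psi)$ is a modification of $G(\Phi)$. There is essentially no obstacle here: the only point requiring a moment's care is the observation that ``modification'' passes cleanly to the level of $L^2$-equivalence classes, so that the two integrands are genuinely indistinguishable inputs to the van Gaans construction; once that is noted, the result is immediate from the uniqueness statements in the two approximation lemmas.
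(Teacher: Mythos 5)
Your proof is correct and takes essentially the same approach as the paper: both arguments rest on the observation that, for each fixed $t$ and each partition, the approximating sums built from $\Phi$ and from $\Psi$ coincide almost surely (hence as elements of $L^2(\Omega;H)$), so the integrals, being determined as $L^2$-limits of these sums, must agree up to a null set. The only cosmetic difference is that the paper concludes by extracting almost surely convergent subsequences of the approximations $G_t^n$, whereas you invoke the uniqueness clauses of Lemma~\ref{lemma-approx-dX-int} and Lemma~\ref{lemma-approx-dt} directly; both routes are valid.
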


\begin{proof}
Let $t \in [0,T]$ be arbitrary. By hypothesis, we have $\mathbb{P}(G_t^n(\Phi) = G_t^n(\Psi)) = 1$ for all $n \in \mathbb{N}_0$. Since $G_t^n(\Phi) \rightarrow G_t(\Phi)$ and $G_t^n(\Psi) \rightarrow G_t(\Psi)$ in $L^2(\Omega;H)$ by Lemma \ref{lemma-approx-dX-int} and Lemma \ref{lemma-approx-dt}, there is a subsequence $(n_k)$ such that $G_t^{n_k}(\Phi) \rightarrow G_t(\Phi)$ almost surely, and another subsequence $n_{k_l}$ such that $G_t^{n_{k_l}}(\Psi) \rightarrow G_t(\Psi)$ almost surely, showing that $\mathbb{P}(G_t(\Phi) = G_t(\Psi)) = 1$.
\end{proof}

\section{Stochastic differential equations}\label{app-SDE}

Now let $(S_t)_{t \geq 0}$ be a $C_0$-semigroup in the separable Hilbert space $H$, i.e. a family of bounded linear operators $S_t : H \rightarrow H$ such that
\begin{itemize}
\item $S_0 = {\rm Id}$;

\item $S_{s+t} = S_s S_t$ for all $s,t \geq 0$;

\item $\lim_{t \rightarrow 0} S_t h = h$ for all $h \in H$;
\end{itemize}
with generator $A : \mathcal{D}(A) \subset H \rightarrow H$. By $\| \cdot \|_{\mathcal{L}(H)}$ we denote the operator norm of a bounded linear operator. The semigroup $(S_t)$ is called \textit{contractive in $H$} if
\begin{align*}
\| S_t \|_{\mathcal{L}(H)} \leq 1, \quad t \geq 0
\end{align*}
and \textit{pseudo-contractive in $H$} if there is a constant $\omega \geq 0$ such that
\begin{align*}
\| S_t \|_{\mathcal{L}(H)} \leq e^{\omega t}, \quad t \geq 0.
\end{align*}
In this section, we intend to find mild solutions of stochastic differential equations of the type
\begin{align}\label{SDE-to-solve}
\left\{
\begin{array}{rcl}
dr_t & = & (A r_t + \alpha(t,r_t)) dt + \sum_{i=1}^n \sigma_i(t,r_{t-}) dX_t^i, \medskip
\\ r_0 & = & h_0
\end{array}
\right.
\end{align}
driven by real-valued L\'evy processes $X^1,\ldots,X^n$ satisfying $\mathbb{E}[(X_1^i)^2] < \infty$, $i = 1,\ldots,n$, for each initial condition $h_0 \in H$, that is, a process $(r_t)_{t \geq 0}$ satisfying
\begin{align}\label{mild-solution}
r_t = S_t h_0 + \int_0^t S_{t-s} \alpha(s,r_s) ds + \sum_{i=1}^n \int_0^t S_{t-s} \sigma_i(s,r_{s-}) dX_s^i, \quad t \in \mathbb{R}_+.
\end{align}
We also intend to establish the existence of a weak solution $(r_t)_{t \geq 0}$ to (\ref{SDE-to-solve}), i.e. $(r_t)$ satisfies, for all $\zeta \in \mathcal{D}(A^*)$,
\begin{align}\label{weak-solution}
\langle \zeta, r_t \rangle = \langle \zeta, h_0 \rangle + \int_0^t \Big( \langle A^* \zeta, r_s \rangle + \langle \zeta, \alpha(s,r_s) \rangle \Big) ds + \sum_{i=1}^n \int_0^t \langle \zeta, \sigma_i(s,r_{s-}) \rangle dX_s^i
\end{align}
for each $t \in \mathbb{R}_+$. By convention, uniqueness of a solution to (\ref{SDE-to-solve}) is meant up to a modification. Here is our main existence and uniqueness result:

\begin{theorem}\label{thm-sde-solution}
Let $(S_t)_{t \geq 0}$ be a $C_0$-semigroup in $H$, and $H_0 \subset H$ be a closed subspace such that $(S_t)$ is pseudo-contractive in $H_0$. Let $\alpha,\sigma_1,\ldots,\sigma_n : \mathbb{R}_+ \times H \rightarrow H_0$ be continuous. Assume there is constant $L \geq 0$ such that
\begin{align}\label{lip-alpha-van-Gaans}
\| \alpha(t,h_1) - \alpha(t,h_2) \| &\leq L \| h_1 - h_2 \|
\\ \label{lip-sigma-van-Gaans} \| \sigma_i(t,h_1) - \sigma_i(t,h_2) \| &\leq L \| h_1 - h_2 \|, \quad i = 1,\ldots,n
\end{align}
for all $t \in \mathbb{R}_+$ and all $h_1, h_2 \in H$. Then, for each $h_0 \in H$, there exists a unique mild and a unique weak adapted c\`adl\`ag solution $(r_t)_{t \geq 0}$ to (\ref{SDE-to-solve}) with $r_0 = h_0$ satisfying
\begin{align}\label{solution-in-S2}
\mathbb{E} \bigg[ \sup_{t \in [0,T]} \| r_t \|^2 \bigg] < \infty \quad \text{for all $T > 0$.}
\end{align}
\end{theorem}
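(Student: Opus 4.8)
The plan is to extract the mild solution from van Gaans \cite[Thm.~4.1]{Onno-Levy} and then, with the integration results of Appendix~\ref{app-integration}, to upgrade it to a c\`adl\`ag process that is at the same time a weak solution and satisfies (\ref{solution-in-S2}); uniqueness is read off at the end. First fix a finite horizon $T>0$. Since $(S_t)$ restricts to a $C_0$-semigroup on the closed subspace $H_0$ that is pseudo-contractive there, the rescaled semigroup $(e^{-\omega t}S_t|_{H_0})$ consists of contractions; the coefficients $\alpha,\sigma_1,\ldots,\sigma_n:\mathbb{R}_+\times H\to H_0$ are continuous and globally Lipschitz in the second variable by (\ref{lip-alpha-van-Gaans})--(\ref{lip-sigma-van-Gaans}); and each $X^i$ is a real L\'evy process with $\mathbb{E}[(X_1^i)^2]<\infty$. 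Hence \cite[Thm.~4.1]{Onno-Levy} applies and yields a unique $r\in C_{\rm ad}([0,T];L^2(\Omega;H))$ solving the mild equation (\ref{mild-solution}) on $[0,T]$, all integrals understood in the van Gaans sense. Running this for every $T$ and patching the solutions together by uniqueness gives an adapted, mean-square continuous $(r_t)_{t\ge 0}$ satisfying (\ref{mild-solution}). Since $t\mapsto r_t$ is continuous into $L^2(\Omega;H)$ we already have $\sup_{t\in[0,T]}\mathbb{E}[\|r_t\|^2]<\infty$ for every $T$, and $r_{s-}=r_s$ as elements of $L^2(\Omega;H)$, so that $\sigma_i(\cdot,r_{\cdot-})$ and $\sigma_i(\cdot,r_\cdot)$ induce the same van Gaans integrals (Lemma~\ref{lemma-modification}).

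The main obstacle is the c\`adl\`ag modification, the difficulty being the stochastic convolution $Z_t^i:=\text{(G-)}\!\int_0^t S_{t-s}\sigma_i(s,r_{s-})\,dX_s^i$, which fails to be a martingale because of the $t$-dependent operator $S_{t-s}$, so Theorem~\ref{thm-cadlag-mod} does not apply to it directly. I would remove the convolution by a dilation argument: the contraction semigroup $(e^{-\omega t}S_t|_{H_0})$ on $H_0$ admits, by the Sz.-Nagy dilation theorem, a Hilbert space $\widehat H\supseteq H_0$, an orthogonal projection $P:\widehat H\to H_0$, and a $C_0$-group of unitaries $(U_t)_{t\in\mathbb{R}}$ on $\widehat H$ with $S_t h=e^{\omega t}PU_t h$ for $h\in H_0$, $t\ge 0$. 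Then $s\mapsto U_{-s}\sigma_i(s,r_{s-})$ lies in $C_{\rm ad}([0,T];L^2(\Omega;\widehat H))$, and writing $X^i=M^i+b_i t$ the process $\widetilde Z_t^i:=\text{(G-)}\!\int_0^t U_{-s}\sigma_i(s,r_{s-})\,dX_s^i$ has a c\`adl\`ag modification (its martingale part lies in $\mathcal{M}^2$ by Theorem~\ref{thm-cadlag-mod}, its $b_i t$-part has a continuous modification by Theorem~\ref{thm-cont-mod}); composing with the strongly continuous family $t\mapsto e^{\omega t}PU_t$ gives a c\`adl\`ag process which — using Lemma~\ref{lemma-covariation}, the dyadic approximating sums of Appendix~\ref{app-integration} and pseudo-contractivity — one checks to be a modification of $Z^i$. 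The same device handles the drift convolution, $\int_0^t S_{t-s}\alpha(s,r_s)\,ds=e^{\omega t}PU_t\int_0^t e^{-\omega s}U_{-s}\alpha(s,r_s)\,ds$, which has a continuous modification by Theorem~\ref{thm-cont-mod} composed with $t\mapsto e^{\omega t}PU_t$. Together with the continuity of $t\mapsto S_t h_0$ this yields a c\`adl\`ag modification of $r$, kept denoted $r$. Verifying that the dilated representations are genuinely modifications of the original convolutions is the technical heart of the proof.

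With a c\`adl\`ag version in hand, (\ref{solution-in-S2}) follows from routine estimates: $\|S_t h_0\|$ is bounded on $[0,T]$; the drift term satisfies $\|\int_0^t S_{t-s}\alpha(s,r_s)\,ds\|\le e^{\omega T}\int_0^T\|\alpha(s,r_s)\|\,ds$, whose square has finite expectation by the Cauchy--Schwarz inequality, the linear growth implied by (\ref{lip-alpha-van-Gaans}) and $\sup_{t\in[0,T]}\mathbb{E}[\|r_t\|^2]<\infty$; and for the diffusion term $\sup_{t\in[0,T]}\|Z_t^i\|\le\sup_{t\in[0,T]}\|\widetilde Z_t^i\|$, so Doob's inequality (\ref{doob-inequality}) applied to the martingale part of $\widetilde Z^i$ and the Cauchy--Schwarz inequality for its $b_i t$-part bound $\mathbb{E}[\sup_{t\in[0,T]}\|\widetilde Z_t^i\|^2]$ by a constant times $\int_0^T\mathbb{E}[\|\sigma_i(s,r_{s-})\|^2]\,ds$, which is finite by linear growth and $\sup_{t\in[0,T]}\mathbb{E}[\|r_t\|^2]<\infty$.

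Finally I would show that $r$ is a weak solution and that the weak solution is unique. For the first part, fix $\zeta\in\mathcal{D}(A^*)$, apply $\langle\zeta,\cdot\rangle$ to (\ref{mild-solution}), and use the stochastic Fubini theorem together with $\frac{d}{du}S_u^*\zeta=A^*S_u^*\zeta$ to rewrite $\int_0^t\langle S_{t-s}^*\zeta,\alpha(s,r_s)\rangle\,ds$ and $\sum_i\text{(G-)}\!\int_0^t\langle S_{t-s}^*\zeta,\sigma_i(s,r_{s-})\rangle\,dX_s^i$ into the form on the right-hand side of (\ref{weak-solution}); Theorem~\ref{thm-vanGaans-versus-usual} guarantees that the van Gaans and It\^o integrals of the c\`adl\`ag integrands agree, so (\ref{weak-solution}) holds with the It\^o integrals used there. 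Conversely, testing a weak solution satisfying (\ref{solution-in-S2}) against $S_{t-s}^*\zeta$ and integrating by parts recovers the mild form, so every such weak solution is the mild solution; its uniqueness up to modification — already provided by \cite[Thm.~4.1]{Onno-Levy} — together with the fact that two c\`adl\`ag modifications are indistinguishable then gives uniqueness of both the mild and the weak c\`adl\`ag solution, and Lemma~\ref{lemma-modification} ensures that replacing integrands by modifications is harmless throughout.
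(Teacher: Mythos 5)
Your proposal is correct and follows essentially the same route as the paper: existence from van Gaans's theorem, the Sz.-Nagy unitary dilation to reduce the stochastic convolution to a genuine (G-)integral admitting a c\`adl\`ag modification via Theorems~\ref{thm-cadlag-mod} and \ref{thm-cont-mod}, consistency with the It\^o integral via Theorem~\ref{thm-vanGaans-versus-usual}, the stochastic Fubini theorem for the weak-solution property, and the duality argument with $g(s)=S_{t-s}^*\zeta$ for uniqueness. The one step you flag as the ``technical heart'' --- that the dilated representation is a modification of the original convolution --- is handled in the paper by citing van Gaans's substitution result \cite[Thm.~3.3.3]{Onno-Levy}, which is exactly the verification via approximating sums that you sketch.
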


\begin{proof}
Let $h_0 \in H$ be arbitrary.
We decompose each L\'evy process $X_t^i = M_t^i + b_i t$ into its martingale and finite variation part, where we notice that $b_i = \mathbb{E}[X_1^i]$. By \cite[Thm. 4.1]{Onno-Levy} there exists a unique adapted continuous function $r : \mathbb{R}_+ \rightarrow L^2(\Omega;H)$ such that for all $t \geq 0$
\begin{align*}
r_t = S_t h_0 + \text{(G-)} \int_0^t S_{t-s} \tilde{\alpha}(s,r_s) ds + \sum_{i=1}^n \text{(G-)} \int_0^t S_{t-s} \sigma_i(s,r_s) dM_s^i,
\end{align*}
where $\tilde{\alpha}(t,r) = \alpha(t,r) + \sum_{i=1}^n b_i \sigma_i(t,r)$. By assumption, $(S_t)$ is pseudo-contractive in $H_0$. Hence there exists a constant $\omega \geq 0$ such that the $C_0$-semigroup $(T_t)_{t \geq 0}$ defined as
\begin{align}\label{def-new-semigroup} 
T_t := e^{-\omega t} S_t, \quad t \in \mathbb{R}_+ 
\end{align} 
is contractive in $H_0$. 
By the Szek\"ofalvi-Nagy's theorem on unitary dilations (see e.g. \cite[Thm. I.8.1]{Nagy}, or \cite[Sec. 7.2]{Davies}), there exists another separable Hilbert space $\mathcal{H}_0$ and a strongly continuous unitary group $(U_t)_{t \in \mathbb{R}}$ in $\mathcal{H}_0$ such that the diagram
\[ \begin{CD}
\mathcal{H}_0 @>U_t>> \mathcal{H}_0\\
@AA\ell A @VV\pi V\\
H_0 @>T_t>> H_0
\end{CD} \]
commutes for every $t \in \mathbb{R}_+$, where $\ell : H_0 \rightarrow \mathcal{H}_0$ is an isometric embedding (hence the adjoint operator $\pi := \ell^*$ is the orthogonal projection from $\mathcal{H}_0$ into $H_0$), that is 
\begin{align}\label{diagram-commutes}
\pi U_t \ell h = T_t h \quad \text{for all $t \in \mathbb{R}_+$ and $h \in H_0$.}
\end{align} 
Using (\ref{def-new-semigroup}), (\ref{diagram-commutes}) and \cite[Thm. 3.3.3]{Onno-Levy} we obtain for all $i = 1,\ldots,n$ and $t \geq 0$
\begin{align*}
&\text{(G-)}\int_0^t S_{t-s} \sigma_i(s,r_s) dM_s^i = \text{(G-)}\int_0^t e^{\omega(t-s)} T_{t-s} \sigma_i(s,r_s) dM_s^i 
\\ &= e^{\omega t} \text{(G-)}\int_0^t e^{-\omega s} \pi U_{t-s} \ell \sigma_i(s,r_s) dM_s^i
= e^{\omega t} \pi U_t \text{(G-)}\int_0^t e^{-\omega s} U_{-s} \ell \sigma_i(s,r_s) dM_s^i.
\end{align*}
The integral process 
\begin{align*}
\text{(G-)}\int_0^t e^{-\omega s} U_{-s} \ell \sigma_i(s,r_s) dM_s^i
\end{align*}
has a c\`adl\`ag modification by Theorem \ref{thm-cadlag-mod}. Thus the process
\begin{align*}
 \text{(G-)}\int_0^t S_{t-s} \sigma_i(s,r_s) dM_s^i
\end{align*}
has a c\`adl\`ag modification, because $(t,h) \mapsto U_t h$ is uniformly continuous on compact subsets, see e.g. \cite[Lemma I.5.2]{Engel-Nagel}.

A similar argumentation, using Theorem \ref{thm-cont-mod}, shows that
\begin{align*}
 \text{(G-)}\int_0^t S_{t-s} \tilde{\alpha}(s,r_s) ds
\end{align*}
has a continuous modification.

Therefore, $(r_t)$ has a c\`adl\`ag modification, and, by Theorem \ref{thm-vanGaans-versus-usual}, it satisfies
\begin{align*}
r_t = S_t h_0 + \int_0^t S_{t-s} \tilde{\alpha}(s,r_s) ds + \sum_{i=1}^n \int_0^t S_{t-s} \sigma_i(s,r_{s-}) dM_s^i, \quad t \geq 0.
\end{align*}
Consequently, $(r_t)_{t \geq 0}$ is a mild solution to (\ref{SDE-to-solve}), i.e. it satisfies (\ref{mild-solution}). Introducing the processes
\begin{align}
\label{def-Phi-weak} \Phi_t &:= \int_0^t S_{t-s} \tilde{\alpha}(s,r_s) ds,
\\ \label{def-Psi-weak} \Psi_t^i &:= \int_0^t S_{t-s} \sigma_i(s,r_{s-}) dM_s^i, \quad i = 1,\ldots,n
\end{align}
we have by our findings above
\begin{align}\label{short-solution}
r_t = S_t h_0 + \Phi_t + \sum_{i=1}^n \Psi_t^i, \quad t \geq 0.
\end{align}
We fix an arbitrary $T > 0$. By (\ref{def-new-semigroup}), (\ref{diagram-commutes}) and noting that $\| \pi \|_{\mathcal{L}(\mathcal{H}_0;H_0)} = 1$ and $\| U_t \|_{\mathcal{L}(\mathcal{H}_0)} \leq 1$ for all $t \in [0,T]$, we obtain for each $i = 1,\ldots,n$
\begin{align}
\notag &\mathbb{E} \bigg[ \sup_{t \in [0,T]} \| \Psi_t^i \|^2 \bigg] = \mathbb{E} \left[ \sup_{t \in [0,T]} \bigg\| \int_0^t S_{t-s} \sigma_i(s,r_{s-}) dM_s^i \bigg\|^2 \right]
\\ \label{Psi-i-in-S2} &= \mathbb{E} \left[ \sup_{t \in [0,T]} \bigg\| e^{\omega t} \pi U_t \int_0^t e^{-\omega s} U_{-s} \ell \sigma_i(s,r_{s-}) dM_s^i \bigg\|^2 \right]
\\ \notag &\leq e^{2 \omega T} \mathbb{E} \left[ \sup_{t \in [0,T]} \bigg\| \int_0^t e^{- \omega s} U_{-s} \ell \sigma_i(s,r_{s-}) dM_s^i \bigg\|^2 \right] < \infty.
\end{align}
The latter expression is finite by Theorem \ref{thm-cadlag-mod} and Theorem \ref{thm-vanGaans-versus-usual}.

We obtain by (\ref{def-new-semigroup}), (\ref{diagram-commutes}), H\"older's inequality and Fubini's theorem (note that $\| \tilde{\alpha}(t,r_t) \|^2$ is c\`adl\`ag and therefore $\mathcal{B}[0,T] \otimes \mathcal{F}$-measurable)
\begin{align*}
&\mathbb{E} \bigg[ \sup_{t \in [0,T]} \| \Phi_t \|^2 \bigg] = \mathbb{E} \left[ \sup_{t \in [0,T]} \bigg\| \int_0^t S_{t-s} \tilde{\alpha}(s,r_s) ds \bigg\|^2 \right] 
\\ &= \mathbb{E} \left[ \sup_{t \in [0,T]} \bigg\| e^{\omega t} \pi U_t \int_0^t e^{-\omega s} U_{-s} \ell \tilde{\alpha}(s,r_s) ds \bigg\|^2 \right] \leq T e^{2\omega T} \mathbb{E} \bigg[ \int_0^T \| \tilde{\alpha}(t,r_t) \|^2 dt \bigg]
\\ &\leq T e^{2\omega T} \int_0^T \mathbb{E} \left[ \| \tilde{\alpha}(t,r_t) \|^2 \right] dt \leq T^2 e^{2\omega T} \sup_{t \in [0,T]} \mathbb{E} \left[ \| \tilde{\alpha}(t,r_t) \|^2 \right] < \infty.
\end{align*}
The latter supremum is finite, because $t \mapsto \mathbb{E} \left[ \| \tilde{\alpha}(t,r_t) \|^2 \right]$ is continuous on the compact interval $[0,T]$, as $t \mapsto \tilde{\alpha}(t,r_t)$ is continuous by the continuity of $r : [0,T] \rightarrow L^2(\Omega;H)$ and (\ref{lip-alpha-van-Gaans}), (\ref{lip-sigma-van-Gaans}).
Since the solution process $(r_t)$ is given by (\ref{short-solution}), we obtain, together with (\ref{Psi-i-in-S2}), that (\ref{solution-in-S2}) is valid.

We proceed by showing that $(r_t)_{t \geq 0}$ is also a weak solution to (\ref{SDE-to-solve}). Let $\zeta \in \mathcal{D}(A^*)$ be arbitrary.

We define for arbitrary $T \in \mathbb{R}_+$ the $\mathcal{B}[0,T] \otimes \mathcal{P}$-measurable functions $H_i : [0,T] \times [0,T] \times \Omega \rightarrow \mathbb{R}$ as
\begin{align*}
H_i(a,t) := \left\{
\begin{array}{cl}
\langle A^* \zeta, S_{a-t} \sigma_i(t,r_{t-}) \rangle, & a \geq t \medskip
\\ 0, & a < t.
\end{array}
\right.
\end{align*}
We obtain by the Cauchy-Schwarz inequality and the pseudo-contractivity of $(S_t)$ in $H_0$ that
\begin{align}\label{est-with-CS}
|H_i(a,t)| \leq e^{\omega T} \| A^* \zeta \| \cdot \| \sigma_i(t,r_{t-}) \|, \quad a,t \in [0,T].
\end{align}
The processes $Y_t^i := ( \int_0^T H_i^2(a,t)da )^{1/2}$ are left-continuous by (\ref{est-with-CS}) and Lebesgue's dominated convergence theorem, and therefore predictable. The L\'evy martingales $M^i$, considered on $[0,T]$, belong to $\mathcal{H}^2$ in the sense of the Definition in Protter \cite[p. 156]{Protter}, because, by using \cite[Thm. I.4.52]{JS},
\begin{align*}
\| M^i \|_{\mathcal{H}^2} &\leq \| M_i^c \|_{\mathcal{H}^2} + \| M_i^d \|_{\mathcal{H}^2} = \mathbb{E} \big[ [M_i^c,M_i^c]_T \big]^{1/2} + \mathbb{E} \big[ [M_i^d,M_i^d]_T \big]^{1/2}
\\ &= (c_i T)^{1/2} + \mathbb{E} \bigg[ \sum_{s \leq T} (\Delta M_s^i)^2 \bigg]^{1/2} = (c_i T)^{1/2} + \bigg( T \int_{\mathbb{R}} x^2 F_i(dx) \bigg)^{1/2} < \infty,
\end{align*}
where we have decomposed $M^i = M_i^c + M_i^d$ into its continuous and purely discontinuous martingale part, and where $c_i$ denotes the Gaussian part and $F_i$ the L\'evy measure of $M^i$.

There is, by the assumed continuity of $\sigma_1,\ldots,\sigma_n$, a constant $C_T > 0$ such that 
$\| \sigma_i(t,0) \| \leq C_T$ for all $t \in [0,T]$ and $i = 1,\ldots,n$. Therefore, we get for all $t \in [0,T]$, all $h \in H$ and all $i = 1,\ldots,n$ by (\ref{lip-sigma-van-Gaans})
\begin{align}
\label{linear-growth}
\| \sigma_i(t,h) \| &\leq \| \sigma_i(t,0) \| + \| \sigma_i(t,h) - \sigma_i(t,0) \| \leq (L \vee C_T) (1 + \| h \|).
\end{align}
By inequalities (\ref{est-with-CS}) and (\ref{linear-growth}) we obtain
\begin{align*}
&\mathbb{E} \bigg[ \int_0^T (Y_t^i)^2 d[M^i,M^i]_t \bigg] = \bigg( c_i + \int_{\mathbb{R}} x^2 F_i(dx) \bigg) \mathbb{E} \bigg[ \int_0^T \int_0^T H_i^2(a,t) da dt \bigg]
\\ &\leq T^2 (L \vee C_T)^2 e^{2 \omega T} \| A^* \zeta \|^2 \bigg( c_i + \int_{\mathbb{R}} x^2 F_i(dx) \bigg) \mathbb{E} \bigg[ \sup_{t \in [0,T]} (1 + \| r_t \|)^2 \bigg].
\end{align*}
Thus, the processes $Y^i$ are $(\mathcal{H}^2,M^i)$ integrable in the sense of the Definition in Protter \cite[p. 165]{Protter}, because from H\"older's inequality and (\ref{solution-in-S2}) we infer
\begin{align*}
\mathbb{E} \bigg[ \sup_{t \in [0,T]} (1 + \| r_t \|)^2 \bigg] \leq 1 + 2 \mathbb{E} \bigg[ \sup_{t \in [0,T]} \| r_t \|^2 \bigg]^{1/2} + \mathbb{E} \bigg[ \sup_{t \in [0,T]} \| r_t \|^2 \bigg] < \infty.
\end{align*}
Consequently, we have $Y^i \in L(M^i)$, that is each $Y^i$ is $M^i$ integrable in the sense of Protter \cite[p. 165]{Protter}, and therefore we may apply the Fubini Theorem, see Thm. IV.65 in \cite{Protter}, for the integrands $H_i$. Using the Fubini Theorem and \cite[Lemma VII.4.5(a)]{Werner}, we obtain for each $i = 1,\ldots,n$
\begin{align*}
&\int_0^t \langle A^* \zeta, \Psi_s^i \rangle ds = \int_0^t \int_0^s \langle A^* \zeta, S_{s-u} \sigma_i(u,r_{u-}) \rangle dM_u^i ds
\\ &= \int_0^t \Big\langle A^* \zeta, \int_u^t S_{s-u} \sigma_i(u,r_{u-}) ds \Big\rangle dM_u^i = \int_0^t \Big\langle \zeta, A \int_0^{t-u} S_s \sigma_i(u,r_{u-})ds \Big\rangle dM_u^i 
\\ &= \int_0^t \langle \zeta, S_{t-u} \sigma_i(u,r_{u-}) - \sigma_i(u,r_{u-}) \rangle dM_u^i = \langle \zeta, \Psi_t^i \rangle - \int_0^t \langle \zeta, \sigma_i(s,r_{s-}) \rangle dM_s^i,
\end{align*}
where the $\Psi^i$ are defined in (\ref{def-Psi-weak}). An analogous calculation, using the standard Fubini theorem, gives us
\begin{align*}
&\int_0^t \langle A^* \zeta, \Phi_s \rangle ds = \langle \zeta, \Phi_t \rangle - \int_0^t \langle \zeta, \tilde{\alpha}(s,r_s) \rangle ds,
\end{align*}
where $\Phi$ is defined in (\ref{def-Phi-weak}), and finally, we get, by taking into account \cite[Lemma VII.4.5(a)]{Werner} again,
\begin{align*}
\int_0^t \langle A^* \zeta, S_s h_0 \rangle ds = \Big\langle \zeta, A \int_0^t S_s h_0 ds \Big\rangle = \langle \zeta, S_t h_0 \rangle - \langle \zeta, h_0 \rangle.
\end{align*}
Together with (\ref{short-solution}), the latter three identities show that
\begin{align*}
\langle \zeta, r_t \rangle = \langle \zeta, h_0 \rangle + \int_0^t \Big( \langle A^* \zeta, r_s \rangle + \langle \zeta, \tilde{\alpha}(s,r_s) \rangle \Big) ds + \sum_{i=1}^n \int_0^t \langle \zeta, \sigma_i(s,r_{s-}) \rangle dM_s^i
\end{align*}
for all $t \in [0,T]$. Since $T \in \mathbb{R}_+$ was arbitrary, $(r_t)_{t \geq 0}$ is a weak solution to (\ref{SDE-to-solve}), as it fulfills (\ref{weak-solution}).

It remains to show that this weak solution is unique. Let $(r_t)_{t \geq 0}$ be any adapted c\`adl\`ag weak solution to (\ref{SDE-to-solve}), i.e. $(r_t)$ satisfies (\ref{weak-solution}) for all $\zeta \in \mathcal{D}(A^*)$. Let $\zeta \in \mathcal{D}(A^*)$ and $g \in C^1([0,T];\mathbb{R})$ for an arbitrary $T \in \mathbb{R}_+$. By the definition of the quadratic co-variation $[X,Y]$, see e.g. \cite[Def. I.4.45]{JS}, we obtain
\begin{align*}
\langle g(t) \zeta, r_t \rangle = \langle g(0) \zeta, h_0 \rangle + \int_0^t g(s) d \langle \zeta, r_s \rangle + \int_0^t \langle \zeta, r_s \rangle dg(s) + [g,\langle \zeta, r \rangle]_t.
\end{align*}
Since $g \in C^1([0,T];\mathbb{R})$, we have $[g,\langle \zeta, r \rangle] = 0$
according to \cite[Prop. 4.49.d]{JS}. Therefore and because of (\ref{weak-solution}), we get
\begin{align*}
\langle g(t) \zeta, r_t \rangle &= \langle g(0) \zeta, h_0 \rangle + \int_0^t \Big( \langle g'(s) \zeta + A^* g(s) \zeta, r_s \rangle + \langle g(s) \zeta, \alpha(s,r_s) \rangle \Big) ds 
\\ & \quad + \sum_{i=1}^n \int_0^t \langle g(s) \zeta, \sigma_i(s,r_{s-}) \rangle dX_s^i.
\end{align*}
Since the set $\{ t \mapsto g(t) \zeta \, | \, g \in C^1([0,T];\mathbb{R}) \}$ is dense in $C^1([0,T];\mathcal{D}(A^*))$, we deduce
\begin{align*}
\langle g(t), r_t \rangle &= \langle g(0), h_0 \rangle + \int_0^t \Big( \langle g'(s) + A^* g(s), r_s \rangle + \langle g(s), \alpha(s,r_s) \rangle \Big) ds 
\\ & \quad + \sum_{i=1}^n \int_0^t \langle g(s), \sigma_i(s,r_{s-}) \rangle dX_s^i
\end{align*}
for all $g \in C^1([0,T];\mathcal{D}(A^*))$, where we recall that $T \in \mathbb{R}_+$ was arbitrary. Defining $g \in C^1([0,t];\mathcal{D}(A^*))$ for an arbitrary 
$t \in \mathbb{R}_+$ and an arbitrary $\zeta \in \mathcal{D}(A^*)$ as
$g(s) := S_{t-s}^* \zeta$, $s \in [0,t]$, we obtain $g'(s) = -A^* g(s)$, and hence
\begin{align*}
\langle \zeta, r_t \rangle = \langle \zeta, S_t h_0 \rangle + \int_0^t \langle \zeta, S_{t-s} \alpha(s,r_s) \rangle ds + \sum_{i=1}^n \int_0^t \langle \zeta, S_{t-s} \sigma_i(s,r_{s-}) \rangle dX_s^i.
\end{align*}
Since $\mathcal{D}(A^*)$ is dense in $H$, the process $(r_t)_{t \geq 0}$ is also a mild solution to (\ref{SDE-to-solve}), i.e. it satisfies (\ref{mild-solution}), proving the desired uniqueness.
\end{proof}

In the special situation where $A \in \mathcal{L}(H)$, i.e. $A$ is a bounded linear operator, we can now easily establish the existence of a strong solution $(r_t)_{t \geq 0}$ to (\ref{SDE-to-solve}), that is we have
\begin{align}\label{strong-solution}
r_t = h_0 + \int_0^t \Big( A r_s + \alpha(s,r_s) \Big) ds + \sum_{i=1}^n \int_0^t \sigma_i(s,r_{s-}) dX_s^i, \quad t \geq 0.
\end{align}

\begin{corollary}\label{cor-solution-of-SDE}
Let $A \in \mathcal{L}(H)$ be a bounded linear operator and let $\alpha,\sigma_1,\ldots,\sigma_n : \mathbb{R}_+ \times H \rightarrow H$ be continuous. Assume there is constant $L \geq 0$ such that (\ref{lip-alpha-van-Gaans}) and (\ref{lip-sigma-van-Gaans}) are satisfied for all $t \in \mathbb{R}_+$ and $h_1, h_2 \in H$. Then, for each $h_0 \in H$, there exists a unique strong adapted c\`adl\`ag solution $(r_t)_{t \geq 0}$ to (\ref{SDE-to-solve}) with $r_0 = h_0$ satisfying (\ref{solution-in-S2}).
\end{corollary}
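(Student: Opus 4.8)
The plan is to realize this statement as a special case of Theorem~\ref{thm-sde-solution} and then to upgrade the mild (equivalently weak) solution it produces to a strong one, using that $A$ is bounded.

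First I would observe that a bounded operator $A \in \mathcal{L}(H)$ generates the uniformly continuous $C_0$-semigroup $S_t = e^{tA}$, for which $\| S_t \|_{\mathcal{L}(H)} \le e^{t \| A \|_{\mathcal{L}(H)}}$, $t \ge 0$. Hence $(S_t)$ is pseudo-contractive in $H$ itself, so we may apply Theorem~\ref{thm-sde-solution} with $H_0 = H$: the continuity of $\alpha, \sigma_1, \ldots, \sigma_n$ together with the Lipschitz bounds (\ref{lip-alpha-van-Gaans})--(\ref{lip-sigma-van-Gaans}) are precisely the hypotheses required there. This yields, for every $h_0 \in H$, a unique mild and a unique weak adapted c\`adl\`ag solution $(r_t)_{t \ge 0}$ to (\ref{SDE-to-solve}) satisfying (\ref{solution-in-S2}).

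Next I would show that this solution is in fact strong. Since $A$ is bounded, $A^*$ is bounded, $\mathcal{D}(A^*) = H$, and $\langle A^* \zeta, h \rangle = \langle \zeta, A h \rangle$ for all $\zeta, h \in H$. Thus the weak solution identity (\ref{weak-solution}) is valid for every $\zeta \in H$ and reads
\[
\langle \zeta, r_t \rangle = \langle \zeta, h_0 \rangle + \int_0^t \big( \langle \zeta, A r_s \rangle + \langle \zeta, \alpha(s,r_s) \rangle \big)\, ds + \sum_{i=1}^n \int_0^t \langle \zeta, \sigma_i(s,r_{s-}) \rangle\, dX_s^i .
\]
The path $s \mapsto A r_s + \alpha(s,r_s)$ is c\`adl\`ag and, by the linear growth that follows from (\ref{lip-alpha-van-Gaans})--(\ref{lip-sigma-van-Gaans}) together with (\ref{solution-in-S2}), Bochner-integrable on each $[0,t]$; the continuous linear functional $\langle \zeta, \cdot \rangle$ therefore commutes with the Bochner integral, and likewise with the $H$-valued stochastic integrals $\int_0^t \sigma_i(s,r_{s-})\, dX_s^i$, which is a standard property of Hilbert-space stochastic integration. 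Consequently the quantity
\[
r_t - h_0 - \int_0^t \big( A r_s + \alpha(s,r_s) \big)\, ds - \sum_{i=1}^n \int_0^t \sigma_i(s,r_{s-})\, dX_s^i
\]
is orthogonal to every $\zeta \in H$ for each $t \ge 0$, hence vanishes; that is, $(r_t)$ satisfies (\ref{strong-solution}) and is a strong solution.

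For uniqueness, conversely any adapted c\`adl\`ag strong solution of (\ref{SDE-to-solve}) becomes, upon pairing (\ref{strong-solution}) with an arbitrary $\zeta \in \mathcal{D}(A^*) = H$ and rewriting $\langle \zeta, A r_s \rangle = \langle A^* \zeta, r_s \rangle$, a weak solution, so it must agree up to a modification with the unique weak solution obtained above. I expect no genuine obstacle in this argument, since the analytic work has already been done in Theorem~\ref{thm-sde-solution}; the only step demanding a line of care is the passage from the weak to the strong formulation, i.e.\ interchanging $\langle \zeta, \cdot \rangle$ with the Bochner and stochastic integrals and invoking $\mathcal{D}(A^*) = H$.
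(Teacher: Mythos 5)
Your proposal is correct and follows essentially the same route as the paper: identify $S_t=e^{tA}$ as pseudo-contractive on $H_0=H$, invoke Theorem~\ref{thm-sde-solution}, and then use the boundedness of $A$ (so that $\mathcal{D}(A^*)=H$) to upgrade the weak solution to a strong one. The paper compresses the weak-to-strong passage into the phrase ``by the boundedness of $A$,'' whereas you spell it out; your version is fine, modulo the routine remark that the exceptional null sets depend on $\zeta$ and one passes to a countable dense subset of $H$ before concluding that the $H$-valued difference vanishes almost surely.
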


\begin{proof}
The operator $A$ is generated by the semigroup $S_t = e^{tA}$, which is pseudo-contractive, because
\begin{align*}
\| S_t \|_{\mathcal{L}(H)} \leq e^{t \| A \|_{\mathcal{L}(H)}}, \quad t \geq 0.
\end{align*}
By Theorem \ref{thm-sde-solution}, for each $h_0 \in H$, there exists a unique weak adapted c\`adl\`ag solution $(r_t)_{t \geq 0}$ to (\ref{SDE-to-solve}) with $r_0 = h_0$ satisfying (\ref{solution-in-S2}), which also fulfills (\ref{strong-solution}) by the boundedness of $A$, showing that $(r_t)$ is a strong solution to (\ref{SDE-to-solve}). 
\end{proof}

We close this section with a couple of remarks. Actually, \cite[Thm. 4.1]{Onno-Levy} is not explicitly proven in \cite{Onno-Levy}. We quote \cite[p. 19]{Onno-Levy}: "For a proof of Theorem 4.1 one can follow almost literally the proofs of Theorem 4.1 and Theorem 4.2 in \cite{Onno}, $\ldots$". The mentioned result, \cite[Thm. 4.1]{Onno}, is
 an analogous result for stochastic equations driven by an infinite dimensional Brownian motion.

Note that the existence result of van Gaans \cite[Thm. 4.1]{Onno-Levy} demands no further assumptions on the $C_0$-semigroup. In contrast, we require the pseudo-contractivity of $(S_t)$ in a closed subspace in order to prove that the solution possesses a c\`adl\`ag modification.

The idea to use the Szek\"ofalvi-Nagy's theorem on unitary dilations in order to overcome the difficulties arising from stochastic convolutions, is due to Hausenblas and Seidler, see \cite{Seidler} and \cite{Seidler2}.

Without using the Szek\"ofalvi-Nagy's theorem, Baudoin and Teichmann \cite{Teichmann-groups} consider stochastic equations on separable Hilbert spaces equipped with a strongly continuous group, in Sec. 3 of their article also with focus on interest rate theory.

For every pseudo-contractive semigroup $(S_t)$, stochastic convolutions $\int_0^t S_{t-s} \Phi_s dM_s$ with respect to a square-integrable, c\`adl\`ag martingale $M$ have a c\`adl\`ag modification, which is due to Kotelenez \cite{Kotelenez}. We use the Szek\"ofalvi-Nagy's theorem on unitary dilations in order to get a c\`adl\`ag modification, because we deal with the stochastic integral $\text{{\rm (G-)}}\int_0^t S_{t-s} \Phi_s dM_s$ defined in van Gaans \cite[Sec. 3]{Onno-Levy}.

Recently, there has been growing interest in stochastic differential equations of the type (\ref{SDE-to-solve}) with jump noise terms. As a result, a few related papers \cite{Ruediger-mild,Knoche1,Knoche2, Hausenblas,Hausenblas2,Thilo,P-Z-paper} and the forthcoming textbook \cite{P-Z-book} have been written, but mostly with other fields of applications than finance.

During the revision of this paper we became aware of the recent preprint \cite{P-Z-paper}, where the authors derived independently similar results. But they work on different function spaces where the forward curve is not necessarily continuous and thus the short rate is not well defined. Moreover, they only consider volatilities of composition type, that is $\sigma_i(t,r)(x) = g_i(t,x,r(x))$ with deterministic functions $g_i : \mathbb{R}_+ \times \mathbb{R}_+ \times \mathbb{R} \rightarrow \mathbb{R}$.

\end{appendix}

\end{document}